\newcommand{\fq}{{\mathbb F}_q}
\newcommand{\bc}{{\bf c}}
\newcommand{\bv}{{\bf v}}
\newcommand{\bx}{{\bf x}}
\newcommand{\rk}{\rm{rank}}
\newcommand{\supp}{\rm{supp}}
\begin{document}
\title{Entanglement-assisted Quantum Codes from Algebraic Geometry Codes}

\titlerunning{QUENTA codes from AG Codes}
%
\author{Francisco Revson F. Pereira\inst{1,2} \and
Ruud Pellikaan\inst{1} \and Giuliano Gadioli La Guardia\inst{3} \and Francisco Marcos de Assis\inst{2}}
\authorrunning{F. R. Fernandes Pereira, \emph{et al.}}

\institute{Department of Mathematics and Computing Science, Eindhoven University of Technology, Eindhoven, The Netherlands. \and
	   Department of Electrical Engineering,
	   Federal University of Campina Grande, Campina Grande, Para\'iba, Brazil. \and
Department of Mathematics and Statistics, State University of Ponta Grossa, Ponta Grossa, Paran\'a, Brazil.\\
\email{revson.ee@gmail.com}}
%
\maketitle              
\begin{abstract}

Quantum error correcting codes play the role of suppressing noise and decoherence
in quantum systems by introducing redundancy. Some strategies can be used to improve
the parameters of these codes. For example, entanglement can provide a way for quantum
error correcting codes to achieve higher rates than the one obtained via the traditional
stabilizer formalism. Such codes are called entanglement-assisted quantum (QUENTA) codes.
In this paper, we use algebraic geometry codes to construct
several families of QUENTA codes via the Euclidean and the Hermitian construction.
Two of the families created have maximal entanglement and have quantum Singleton defect equal to
zero or one.
Comparing the other families with the codes with the respective quantum Gilbert-Varshamov bound,
we show that our codes have a rate that surpasses that bound.
At the end, asymptotically good towers of linear complementary dual codes are used
to obtain asymptotically good families of maximal
entanglement QUENTA codes. Furthermore, a simple comparison with the quantum Gilbert-Varshamov bound
demonstrates that using our construction it is possible to create an asymptotically family
of QUENTA codes that exceeds this bound.\footnote{This paper was presented in part at the 2019 IEEE International Symposium on Information Theory.}

\keywords{Quantum Codes \and Algebraic Geometry Codes \and Maximal Distance Separable
\and Maximal Entanglement \and Asymptotically Good.}
\end{abstract}
%
%

\section{Introduction}
\label{sec:Introduction}
\noindent

It is generally accepted that the prospect of practical large-scale quantum computers and the use of
quantum communication are only possible with the implementation of quantum error correcting codes.
Suppressing noise and decoherence can be done via Quantum error correcting codes.
The capability of correcting errors of such codes can be improved if
it is possible to have pre-shared entanglement states. They are known as Entanglement-Assisted
Quantum (QUENTA) codes, also denoted by EAQECC's in the literature, but we prefer the acronym QUENTA since 
that is more easy to pronounce. Additionally, this class of codes achieves the hashing bound \cite{Wilde:2014,Lai:2014}
and violates the quantum Hamming bound \cite{Li:2014}. The first QUENTA codes were proposed by Bowen \cite{Bowen:2002}
followed by the work from Fattal, \emph{et al.} \cite{Fattal:2004}. The stabilizer formalism of QUENTA codes
was created by Brun \emph{et al.} \cite{Brun:2006}, where they showed that QUENTA codes paradigm does not
require the dual-containing constraint as the standard quantum error-correcting code does \cite{LidarBrun:Book}.
Wilde and Brun \cite{Wilde:2008} proposed two methods to create QUENTA codes from classical codes, which
are named in this paper the Euclidean construction method and the Hermitian construction method. These
methods were recently generalized by Galindo, \emph{et al.} \cite{Galindo:2019}

After these works of Brun \emph{et al.}, many articles have focused on the construction of QUENTA
codes based on classical linear codes \cite{Wilde:2008,Chen:2017,Lu:2017,Guenda:2018,Liu:2019}.
However, the analysis of $q$-ary QUENTA codes was taken into account only recently
\cite{Fan:2016,Chen:2017,Lu:2018,Chen:2018,Guenda:2018,Liu_ArXiv:2018,Guenda_ArXiv:2018,Liu:2019}.
The majority of them utilized constacyclic codes \cite{Fan:2016,Chen:2018,Lu:2018} or
negacyclic codes \cite{Chen:2017,Lu:2018} as the classical counterpart. Since the length of the
classical codes is normally proportional to the square of the size of the field, most of the quantum codes from the
previous works have a length that is proportional to the square of the size of the finite field.
Hence, there is no result in the literature with QUENTA codes having length proportional to a greater power of
the cardinality of the finite field. In addition, it has not been shown previously that there exists a family of
asymptotically good maximal entanglement QUENTA codes attaining quantum Gilbert-Varshamov bound \cite{Galindo:2019}.
Such a family can be used to achieve the hashing bound. A
possible approach to solve both questions is using algebraic geometry (AG) codes as the classical
counterpart to construct QUENTA codes.

The AG codes were invented by Goppa~\cite{Goppa:1981}. An important property
of these codes is that its parameters can be calculated via the degree of a divisor, which allows a
direct description of the code. The first result of this paper comes from these properties. We show two methods to create new AG codes from
old ones via intersection and union of divisors. As will be shown, the former ``new codes from old''
construction is crucial when two AG codes are used to derive QUENTA codes. To derive the QUENTA codes in this
paper, it is necessary to define some mathematical tools and the relation between them and the parameters of
QUENTA codes.

First, we introduce the idea of intersection and union of divisor and how this concepts
can be used to construct new AG codes from old ones. In addition,
it is shown that the amount of entanglement in the Euclidean construction
method of QUENTA codes can be described via the intersection of two classical codes used.
The practicality of such description is presented by applying our method to
AG codes derived from three curves: the projective line (rational function
field), the Hermitian Curve, and the elliptic curve. The QUENTA codes derived from the first (third) curve are shown to be
maximal distance separable (MDS) codes (almost MDS), i.e., the minimal distance of these codes achieve the quantum Singleton
bound (differs from the quantum Singleton bound by at most one unit), and maximal entanglement. These codes can be employed to achieve entanglement-assisted
quantum capacity of a depolarizing channel \cite{Bowen:2002,Devetak:2008,Lu:2015}.
For the Hermitian curve, a comparative analysis with the codes in the literature shows that our codes have better parameters.

The use of AG codes in the Hermitian construction method for QUENTA codes does not follows the same procedure
of the Euclidean one. The reason for this is that there is no general characterization of the Hermitian dual
code of an AG code. Thus, to determine the parameters of QUENTA codes derived from the AG codes used, 
the intersection of the bases of two AG codes is computed. The curve used to construct the QUENTA codes is the
projective line. The QUENTA codes created are MDS codes and also have maximal
entanglement.

Lastly, asymptotically good families of LCD codes are used to construct asymptotically good families of
QUENTA codes that have maximal entanglement. Using AG codes from a tower of function fields
that attain the Drinfeld-Vladut bound \cite{Stichtenoth:2009} we show that the QUENTA codes in this paper
surpass the quantum Gilbert-Varshamov bound \cite{Galindo:2019}.


The paper is organized as follows. In Section~\ref{sec:Preliminaries}, we describe what needs to be known
about AG codes, so that they can be applied to the generalization of the construction methods of QUENTA codes
from Wilde and Brun \cite{Wilde:2008} proposed by Galindo, \emph{et al.} \cite{Galindo:2019}. 
In this section, two methods to construct new AG codes from old ones are shown.
Afterwards, several new families of QUENTA codes are derived from AG codes. These derivations come from the
Euclidean and the Hermitian construction methods for QUENTA codes with the use of three
different types of curves. In Section~\ref{sec:codComp}, we compare the codes in this paper with the quantum
Singleton bound and with other quantum codes in the literature. In particular, it is shown that three families
of QUENTA codes constructed are MDS or almost MDS. In Section~\ref{sec:AsymptoticallyGood},
we show that there exists families of QUENTA codes that surpass the quantum Gilbert-Varshamov bound.
Lastly, the conclusion is given in Section~\ref{sec:Conclusion}.

\emph{Notation.} Throughout this paper, $p$ denotes a prime number and $q$ is a power of $p$.
$F/\mathbb{F}_q$ denotes an algebraic function field over $\mathbb{F}_q$ of genus $g$, where $\mathbb{F}_q$
denotes the finite field with $q$ elements. A linear code $C$ with parameters $[n,k,d]_q$ is a $k$-dimensional subspace of
$\mathbb{F}_q^n$ with minimum distance $d$. Lastly, an $[[n,k,d;c]]_q$ quantum code is a
$q^k$-dimensional subspace of $\mathbb{C}^{q^n}$ with minimum distance $d$ that utilizes $c$ pre-shared
entanglement pairs.


\section{Preliminaries}
\label{sec:Preliminaries}
\noindent
In this section, we introduce some ideas related to linear complementary dual (LCD) codes,
algebraic geometry (AG) codes and entanglement-assisted quantum (QUENTA) codes.
Before we give a description of LCD codes, a definition for
the Euclidean and the Hermitian dual of a code needs to be given.

\begin{definition}
%
    Let $C$ be a linear code over $\mathbb{F}_q$ with length $n$. The (Euclidean) dual of $C$ is defined as

    \begin{equation}
    C^\perp = \{ \ \bx \in \mathbb{F}_q^n \ | \ \bx\cdot \bc =0 \mbox{ for all } \bc \in C \}.
    \end{equation}
    If the code $C$ is $\mathbb{F}_{q^2}$-linear, then we can define the Hermitian dual of $C$.
    This dual code is defined by

    \begin{equation}
	C^{\perp_h} = \{ \ \bx \in \mathbb{F}_{q^2}^n \ | \ \bx\cdot \bc^q =0 \mbox{ for all } \bc \in C \ \},
    \end{equation}where $\bc^q = (c_1^q, \ldots, c_n^q)$ for $\bc \in\mathbb{F}_{q^2}^n$.

%
\end{definition}

When the intersection between a code and its dual gives only the vector $\bf{0}$, the code is called LCD.
A formal description can be seen below.

\begin{definition}
    The hull of a linear code $C$ is given by $hull (C) = C^\perp \cap C$. The code is called \rm{linear complementary
    dual} (LCD) code if the hull is trivial; i.e, $hull(C) = \{\bf{0}\}$. Similarly,
    $hull_H (C) = C^{\perp_h} \cap C$ and $C$ is called hermitian LCD code if $hull_H (C) = \{\bf{0}\}$.
\end{definition}

The class of LCD codes is a possible way to construct QUENTA codes that have maximal entanglement and
asymptotically good families of QUENTA codes(see Sections~\ref{sec:NewConstructions} and \ref{sec:AsymptoticallyGood}).

%
%

\subsection{Algebraic-Geometry codes}

Let $F/\mathbb{F}_q$ be an algebraic function field of genus $g$. A
place $P$ of $F/\mathbb{F}_q$ is the maximal ideal of some valuation
ring $\mathcal{O}_P$ of $F/\mathbb{F}_q$. We also define the set of
all places by $\mathbb{P}_F
= \{P | P\textit{ is a place of }F/\mathbb{F}_q\}$.

A divisor of $F/\mathbb{F}_q$ is a formal sum of places given by $D = \sum_{P\in
\mathbb{P}_F} n_P P$, with $n_P\in\mathbb{Z}$, where
almost all $n_P=0$. The support and degree of $D$ are defined as
${\supp}(D)=\{P\in \mathbb{P}_F|n_p\neq 0\}$ and
$\operatorname{deg}(D)= \sum_{P\in \mathbb{P}_F} n_P
\operatorname{deg}(P)$, respectively, where $\operatorname{deg}(P)$
is the degree of the place $P$. When a place has degree one, it is called a rational place.

The discrete valuation corresponding
to a place $P$ is written as $\nu_P$. For every element $f$ of
$F/\mathbb{F}_q$, we can define a principal divisor of $f$ by $(f)
= \sum_{P\in \mathbb{P}_F}\nu_P (f) P$. For $f \in \mathcal{O}_P$, we define
$f(P) \in \mathcal{O}_P/P$ to be the residue class of $f$ modulo $P$;
for $f \in F\setminus \mathcal{O}_P$, we put $f(P) = \infty$.
For a given divisor $G$, we denote the
Riemann-Roch space associated to $G$ by $\mathcal{L}(G) = \{f \in
F^* | (f) \geq -G\}\cup \{0\}$. 
%
%

The given description of Riemann-Roch spaces shows that when we are talking about such spaces we deal with
functions that obey a set of rules which are described by the defining divisor. One natural question that could
arise is the relation between the intersection of two Riemann-Roch spaces and the respective divisor that defines
such a space. Such a result was shown by Munuera and Pellikaan \cite{Munuera:1993}. Before showing it, we need to define
the intersection and union of two divisors, which is done in the following.

\begin{definition}
  Let $G$ and $H$ be divisors over $F/\fq$. If $G = \sum_{P\in\mathbb{P}_F} \nu_P(G) P$ and
  $H = \sum_{P\in\mathbb{P}_F} \nu_P(H) P$, where $P\in\mathbb{P}_F$ is a place,
  then the intersection $G\cap H$ of $G$ and $H$ over $F/\mathbb{F}_q$ is defined as follows

  \begin{equation}
    G\cap H = \sum_{P\in\mathbb{P}_F} \min\{\nu_P(G),\nu_P(H)\}P.
  \end{equation}In addition, the union is given by
  \begin{equation}
    G\cup H = \sum_{P\in\mathbb{P}_F} \max\{\nu_P(G),\nu_P(H)\}P.
  \end{equation}
\end{definition}

\begin{proposition}\cite[Lemma 2.6]{Munuera:1993}
  Let $G$ and $H$ be divisors over $F/\mathbb{F}_q$. Then $\mathcal{L}(G) \cap \mathcal{L}(H) = \mathcal{L}(G\cap H)$.
  \label{prop:MunueraPellikaan}
\end{proposition}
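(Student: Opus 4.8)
The plan is to prove the double inclusion $\mathcal{L}(G) \cap \mathcal{L}(H) = \mathcal{L}(G \cap H)$ by working place-by-place with the valuations, since both the Riemann-Roch spaces and the intersection divisor are defined entirely in terms of the local valuations $\nu_P$. The crucial observation is that for any nonzero $f \in F$, membership of $f$ in a Riemann-Roch space $\mathcal{L}(D)$ is equivalent to the family of local conditions $\nu_P(f) \geq -\nu_P(D)$ holding simultaneously for every place $P \in \mathbb{P}_F$; this is just unwinding the definition $(f) \geq -D$.

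First I would handle the trivial element $0$, which lies in all three spaces by convention, and then fix an arbitrary nonzero $f \in F^*$. For the inclusion $\mathcal{L}(G) \cap \mathcal{L}(H) \subseteq \mathcal{L}(G \cap H)$: if $f \in \mathcal{L}(G) \cap \mathcal{L}(H)$, then for every place $P$ we have both $\nu_P(f) \geq -\nu_P(G)$ and $\nu_P(f) \geq -\nu_P(H)$, hence $\nu_P(f) \geq \max\{-\nu_P(G), -\nu_P(H)\} = -\min\{\nu_P(G), \nu_P(H)\} = -\nu_P(G \cap H)$. Since this holds for all $P$, we get $(f) \geq -(G \cap H)$, i.e. $f \in \mathcal{L}(G \cap H)$. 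For the reverse inclusion $\mathcal{L}(G \cap H) \subseteq \mathcal{L}(G) \cap \mathcal{L}(H)$: if $f \in \mathcal{L}(G \cap H)$, then for every $P$ we have $\nu_P(f) \geq -\nu_P(G \cap H) = -\min\{\nu_P(G), \nu_P(H)\} \geq -\nu_P(G)$ and likewise $\geq -\nu_P(H)$, because $\min\{\nu_P(G), \nu_P(H)\} \leq \nu_P(G)$ and $\leq \nu_P(H)$. Thus $(f) \geq -G$ and $(f) \geq -H$, so $f \in \mathcal{L}(G)$ and $f \in \mathcal{L}(H)$.

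There is really no hard part here — the only thing to be careful about is the elementary identity $-\min\{a,b\} = \max\{-a,-b\}$ for integers $a, b$, and the fact that the inequality $(f) \geq -D$ between divisors is by definition the coefficientwise inequality $\nu_P(f) \geq -\nu_P(D)$ for all $P$. One minor subtlety worth noting is that $(f)$ is a well-defined divisor precisely because $\nu_P(f) = 0$ for all but finitely many $P$, so the comparisons are meaningful at every place. Combining the two inclusions yields the claimed equality, completing the proof.
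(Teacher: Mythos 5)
Your proof is correct and complete: the place-by-place argument via $\nu_P(f) \geq -\nu_P(\cdot)$ together with the identity $-\min\{a,b\}=\max\{-a,-b\}$ is exactly what is needed, and both inclusions are handled properly (the zero function included). The paper itself gives no proof, citing Munuera--Pellikaan \cite[Lemma 2.6]{Munuera:1993}; your argument is the standard one underlying that reference, so there is nothing to add.
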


In Section~\ref{sec:NewConstructions} it will be shown that when AG codes are used to construct QUENTA codes,
the amount of entanglement used is related to the dimension of the intersection of the two Riemann-Roch spaces.

For the exact value of the dimension of a Riemann-Roch space and the construction of the dual code
of an AG code, it is necessary to introduce the ideas of differential spaces and canonical divisors.
Let $\Omega_F = \{\omega|\omega \textit{ is a Weil differential of
}F/\mathbb{F}_q\}$ be the differential space of $F/\mathbb{F}_q$. Given a
nonzero differential $\omega$, we denote by $(\omega)=\sum_{P\in \mathbb{P}_F} \nu_P(\omega) P$
the canonical divisor of $\omega$. All canonical divisors are equivalent and have
degree equal to $2g-2$. Furthermore, for a divisor $G$ we define
$\Omega_F(G) = \{\omega \in \Omega_F| \omega = 0 \textit{ or }
(\omega)\geq G\}$, and its dimension as an $\mathbb{F}_q$-vector
space is denoted by $i(G)$.

The dimension of a Riemann-Roch space can be calculated through its defining divisor, the divisor of a Weil
differential and the genus of a curve.

\begin{proposition}\cite[Theorem 1.5.15]{Stichtenoth:2009}(Riemann-Roch Theorem)
Let $W$ be a canonical divisor of $F/\mathbb{F}_q$. Then for each divisor $G$,
the dimension of $\mathcal{L}(G)$ is given by $\ell(G) =
\operatorname{deg}(G) + 1 - g + \ell(W - G),$ where $\operatorname{deg}(G)$ is the
degree of the divisor $G$.
\end{proposition}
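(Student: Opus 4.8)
The plan is to follow the classical adelic proof of the Riemann--Roch theorem, using the differential machinery already introduced above. Write $A_F$ for the adele (repartition) space of $F/\fq$, i.e.\ the families $(\alpha_P)_{P\in\mathbb{P}_F}$ with $\alpha_P\in F$ and $\nu_P(\alpha_P)\ge 0$ for all but finitely many $P$, and for a divisor $G$ let $A_F(G)=\{\alpha\in A_F : \nu_P(\alpha_P)\ge -\nu_P(G)\ \text{for all }P\}$, with $F$ embedded diagonally. The theorem will follow from two identifications: $\ell(G)-\deg(G)-1+g=\dim_{\fq}\!\bigl(A_F/(A_F(G)+F)\bigr)$, and $\dim_{\fq}\!\bigl(A_F/(A_F(G)+F)\bigr)=i(G)=\ell(W-G)$.

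First I would establish Riemann's inequality. Picking a separating element $x\in F$ and using $[F:\fq(x)]<\infty$ together with the approximation theorem to reduce adelic computations to the rational function field $\fq(x)$ (where Riemann--Roch is explicit), one shows that $\deg(G)-\ell(G)$ is bounded above as $G$ ranges over all divisors; this boundedness is exactly what lets one \emph{define} $g:=1+\max_G\{\deg(G)-\ell(G)\}$ as a nonnegative integer, with $\ell(G)=\deg(G)+1-g$ whenever $\deg(G)$ is large. Next, comparing the spaces $A_F(G)$ for $G_1\le G_2$ via $\dim_{\fq}\!\bigl(A_F(G_2)/A_F(G_1)\bigr)=\deg(G_2)-\deg(G_1)$ and the exact sequence relating $\mathcal{L}(G_2)/\mathcal{L}(G_1)$, $A_F(G_2)/A_F(G_1)$ and $(A_F(G_2)+F)/(A_F(G_1)+F)$, together with the fact that $A_F(G)+F$ exhausts $A_F$ for $\deg(G)$ large, a dimension count gives that $A_F/(A_F(G)+F)$ is finite-dimensional with $\dim_{\fq}\!\bigl(A_F/(A_F(G)+F)\bigr)=\ell(G)-\deg(G)-1+g$ for every $G$.

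It then remains to interpret this quotient through differentials and through a Riemann--Roch space. By definition a Weil differential is an $\fq$-linear functional on $A_F$ that vanishes on $A_F(G)+F$ for some $G$, so evaluation realises $\Omega_F(G)$ precisely as the $\fq$-dual of $A_F/(A_F(G)+F)$; hence $i(G)=\dim_{\fq}\Omega_F(G)=\ell(G)-\deg(G)-1+g$. Finally, $\Omega_F$ is one-dimensional as an $F$-vector space: fixing any nonzero $\omega_0$ and setting $W=(\omega_0)$, every differential equals $x\omega_0$ for a unique $x\in F$, and $x\omega_0\in\Omega_F(G)$ iff $(x\omega_0)=(x)+W\ge G$ iff $x\in\mathcal{L}(W-G)$; thus $x\mapsto x\omega_0$ is an $\fq$-linear isomorphism $\mathcal{L}(W-G)\to\Omega_F(G)$, so $\ell(W-G)=i(G)$. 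Combining the two identifications yields $\ell(G)=\deg(G)+1-g+\ell(W-G)$, and this does not depend on the chosen canonical divisor since all canonical divisors are linearly equivalent and $\ell$ is an invariant of the divisor class.

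I expect the main obstacle to be the two finiteness assertions underlying the second paragraph: that $\deg(G)-\ell(G)$ is bounded above (which is what makes the genus $g$ well-defined in the first place) and that the adele quotient $A_F/(A_F(G)+F)$ is finite-dimensional of the claimed dimension. Both rest on the structure theory of function fields --- a separating transcendence basis, finiteness of $F/\fq(x)$, and the approximation theorem --- and on a careful comparison with the rational function field; it is this analytic/algebraic bookkeeping, rather than the purely formal differential argument in the last step, that is the technical core. A secondary point is the claim $\dim_F\Omega_F=1$; since the equivalence of all canonical divisors has already been recorded above, this can be invoked rather than reproved here.
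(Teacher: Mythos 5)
The paper offers no proof of this proposition: it is quoted verbatim from the cited reference (Stichtenoth, Theorem 1.5.15), and your sketch is precisely the standard adelic/Weil-differential proof given there — Riemann's inequality to define $g$, the identification $i(G)=\dim_{\mathbb{F}_q}\bigl(A_F/(A_F(G)+F)\bigr)$, duality with $\Omega_F(G)$, and the isomorphism $\mathcal{L}(W-G)\to\Omega_F(G)$ via $x\mapsto x\omega_0$. So your proposal is correct in outline and takes essentially the same route as the paper's source.
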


Now we define the first AG code utilized in this paper, see Definition~\ref{AgCodes:definitionAG}, and its
parameters, see Proposition~\ref{AgCodes:proposition1}. The definition of such AG codes is given as the image
of a linear map called the evaluation map. The parameters of the AG codes are related to the degrees of divisors,
genus and number of rational places. Thus, with simple arithmetic we can create families of codes,
even when the algebraic function field is fixed.

\begin{definition}
Let $P_1, \ldots, P_n$ be pairwise distinct rational places of
$F/\mathbb{F}_q$ and $D = P_1 + \cdots + P_n$. Choose
a divisor $G$ of $F/\mathbb{F}_q$ such that ${\supp} (G)\cap {\supp} (D) =
\varnothing$. The algebraic-geometry (AG) code $C_{\mathcal{L}}(D, G)$ associated
with the divisors $D$ and $G$ is defined as the image of the linear map
$ev_D\colon \mathcal{L}(G)\rightarrow \mathbb{F}_q^n$ called the evaluation map, where
$ev_D(f) = (f(P_1), \ldots, f(P_n))$; i.e.,
$C_{\mathcal{L}}(D, G)
= \{(f(P_1), \ldots, f(P_n))| f\in \mathcal{L}(G)\}$.
\label{AgCodes:definitionAG}
\end{definition}

\begin{proposition}\cite[Corollary 2.2.3]{Stichtenoth:2009}
Let $F/\mathbb{F}_q$ be a function
field of genus $g$. Then the AG code $C_{\mathcal{L}}(D, G)$ is an
$[n, k, d]$-linear code over $\mathbb{F}_q$ with parameters $k =
\ell(G) - \ell(G-D)\textit{ and } d\geq n- \operatorname{deg} (G)$.
If $2g - 2 < \operatorname{deg}(G) < n$, then $k =
\operatorname{deg}(G) - g + 1$. 
\label{AgCodes:proposition1}
\end{proposition}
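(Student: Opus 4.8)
The plan is to establish the three assertions separately: the dimension formula $k=\ell(G)-\ell(G-D)$, the designed–distance bound $d\ge n-\deg(G)$, and finally the closed form $k=\deg(G)-g+1$ under the extra hypothesis $2g-2<\deg(G)<n$.

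For the dimension, I would start from the fact that $C_{\mathcal L}(D,G)=ev_D(\mathcal L(G))$ is the image of the $\mathbb F_q$-linear map $ev_D$, so that by rank--nullity $k=\ell(G)-\dim\ker(ev_D)$. The crux is then to identify $\ker(ev_D)$ with $\mathcal L(G-D)$. Here the hypothesis $\supp(G)\cap\supp(D)=\varnothing$ is essential: it guarantees that every $f\in\mathcal L(G)$ lies in each valuation ring $\mathcal O_{P_i}$, so that $f(P_i)$ is a well-defined element of the residue field and $f(P_i)=0$ is equivalent to $\nu_{P_i}(f)\ge 1$. Consequently $f\in\ker(ev_D)$ iff $\nu_{P_i}(f)\ge 1$ for $i=1,\dots,n$, which combined with the standing inequality $(f)\ge -G$ is exactly $(f)\ge D-G$, i.e. $f\in\mathcal L(G-D)$. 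This gives $k=\ell(G)-\ell(G-D)$.

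For the minimum distance, I would take an arbitrary nonzero codeword $c=ev_D(f)$ of weight $w$ and set $T=\{i : f(P_i)=0\}$, so $|T|=n-w$. Then $f$ is a nonzero element of $\mathcal L\bigl(G-\sum_{i\in T}P_i\bigr)$, and since a nonzero Riemann--Roch space $\mathcal L(A)$ forces $\deg(A)\ge 0$ (apply the standard fact that principal divisors have degree zero, $\deg((f))=0$, to a nonzero $f$ with $(f)\ge -A$), we obtain $\deg(G)-(n-w)\ge 0$, hence $w\ge n-\deg(G)$. As $c$ was an arbitrary nonzero codeword, $d\ge n-\deg(G)$.

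For the last claim, assume $2g-2<\deg(G)<n$. Since $\deg(D)=n$, we get $\deg(G-D)<0$, so $\mathcal L(G-D)=\{0\}$ and the first part yields $k=\ell(G)$. Moreover $\deg(W-G)=2g-2-\deg(G)<0$, so $\ell(W-G)=0$, and the Riemann--Roch Theorem stated above gives $\ell(G)=\deg(G)+1-g$; hence $k=\deg(G)-g+1$. The only genuinely delicate point in the whole argument is the kernel computation: one must use the disjointness of the supports both to make the evaluations well-defined and to repackage the $n$ vanishing conditions as the single divisor inequality $(f)\ge D-G$; everything else is a direct application of the Riemann--Roch Theorem and the vanishing of $\mathcal L(A)$ when $\deg(A)<0$.
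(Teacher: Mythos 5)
Your proposal is correct; the paper does not prove this proposition itself but cites it from Stichtenoth (Corollary 2.2.3), and your argument — identifying $\ker(ev_D)$ with $\mathcal{L}(G-D)$ via the disjointness of $\supp(G)$ and $\supp(D)$, bounding the weight of a nonzero codeword through $\deg\bigl(G-\sum_{i\in T}P_i\bigr)\ge 0$, and then applying Riemann--Roch with $\ell(G-D)=\ell(W-G)=0$ when $2g-2<\deg(G)<n$ — is precisely the standard proof given there. No gaps to report.
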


The next two propositions present a way to construct new AG codes from old AG codes via the intersection
and union of divisors.
Proposition~\ref{lema_intersection_AG} will be used in Section~\ref{sec:NewConstructions} in order to use AG
codes to create QUENTA codes.

\begin{proposition}
      Let $F/\mathbb{F}_q$ be a function field of genus $g$ and let $D$ be a divisor as in
      Definition~\ref{AgCodes:definitionAG}. If $G_1$ and $G_2$ are two divisors such that
      ${\supp} G_1\cap {\supp} D = \varnothing$, resp. ${\supp} G_2\cap {\supp} D = \varnothing$,
      and $\operatorname{deg}(G_1\cup G_2) < n$, then
      $C_{\mathcal{L}}(D, G_1)\cap C_{\mathcal{L}}(D, G_2) = C_{\mathcal{L}}(D, G_1\cap G_2)$.
      \label{lema_intersection_AG}
\end{proposition}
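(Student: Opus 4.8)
The plan is to reduce everything to statements about Riemann--Roch spaces, where we already have Proposition~\ref{prop:MunueraPellikaan} available. The two inclusions to prove are $C_{\mathcal{L}}(D, G_1)\cap C_{\mathcal{L}}(D, G_2) \supseteq C_{\mathcal{L}}(D, G_1\cap G_2)$ and $C_{\mathcal{L}}(D, G_1)\cap C_{\mathcal{L}}(D, G_2) \subseteq C_{\mathcal{L}}(D, G_1\cap G_2)$. The first inclusion is essentially free: since $G_1\cap G_2 \le G_1$ and $G_1\cap G_2 \le G_2$ (as divisors, because $\min\{\nu_P(G_1),\nu_P(G_2)\} \le \nu_P(G_i)$ at every place), we have $\mathcal{L}(G_1\cap G_2) \subseteq \mathcal{L}(G_i)$, hence $ev_D(\mathcal{L}(G_1\cap G_2)) \subseteq ev_D(\mathcal{L}(G_i)) = C_{\mathcal{L}}(D, G_i)$ for $i=1,2$, giving the containment in the intersection. (One should also note $\supp(G_1\cap G_2)\cap\supp D = \varnothing$, which is immediate since the support of $G_1\cap G_2$ is contained in $\supp G_1 \cup \supp G_2$, so the code $C_{\mathcal{L}}(D,G_1\cap G_2)$ is well-defined.)

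The reverse inclusion is where the degree hypothesis $\operatorname{deg}(G_1\cup G_2) < n$ enters, and this is the main obstacle. Take a codeword $\bc \in C_{\mathcal{L}}(D, G_1)\cap C_{\mathcal{L}}(D, G_2)$. Then $\bc = ev_D(f_1) = ev_D(f_2)$ for some $f_1 \in \mathcal{L}(G_1)$ and $f_2 \in \mathcal{L}(G_2)$. The function $f_1 - f_2$ lies in $\mathcal{L}(G_1 \cup G_2)$ (the union dominates both $G_1$ and $G_2$, so $\mathcal{L}(G_1),\mathcal{L}(G_2)\subseteq\mathcal{L}(G_1\cup G_2)$, which is a vector space) and vanishes at all $n$ places $P_1,\dots,P_n$ of $D$, i.e., $f_1 - f_2 \in \mathcal{L}(G_1\cup G_2 - D)$. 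But $\operatorname{deg}(G_1\cup G_2 - D) = \operatorname{deg}(G_1\cup G_2) - n < 0$, so $\mathcal{L}(G_1\cup G_2 - D) = \{0\}$ and therefore $f_1 = f_2$. Hence $f_1 \in \mathcal{L}(G_1)\cap\mathcal{L}(G_2)$, and by Proposition~\ref{prop:MunueraPellikaan} this equals $\mathcal{L}(G_1\cap G_2)$. Thus $\bc = ev_D(f_1) \in C_{\mathcal{L}}(D, G_1\cap G_2)$, completing the proof.

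The only subtlety worth double-checking is that $f_1$ and $f_2$ need not individually lie in $\mathcal{L}(G_1\cap G_2)$ until we know they are equal; the argument above handles exactly this point by first forcing $f_1=f_2$ using the degree bound and only then invoking the Munuera--Pellikaan lemma. I do not expect any difficulty beyond carefully tracking divisor inequalities and confirming that $G_1 \cup G_2 - D$ genuinely has negative degree under the stated hypothesis, which is immediate. It is also worth remarking that the hypothesis $\operatorname{deg}(G_1\cup G_2) < n$ is the natural analogue of the condition $\operatorname{deg}(G) < n$ appearing in Proposition~\ref{AgCodes:proposition1} and cannot be dropped in general, since without it the evaluation map on $\mathcal{L}(G_1\cup G_2)$ may have a nontrivial kernel and the representation of a codeword by a function is no longer unique.
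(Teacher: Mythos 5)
Your proof is correct and follows essentially the same route as the paper: both obtain the hard inclusion by writing a common codeword as $ev_D(g_1)=ev_D(g_2)$, noting $g_1-g_2\in\mathcal{L}(G_1\cup G_2)$ with $\operatorname{deg}(G_1\cup G_2)<n$ forcing $g_1=g_2$, and then invoking Proposition~\ref{prop:MunueraPellikaan}, while the easy inclusion follows from $G_1\cap G_2\le G_i$. You merely spell out the injectivity step (via $\mathcal{L}(G_1\cup G_2 - D)=\{0\}$) in more detail than the paper does.
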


\begin{proof}
   First of all, consider that ${\bf c}\in C_{\mathcal{L}}(D, G_1)\cap C_{\mathcal{L}}(D, G_2)$, then there exist
   $g_1\in\mathcal{L}(G_1)$ and $g_2\in\mathcal{L}(G_2)$
   such that ${\bf c} = ev_D(g_1) = ev_D(g_2)$, which implies $ev_D(g_1-g_2) = 0$. Since that $g_1 - g_2 \in \mathcal{L}(G_1\cup G_2)$ and
   $\operatorname{deg}(G_1\cup G_2) < n$, then $g_1 = g_2\in\mathcal{L}(G_1\cap G_2)$ by Proposition~\ref{prop:MunueraPellikaan}.
   Consequently, ${\bf c}\in C_{\mathcal{L}}(D, G_1\cap G_2)$.
   The other inclusion is straightforward consequence of Proposition~\ref{prop:MunueraPellikaan}.
\end{proof}

\begin{proposition}
      Let $F/\mathbb{F}_q$ be a function field of genus $g$ and let $D$ be a divisor as in
      Definition~\ref{AgCodes:definitionAG}. If $G_1$ and $G_2$ are two divisors such that
      ${\supp} G_1\cap {\supp} D = \varnothing$, resp. ${\supp} G_2\cap {\supp} D = \varnothing$,
      and $\operatorname{deg}(G_1\cap G_2) > 2g-2$ and $\operatorname{deg}(G_1\cup G_2) < n$, then
      $C_{\mathcal{L}}(D, G_1) + C_{\mathcal{L}}(D, G_2) = C_{\mathcal{L}}(D, G_1\cup G_2)$.
      \label{proposition_union_AG}
\end{proposition}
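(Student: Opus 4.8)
The plan is to mirror the proof of Proposition~\ref{lema_intersection_AG}, but work with sums and unions of divisors instead of intersections, using the Riemann--Roch theorem to convert the hypotheses on degrees into statements about dimensions. First I would establish the inclusion $C_{\mathcal{L}}(D, G_1) + C_{\mathcal{L}}(D, G_2) \subseteq C_{\mathcal{L}}(D, G_1\cup G_2)$. This is the easy direction: since $\mathcal{L}(G_1)\subseteq \mathcal{L}(G_1\cup G_2)$ and $\mathcal{L}(G_2)\subseteq \mathcal{L}(G_1\cup G_2)$ (because $G_1\cup G_2 \geq G_1$ and $G_1\cup G_2\geq G_2$ place-by-place), applying $ev_D$ gives $C_{\mathcal{L}}(D,G_1), C_{\mathcal{L}}(D,G_2)\subseteq C_{\mathcal{L}}(D,G_1\cup G_2)$, and the right-hand side is a linear code, hence closed under sums. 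Note this inclusion needs no degree hypotheses at all.

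For the reverse inclusion I would argue by dimension count. By the condition $\operatorname{deg}(G_1\cup G_2) < n$, the evaluation map $ev_D$ restricted to $\mathcal{L}(G_1\cup G_2)$ is injective (since any $f$ in the kernel lies in $\mathcal{L}(G_1\cup G_2 - D)$, which has negative degree and is therefore $\{0\}$); similarly $ev_D$ is injective on $\mathcal{L}(G_1)$, $\mathcal{L}(G_2)$, and $\mathcal{L}(G_1\cap G_2)$, since all these divisors have degree $\le \operatorname{deg}(G_1\cup G_2) < n$. Consequently $\dim C_{\mathcal{L}}(D,G_i) = \ell(G_i)$, $\dim C_{\mathcal{L}}(D,G_1\cup G_2) = \ell(G_1\cup G_2)$, and by Proposition~\ref{lema_intersection_AG} together with Proposition~\ref{prop:MunueraPellikaan}, $\dim\bigl(C_{\mathcal{L}}(D,G_1)\cap C_{\mathcal{L}}(D,G_2)\bigr) = \ell(G_1\cap G_2)$. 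The dimension of the sum of the two codes is therefore
\begin{equation}
\ell(G_1) + \ell(G_2) - \ell(G_1\cap G_2).
\end{equation}
Now I invoke the pointwise identity $\nu_P(G_1\cup G_2) + \nu_P(G_1\cap G_2) = \nu_P(G_1) + \nu_P(G_2)$ (valid since $\max\{a,b\}+\min\{a,b\}=a+b$), which gives $\operatorname{deg}(G_1) + \operatorname{deg}(G_2) = \operatorname{deg}(G_1\cup G_2) + \operatorname{deg}(G_1\cap G_2)$. Here is where the two degree hypotheses enter: since $\operatorname{deg}(G_1\cap G_2) > 2g-2$ and $\operatorname{deg}(G_1\cup G_2) < n$, while $2g-2 < \operatorname{deg}(G_i) < n$ follows because $\operatorname{deg}(G_1\cap G_2)\le \operatorname{deg}(G_i)\le \operatorname{deg}(G_1\cup G_2)$, I can apply the second part of the Riemann--Roch formula in Proposition~\ref{AgCodes:proposition1}, namely $\ell(G) = \operatorname{deg}(G) - g + 1$, to each of $G_1$, $G_2$, $G_1\cap G_2$. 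Substituting, the dimension of the sum becomes $\operatorname{deg}(G_1) + \operatorname{deg}(G_2) - \operatorname{deg}(G_1\cap G_2) - g + 1 = \operatorname{deg}(G_1\cup G_2) - g + 1 = \ell(G_1\cup G_2) = \dim C_{\mathcal{L}}(D,G_1\cup G_2)$. Since we already have one inclusion, equal dimensions force equality of the codes.

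The main obstacle — really the only subtle point — is making sure all four divisors $G_1, G_2, G_1\cap G_2, G_1\cup G_2$ simultaneously satisfy the hypothesis $2g-2 < \operatorname{deg}(\cdot) < n$ needed to use the clean form $\ell(G)=\operatorname{deg}(G)-g+1$ of Riemann--Roch; this is exactly what the two stated degree inequalities guarantee, via the sandwiching $\operatorname{deg}(G_1\cap G_2) \le \operatorname{deg}(G_i) \le \operatorname{deg}(G_1\cup G_2)$. A secondary point to check is that $\operatorname{supp}(G_1\cup G_2)\cap \operatorname{supp}(D) = \varnothing$ and $\operatorname{supp}(G_1\cap G_2)\cap \operatorname{supp}(D)=\varnothing$, so that the codes $C_{\mathcal{L}}(D,G_1\cup G_2)$ and $C_{\mathcal{L}}(D,G_1\cap G_2)$ are well-defined; this is immediate since the supports of the union and intersection are contained in $\operatorname{supp}(G_1)\cup\operatorname{supp}(G_2)$, which is disjoint from $\operatorname{supp}(D)$.
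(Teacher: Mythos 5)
Your proposal is correct and follows essentially the same route as the paper: the easy inclusion via $G_i\leq G_1\cup G_2$, and the reverse inclusion by a dimension count resting on $\operatorname{deg}(G_1)+\operatorname{deg}(G_2)=\operatorname{deg}(G_1\cap G_2)+\operatorname{deg}(G_1\cup G_2)$, Riemann--Roch for divisors of degree $>2g-2$, and Proposition~\ref{lema_intersection_AG} for the intersection. The only cosmetic difference is that you carry out the count on the codes themselves (using injectivity of $ev_D$, guaranteed by $\operatorname{deg}(G_1\cup G_2)<n$), whereas the paper does it on the Riemann--Roch spaces, first establishing $\mathcal{L}(G_1\cup G_2)=\mathcal{L}(G_1)+\mathcal{L}(G_2)$ and then applying $ev_D$.
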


\begin{proof}
      Lets begin considering the inclusion
      $C_{\mathcal{L}}(D, G_1) + C_{\mathcal{L}}(D, G_2) \subseteq C_{\mathcal{L}}(D, G_1\cup G_2)$. Since
      $G_i\leq G_1\cup G_2$, for $i = 1,2$, then $C_{\mathcal{L}}(D, G_i)\subseteq C_{\mathcal{L}}(D, G_1\cup G_2)$,
      for $i=1,2$. Hence $C_{\mathcal{L}}(D, G_1)+C_{\mathcal{L}}(D, G_1)\subseteq C_{\mathcal{L}}(D, G_1\cup G_2)$.
      On the other hand, notice that $\ell(G_1) + \ell(G_2) = \ell(G_1\cap G_2) + \ell(G_1\cup G_2)$, since that
      $\operatorname{deg}(G_1\cap G_2) > 2g-2$. This implies that
      $\mathcal{L}(G_1\cup G_2) = \mathcal{L}(G_1) + \mathcal{L}(G_2)$
      by Proposition~\ref{prop:MunueraPellikaan}. Now, the proof of the remaining inclusion follows from the
      hypothesis that $\operatorname{deg}(G_1\cup G_2) < n$ and Proposition~\ref{lema_intersection_AG}.
\end{proof}

Another important type of AG code is given in the following.

\begin{definition}
Let $F/\mathbb{F}_q$ be a function field of genus $g$ and let $G$ and $D$ be divisors as in
Definition~\ref{AgCodes:definitionAG}. Then we define the code
$C_\Omega (D, G)$ as $C_\Omega (D, G) =
\{(\textit{res}_{P_1}(\omega), \ldots,$ $\textit{res}_{P_n}(\omega)|
\omega\in \Omega_F(G - D)\}$, where $\textit{res}_{P_i}(\omega)$
denotes the residue of $\omega$ at $P_i$.
\label{AgCodes:definitionOmega}
\end{definition}

\begin{proposition}\cite[Theorem 2.2.7]{Stichtenoth:2009}
Let $C_\Omega (D, G)$ be the AG code from Definition~\ref{AgCodes:definitionOmega}. If $2g - 2 <
\operatorname{deg}(G) < n$, then $C_\Omega (D, G)$ is an $[n, k',
d']$-linear code over $\mathbb{F}_q$, where $k' =
n+g-1-\operatorname{deg}(G)$ and $d' \geq \operatorname{deg}(G) -
(2g -2)$.
\label{AgCodes:proposition2}
\end{proposition}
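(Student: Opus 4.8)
The plan is to study the $\mathbb{F}_q$-linear residue map $\rho\colon \Omega_F(G-D)\to \mathbb{F}_q^n$, $\omega\mapsto(\res_{P_1}(\omega),\dots,\res_{P_n}(\omega))$, whose image is, by Definition~\ref{AgCodes:definitionOmega}, exactly $C_\Omega(D,G)$. Linearity of $\rho$ is immediate from the additivity of residues, so $C_\Omega(D,G)$ is automatically a linear code of length $n$, and the task reduces to computing $k'=\dim_{\mathbb{F}_q}C_\Omega(D,G)$ and bounding $d'$. First I would pin down $\ker\rho$. Since ${\supp}(G)\cap{\supp}(D)=\varnothing$, every $\omega\in\Omega_F(G-D)$ satisfies $\nu_{P_i}(\omega)\ge-1$ for each $i$; if $\nu_{P_i}(\omega)\ge 0$ then $\res_{P_i}(\omega)=0$, while a Weil differential with a pole of order exactly one at a rational place has nonzero residue there, so $\rho(\omega)=0$ forces $\nu_{P_i}(\omega)\ge 0$ for all $i$ and hence $(\omega)\ge G$. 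Thus $\ker\rho=\Omega_F(G)$.

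For the dimension I would exploit the hypothesis $\deg(G)>2g-2$: then $\deg(W-G)<0$ for a canonical divisor $W$, so $\ell(W-G)=0$, i.e.\ $\Omega_F(G)=\{0\}$, and $\rho$ is injective. Therefore $k'=\dim_{\mathbb{F}_q}\Omega_F(G-D)=i(G-D)$. By the Riemann--Roch theorem, $i(G-D)=\ell(G-D)-\deg(G-D)+g-1$; the other hypothesis $\deg(G)<n$ gives $\deg(G-D)=\deg(G)-n<0$, whence $\ell(G-D)=0$ and $k'=i(G-D)=g-1-\deg(G-D)=n+g-1-\deg(G)$, as claimed.

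For the minimum distance I would argue with the fact that every canonical divisor has degree $2g-2$. Take a nonzero codeword $\rho(\omega)$ of weight $d'$ and let $T\subseteq\{1,\dots,n\}$ with $|T|=n-d'$ be the set of indices at which the residue vanishes. As in the kernel computation, $\res_{P_i}(\omega)=0$ together with $\nu_{P_i}(\omega)\ge-1$ forces $\nu_{P_i}(\omega)\ge 0$ for $i\in T$; combined with $(\omega)\ge G-D$ this yields $(\omega)\ge G-D+\sum_{i\in T}P_i$. Taking degrees, $2g-2\ge\deg(G)-n+(n-d')=\deg(G)-d'$, i.e.\ $d'\ge\deg(G)-(2g-2)$.

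The one genuinely non-formal ingredient is the claim that a Weil differential with a simple pole at a rational place has nonzero residue there: this is what pins $\ker\rho$ down to $\Omega_F(G)$ rather than something larger, and it is where the local theory of residues enters; everything else is bookkeeping with Riemann--Roch and degrees of divisors. An alternative packaging would be to first establish the representation $C_\Omega(D,G)=C_{\mathcal{L}}(D,(\eta)+D-G)$ for a Weil differential $\eta$ having a simple pole and residue $1$ at every $P_i$, and then read off both parameters from Proposition~\ref{AgCodes:proposition1} applied to the divisor $(\eta)+D-G$, whose degree is $2g-2+n-\deg(G)$ (so that $2g-2<\deg((\eta)+D-G)<n$ precisely under our hypotheses, giving $k'=n+g-1-\deg(G)$ and $d'\ge\deg(G)-(2g-2)$). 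Since the existence of such an $\eta$ rests on the same residue input, I would present the direct argument above.
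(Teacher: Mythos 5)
Your argument is correct: the kernel computation $\ker\rho=\Omega_F(G)$ via the fact that a simple pole at a rational place has nonzero residue, the dimension count $k'=i(G-D)-i(G)$ with Riemann--Roch and the vanishing $\ell(W-G)=\ell(G-D)=0$ under the degree hypotheses, and the weight bound from $\deg(\omega)=2g-2$ are exactly the standard proof of this statement. The paper itself offers no proof, citing Stichtenoth (Theorem 2.2.7), and your proof (including the alternative packaging via a differential $\eta$ with simple poles and residue $1$ at all $P_i$, which is the paper's Proposition~\ref{AgCodes:DualCodeWeilDiff}) coincides with that source's approach, so there is nothing to add.
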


The relationship between the codes $C_{\mathcal{L}}(D,G)$ and $C_\Omega (D,G)$ is
given in the next proposition.

\begin{proposition}\cite[Proposition 2.2.10]{Stichtenoth:2009}
Let $C_{\mathcal{L}} (D,G)$ be the AG code described in Definition~\ref{AgCodes:definitionAG}. Then
$C_{\Omega} (D, G)$ is its Euclidean dual, i.e., $C_{\mathcal{L}} (D,G)^\bot = C_{\Omega} (D, G)$.
Additionally, if we have a Weil differential $\eta$ such that $\nu_{P_i}(\eta) = -1$ and $\eta_{P_i}(1) = 1$
for all $i=1, \ldots, n$, then $C_{\Omega} (D, G) = C_{\mathcal{L}} (D, G^\perp)$, where $G^\perp = D - G + (\eta)$.
\label{AgCodes:DualCodeWeilDiff}
\end{proposition}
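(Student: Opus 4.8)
The plan is to prove the two assertions in turn. For the first, I would establish the orthogonality $C_{\mathcal{L}}(D,G) \perp C_\Omega(D,G)$ and then a dimension count. Take $f \in \mathcal{L}(G)$ and $\omega \in \Omega_F(G-D)$; I want $\sum_{i=1}^n f(P_i)\,\mathrm{res}_{P_i}(\omega) = 0$. The key input here is the Residue Theorem, which states that $\sum_{P \in \mathbb{P}_F} \mathrm{res}_P(\eta) = 0$ for any Weil differential $\eta$. Applying this to $\eta = f\omega$: since $f \in \mathcal{L}(G)$ we have $(f) \geq -G$, and since $\omega \in \Omega_F(G-D)$ we have $(\omega) \geq G-D$, so $(f\omega) = (f) + (\omega) \geq -G + G - D = -D$. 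Hence $f\omega$ has no poles outside $\mathrm{supp}(D) = \{P_1,\dots,P_n\}$, and at each $P_i$ (a simple place) $\mathrm{res}_{P_i}(f\omega) = f(P_i)\,\mathrm{res}_{P_i}(\omega)$. Summing over all places and using the Residue Theorem gives exactly $\sum_{i=1}^n f(P_i)\,\mathrm{res}_{P_i}(\omega) = 0$, so the orthogonality holds. For the dimension count, by Proposition~\ref{AgCodes:proposition1} (together with its formula via $\ell(G)-\ell(G-D)$) and the analogous statement in Proposition~\ref{AgCodes:proposition2}, one checks $\dim C_{\mathcal{L}}(D,G) + \dim C_\Omega(D,G) = n$ using the Riemann--Roch Theorem applied to both $G$ and $G-D$, together with $i(G-D) = \ell(W-G+D)$. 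Since we have an orthogonality and the dimensions add to $n$, the two codes must be exact Euclidean duals.

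For the second assertion, suppose $\eta$ is a Weil differential with $\nu_{P_i}(\eta) = -1$ and $\eta_{P_i}(1) = 1$ (i.e.\ $\mathrm{res}_{P_i}(\eta) = 1$) for all $i$. I claim the $\mathbb{F}_q$-linear map $\mathcal{L}(G^\perp) \to \Omega_F(G-D)$, $h \mapsto h\eta$, where $G^\perp = D - G + (\eta)$, is a bijection intertwining the evaluation and residue descriptions. First, if $h \in \mathcal{L}(G^\perp)$ then $(h) \geq -(D - G + (\eta))$, so $(h\eta) = (h) + (\eta) \geq -D + G - (\eta) + (\eta) = G - D$, confirming $h\eta \in \Omega_F(G-D)$; the inverse map $\omega \mapsto \omega/\eta$ lands back in $\mathcal{L}(G^\perp)$ by the same computation, so the map is a bijection. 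Next, the evaluation $ev_D(h) = (h(P_1),\dots,h(P_n))$ must be compared with $(\mathrm{res}_{P_1}(h\eta),\dots,\mathrm{res}_{P_n}(h\eta))$: since $\eta$ has a simple pole at each $P_i$ with residue $1$, and $h$ is regular at $P_i$ (as $\nu_{P_i}(h\eta) \geq 0$ forces $\nu_{P_i}(h) \geq 1$... actually $\nu_{P_i}(G^\perp)$ need not vanish, so more carefully: $h \in \mathcal{O}_{P_i}$ because $G^\perp$ has support disjoint from $D$ provided $\mathrm{supp}((\eta))$ and $\mathrm{supp}(G)$ avoid $\mathrm{supp}(D)$ — which holds since $\nu_{P_i}(\eta) = -1 \neq 0$ would put $P_i$ in the support; one handles this by noting $\nu_{P_i}(G^\perp) = 1 - \nu_{P_i}(G) + \nu_{P_i}((\eta))$ and tracking the local expansion), we get $\mathrm{res}_{P_i}(h\eta) = h(P_i) \cdot \mathrm{res}_{P_i}(\eta) = h(P_i)$. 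Hence $C_{\mathcal{L}}(D, G^\perp) = C_\Omega(D,G)$.

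The main obstacle I expect is the bookkeeping of supports and local valuations in the second part: one must carefully verify that every $h \in \mathcal{L}(G^\perp)$ is actually regular at each $P_i$ so that $h(P_i)$ is well defined, and that the residue of $h\eta$ at $P_i$ really equals $h(P_i)$ rather than picking up higher-order terms from the pole of $\eta$. This is where the hypotheses $\nu_{P_i}(\eta) = -1$ and $\mathrm{res}_{P_i}(\eta) = 1$ are used in an essential way — they guarantee $\eta$ has a simple pole with unit residue at each evaluation point, so that only the value $h(P_i)$ survives in the residue. A cleaner route, which I would ultimately prefer, is to invoke the standard fact (Stichtenoth, Prop.~2.2.10) that $C_\Omega(D,G)$ is always of the form $C_{\mathcal{L}}(D, H)$ for a suitable divisor $H$, and then identify $H$ with $G^\perp = D - G + (\eta)$ by comparing the two via the differential $\eta$; the orthogonality from the Residue Theorem pins down the rest, avoiding a head-on local computation.
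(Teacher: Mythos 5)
The paper offers no proof of this proposition at all: it is quoted verbatim from Stichtenoth (Prop.\ 2.2.10 and the surrounding duality theorem), so there is nothing internal to compare against. Your argument is, in substance, exactly the standard textbook proof from that reference: orthogonality of $C_{\mathcal{L}}(D,G)$ and $C_\Omega(D,G)$ via the Residue Theorem applied to $f\omega$ (using $(f\omega)\ge -D$ and $\mathrm{res}_{P_i}(f\omega)=f(P_i)\,\mathrm{res}_{P_i}(\omega)$), a dimension count to upgrade inclusion to equality, and for the second assertion the $F$-linear bijection $h\mapsto h\eta$ between $\mathcal{L}(D-G+(\eta))$ and $\Omega_F(G-D)$, with $\nu_{P_i}(\eta)=-1$, $\mathrm{res}_{P_i}(\eta)=1$ guaranteeing $\nu_{P_i}(h)\ge 0$ and $\mathrm{res}_{P_i}(h\eta)=h(P_i)$. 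This is sound. Two small caveats: the dimension count needs the general formula $\dim C_\Omega(D,G)=i(G-D)-i(G)$, not the special case quoted in the paper's Proposition~\ref{AgCodes:proposition2}, which assumes $2g-2<\operatorname{deg}(G)<n$ (otherwise your duality claim would only be established in that range); and your proposed ``cleaner route'' of invoking Stichtenoth's Prop.\ 2.2.10 to identify $C_\Omega(D,G)$ with some $C_{\mathcal{L}}(D,H)$ is circular, since that proposition is precisely the statement being proved --- the head-on computation with $\eta$ that you sketch first is the actual proof.
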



\subsection{Entanglement-assisted quantum codes}

\begin{definition}
  A quantum code $\mathcal{Q}$ is called an $[[n,k,d;c]]_q$ entanglement-assisted quantum (QUENTA) code
  if it encodes $k$ logical qudits into $n$ physical qudits using $c$ copies of maximally entangled states
  and can correct $\lfloor(d-1)/2\rfloor$ quantum errors. The rate of a QUENTA code is given by $k/n$, relative
  distance by $d/n$, and entanglement-assisted rate by $c/n$. Lastly, a QUENTA code is said to have maximal
  entanglement when $c = n-k$.
\end{definition}

Formulating a stabilizer paradigm for QUENTA codes gives a way to use classical codes to construct this
quantum codes \cite{Brun:2014}. In particular, we have the next two procedures by
Galindo, \emph{et al.} \cite{Galindo:2019}.

\begin{proposition}\cite[Theorem 4]{Galindo:2019}
Let $C_1$ and $C_2$ be two linear codes over $\mathbb{F}_q$ with parameters $[n,k_1,d_1]_q$ and
$[n,k_2,d_2]_q$ and parity check matrices $H_1$ and $H_2$, respectively. Then there is a QUENTA code with parameters
$[[n,k_1+k_2-n+c, d; c]]_q$, where $d = \min\{d_H(C_1\setminus(C_1\cap C_2^\perp)), d_H(C_2\setminus(C_1^\perp\cap C_2))\}$,
where $d_H$ is the minimum Hamming weight of the vectors in the set, and

\begin{equation}
  c = {\rk} (H_1 H_2^T) = \dim C_1^\perp - \dim (C_1^\perp\cap C_2)
\end{equation}is the number of required maximally entangled states.
\label{Prep:WildeEuclid}
\end{proposition}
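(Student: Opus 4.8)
The plan is to realize the asserted $[[n,k_1+k_2-n+c,d;c]]_q$ code as the entanglement-assisted stabilizer code attached to a single $\fq$-linear subspace $\mathcal{S}\subseteq\fq^{2n}$, and then to extract all four parameters by linear algebra over $\fq$. Recall the $q$-ary symplectic formalism: a stabilizer-type structure on $n$ qudits is encoded by an $\fq$-linear subspace $\mathcal{S}\subseteq\fq^{2n}$, commutation of operators being governed by the symplectic form $\langle(\ba\mid\bb),(\ba'\mid\bb')\rangle=\ba\cdot\bb'-\bb\cdot\ba'$ (for non-prime $q$ one composes with the field trace, which changes no rank below), and to such an $\mathcal{S}$ one attaches an entanglement-assisted code via a symplectic Gram--Schmidt decomposition: if the form restricted to $\mathcal{S}$ has rank $2c$, with radical $\mathcal{S}_0$ (so $\dim\mathcal{S}_0=\dim\mathcal{S}-2c$), then $\mathcal{S}$ splits as $\mathcal{S}_0$ together with $c$ mutually orthogonal hyperbolic planes; the $c$ planes become $c$ pre-shared maximally entangled pairs, a basis of $\mathcal{S}_0$ becomes ordinary stabilizer generators, and the resulting code has parameters $[[\,n,\ n-(\dim\mathcal{S}-c),\ d;\ c\,]]_q$, where $d$ is the least weight of a vector in $\mathcal{S}^{\perp_s}\setminus\mathcal{S}_0$ and $\mathcal{S}^{\perp_s}$ is the symplectic dual of $\mathcal{S}$. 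I take this (Brun--Devetak--Hsieh style) theorem as given.

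Now set $\mathcal{S}=\{(\bu\mid\bv):\bu\in C_1^\perp,\ \bv\in C_2^\perp\}$, i.e. the row space of $\mathrm{diag}(H_1,H_2)$; since $H_i$ has rank $n-k_i$ we get $\dim\mathcal{S}=(n-k_1)+(n-k_2)=2n-k_1-k_2$. The symplectic product of $(\bu\mid\bv)$ and $(\bu'\mid\bv')$ inside $\mathcal{S}$ equals $\bu\cdot\bv'-\bv\cdot\bu'$, so the Gram matrix of the form on the chosen generators is $\left(\begin{smallmatrix}0&H_1H_2^T\\-H_2H_1^T&0\end{smallmatrix}\right)$, of rank $2\,\rk(H_1H_2^T)$; hence $c=\rk(H_1H_2^T)$ in the notation above. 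Reading $H_1H_2^T$ as the matrix of the pairing $C_1^\perp\times C_2^\perp\to\fq$, $(\bx,\by)\mapsto\bx\cdot\by$, its rank is $\dim C_1^\perp-\dim\{\bx\in C_1^\perp:\bx\cdot\by=0\text{ for all }\by\in C_2^\perp\}=\dim C_1^\perp-\dim(C_1^\perp\cap C_2)$, the second expression claimed for $c$.

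For the dimension, imposing symplectic orthogonality of $(\bu\mid\bv)\in\mathcal{S}$ against $(\bu'\mid\bz)$ and $(\bz\mid\bv')$ separately gives $\mathcal{S}_0=(C_1^\perp\cap C_2)\times(C_2^\perp\cap C_1)$, of dimension $2n-k_1-k_2-2c$; substituting $\dim\mathcal{S}=2n-k_1-k_2$ into $n-(\dim\mathcal{S}-c)$ yields $k=k_1+k_2-n+c$. For the distance, $(\ba\mid\bb)\in\mathcal{S}^{\perp_s}$ iff $\bb\cdot\bu=0$ for all $\bu\in C_1^\perp$ and $\ba\cdot\bv=0$ for all $\bv\in C_2^\perp$, i.e. $\bb\in C_1$ and $\ba\in C_2$, so $\mathcal{S}^{\perp_s}=C_2\times C_1$. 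A vector in $(C_2\times C_1)\setminus\mathcal{S}_0$ has $\ba\in C_2\setminus(C_1^\perp\cap C_2)$ or $\bb\in C_1\setminus(C_1\cap C_2^\perp)$, so using $\wt(\ba\mid\bb)\ge\max\{\wt(\ba),\wt(\bb)\}$ its weight is at least $\min\{d_H(C_2\setminus(C_1^\perp\cap C_2)),\,d_H(C_1\setminus(C_1\cap C_2^\perp))\}$; this value is attained by the pure vectors $(\ba\mid\bz)$ and $(\bz\mid\bb)$, which lie outside $\mathcal{S}_0$. Hence $d=\min\{d_H(C_1\setminus(C_1\cap C_2^\perp)),\,d_H(C_2\setminus(C_1^\perp\cap C_2))\}$, as claimed.

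The main obstacle is the input theorem of the first paragraph: converting the symplectic subspace $\mathcal{S}$ into an honest entanglement-assisted code, and in particular establishing that the operators acting trivially on the logical space are precisely those with Alice-part in the radical $\mathcal{S}_0$ rather than in all of $\mathcal{S}$ --- the hyperbolic generators act nontrivially on the fresh ebit registers, which is exactly why $\mathcal{S}_0$, and not $\mathcal{S}$, governs the distance formula. Everything else is the linear algebra of the two middle paragraphs; the only point there that needs a word of care is that a mixed vector $(\ba\mid\bb)$ can never undercut the best pure one, which is immediate from $\wt(\ba\mid\bb)\ge\max\{\wt(\ba),\wt(\bb)\}$.
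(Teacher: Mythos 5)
The paper offers no proof of this proposition at all --- it is imported verbatim as \cite[Theorem 4]{Galindo:2019} --- and your derivation is essentially the same route taken in that source: specialize the general symplectic entanglement-assisted stabilizer theorem to $\mathcal{S}=C_1^\perp\times C_2^\perp$, read off $c$ as half the rank of the Gram matrix $\bigl(\begin{smallmatrix}0&H_1H_2^T\\-H_2H_1^T&0\end{smallmatrix}\bigr)$, identify the radical $(C_1^\perp\cap C_2)\times(C_2^\perp\cap C_1)$ and the symplectic dual $C_2\times C_1$, and bound the symplectic weight by the pure-vector minimum. Your linear-algebra steps (the rank identity $\mathrm{rank}(H_1H_2^T)=\dim C_1^\perp-\dim(C_1^\perp\cap C_2)$, the dimension count $k=n-\dim\mathcal{S}+c$, and the two-sided distance argument) are all correct; the one ingredient you assume rather than prove --- the symplectic Gram--Schmidt construction with distance governed by $\mathcal{S}^{\perp_s}\setminus\mathcal{S}_0$ --- is exactly the general theorem of Galindo \emph{et al.} from which their Theorem 4 is deduced, so the reduction is sound, though it does mean the quantum content of the statement is taken as given, as you acknowledge.
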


A straightforward application of LCD codes to the Proposition~\ref{Prep:WildeEuclid} can produce some interesting
quantum codes. See Theorem~\ref{preliminaries:theorem1} and Corollary~\ref{corollary2}.

\begin{theorem}
    Let $C_1$ and $C_2$ be two linear codes with parameters $[n,k_1,d_1]_q$ and $[n,k_2,d_2]_q$, respectively, with
    $C_1^\perp\cap C_2 = \{\bf{0}\}$. Then there exists a QUENTA code with parameters $[[n,k_2, \min\{d_1,d_2\};n-k_1]]_q$.
    \label{preliminaries:theorem1}
\end{theorem}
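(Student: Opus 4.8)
The plan is to apply Proposition~\ref{Prep:WildeEuclid} directly to the pair $C_1$, $C_2$ and simplify each of the three outputs (the parameter $k$, the entanglement $c$, and the distance $d$) using the hypothesis $C_1^\perp \cap C_2 = \{\mathbf{0}\}$. First I would compute $c$. By Proposition~\ref{Prep:WildeEuclid}, $c = \dim C_1^\perp - \dim(C_1^\perp \cap C_2)$; since the intersection is trivial its dimension is $0$, and $\dim C_1^\perp = n - k_1$, so $c = n - k_1$. This already matches the claimed entanglement parameter and, incidentally, shows the code has maximal entanglement relative to $C_1$.

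Next I would substitute into the formula for the number of encoded qudits. Proposition~\ref{Prep:WildeEuclid} gives $k = k_1 + k_2 - n + c$, and plugging in $c = n - k_1$ yields $k = k_1 + k_2 - n + (n - k_1) = k_2$, as claimed. The remaining ingredient is the distance. The general bound is $d = \min\{d_H(C_1 \setminus (C_1 \cap C_2^\perp)),\, d_H(C_2 \setminus (C_1^\perp \cap C_2))\}$. For the second set, $C_1^\perp \cap C_2 = \{\mathbf{0}\}$ by hypothesis, so $C_2 \setminus \{\mathbf{0}\}$ has minimum Hamming weight exactly $d_2$. For the first set, $C_1 \setminus (C_1 \cap C_2^\perp) \subseteq C_1 \setminus \{\mathbf{0}\}$, so its minimum weight is at least $d_1$; hence $d \geq \min\{d_1, d_2\}$. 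Since the theorem only claims a lower bound of the form $\min\{d_1,d_2\}$ for the distance (consistent with the usual convention that an $[[n,k,d;c]]$ code corrects $\lfloor (d-1)/2\rfloor$ errors), this suffices.

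The one point requiring a little care is that Proposition~\ref{Prep:WildeEuclid} is stated as producing a QUENTA code with \emph{a} distance $d$ defined by that min-expression, so I should note that replacing the set $C_1 \setminus (C_1 \cap C_2^\perp)$ by the possibly larger set $C_1 \setminus \{\mathbf{0}\}$ can only decrease the minimum weight, giving a valid (if possibly non-tight) lower bound $\min\{d_1, d_2\}$ on the true distance; a code with larger minimum distance is still a code with minimum distance at least $\min\{d_1,d_2\}$. I do not expect any genuine obstacle here — this is a direct specialization of the general construction, and the only thing to watch is bookkeeping the direction of the distance inequality and being explicit that we are stating a guaranteed lower bound rather than an exact value.
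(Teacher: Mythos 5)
Your proposal is correct and follows essentially the same route as the paper: a direct specialization of Proposition~\ref{Prep:WildeEuclid} with $C_1^\perp\cap C_2=\{\mathbf{0}\}$, giving $c=n-k_1$ and hence $k=k_2$. You simply spell out the substitutions (and the lower-bound reading of the distance) that the paper's one-line proof leaves implicit.
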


\begin{proof}
            Since that $C_1^\perp\cap C_2 = \{\bf{0}\}$, from Proposition~\ref{Prep:WildeEuclid} we have that
            the QUENTA code constructed from $C_1$ and $C_2$ has parameters $[[n,k_2, \min\{d_1,d_2\};n-k_1]]_q$.
\end{proof}

\begin{corollary}
    Let $C$ be a LCD code with parameters $[n,k,d]_q$. Then there exists a maximal entanglement QUENTA code
    with parameters $[[n,k,d;n-k]]_q$. In particular, if $C$ is MDS then the QUENTA codes is also MDS.
    \label{corollary2}
\end{corollary}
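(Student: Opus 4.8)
The plan is to specialize Theorem~\ref{preliminaries:theorem1} to the case $C_1 = C_2 = C$ where $C$ is the given LCD code. First I would observe that the LCD hypothesis $hull(C) = C^\perp \cap C = \{\mathbf{0}\}$ is precisely the condition $C_1^\perp \cap C_2 = \{\mathbf{0}\}$ required by Theorem~\ref{preliminaries:theorem1} once we set $C_1 = C_2 = C$. Then, plugging $k_1 = k_2 = k$ and $d_1 = d_2 = d$ into the conclusion of that theorem yields a QUENTA code with parameters $[[n, k, \min\{d,d\}; n-k]]_q = [[n,k,d;n-k]]_q$. Since the entanglement parameter $c = n - k$ matches the definition of maximal entanglement ($c = n - k$ when $k_1 + k_2 - n + c = k$ becomes $k = k$), the constructed code has maximal entanglement, which establishes the main claim.

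For the second sentence (the MDS case), I would recall that a linear $[n,k,d]_q$ code is MDS exactly when $d = n - k + 1$ (the Singleton bound is met with equality), and that a QUENTA code $[[n,k,d;c]]_q$ is MDS when it attains the quantum Singleton bound. I would then invoke the appropriate quantum Singleton bound for QUENTA codes with maximal entanglement: for a maximal-entanglement code one has the bound $d \le n - k + 1$ (this is the specialization of the general entanglement-assisted Singleton bound $2(d-1) \le n - k + c$ with $c = n - k$, giving $d \le n - k + 1$). Hence if the classical code $C$ is MDS, so $d = n-k+1$, the resulting QUENTA code satisfies $d = n-k+1$ and therefore meets the quantum Singleton bound, i.e.\ it is an MDS QUENTA code.

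The only mild subtlety I anticipate is justifying the quantum Singleton bound used in the MDS claim: the excerpt has not stated it explicitly, so I would either cite the standard entanglement-assisted Singleton bound from the literature or note that it reduces, in the maximal-entanglement regime, to the classical bound $d \le n-k+1$. Everything else is a direct substitution into Theorem~\ref{preliminaries:theorem1}, so there is essentially no computational obstacle; the proof is a two-line corollary once the LCD condition is recognized as the hypothesis of that theorem.
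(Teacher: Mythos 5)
Your proposal is correct and follows essentially the same route as the paper: set $C_1 = C_2 = C$, note that the LCD condition is exactly the hypothesis $C_1^\perp \cap C_2 = \{\mathbf{0}\}$ of Theorem~\ref{preliminaries:theorem1}, and read off the parameters $[[n,k,d;n-k]]_q$. The only subtlety you flagged, the quantum Singleton bound needed for the MDS claim, is in fact stated in the paper as Eq.~(\ref{QSB}), and with $c=n-k$ it reduces to $d\leq n-k+1$ exactly as you argued, so your verification of the MDS part (which the paper leaves implicit) is fine.
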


\begin{proof}
            Let $C_1 = C_2 = C$. Since $C$ is LCD, then $\dim(hull(C))=0$. Then, from Theorem~\ref{preliminaries:theorem1},
            we have that there exists a QUENTA code with parameters $[[n,k, d; n-k]]_q$.
\end{proof}


\begin{proposition}\cite[Proposition 3 and Corollary 1]{Galindo:2019}
Let $C$ be a linear codes over $\mathbb{F}_{q^2}$ with parameters $[n,k,d]_{q^2}$,
$H$ be a parity check matrix of $C$, and $H^*$ be the $q$-th power of the transpose matrix of $H$.
Then there is a QUENTA code with parameters
$[[n,2k-n+c, d'; c]]_q$, where $d' = d_H(C\setminus(C\cap C^{\perp_h}))$,
where $d_H$ is the minimum Hamming weight of the vectors in the set, and

\begin{equation}
  c = {\rk} (H H^*) = \dim C^{\perp_h} - \dim (C^{\perp_h}\cap C)
\end{equation}is the number of required maximally entangled states.
\label{Prep:WildeHerm}
\end{proposition}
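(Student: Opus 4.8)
The plan is to reduce the statement to the $q$-ary \emph{symplectic} entanglement-assisted stabilizer formalism (the canonical-code, symplectic Gram--Schmidt argument that also underlies Proposition~\ref{Prep:WildeEuclid}), carrying the Hermitian inner product over $\mathbb{F}_{q^2}$ to the trace-symplectic form over $\mathbb{F}_q$. Concretely, I would first fix a coordinate-wise $\mathbb{F}_q$-linear bijection $\Psi\colon \mathbb{F}_{q^2}^n\to\mathbb{F}_q^{2n}$, built from a suitable $\mathbb{F}_q$-basis of $\mathbb{F}_{q^2}$, enjoying two properties: (i) the Hamming weight of $\mathbf v$ over $\mathbb{F}_{q^2}$ equals the symplectic weight of $\Psi(\mathbf v)$; and (ii) $\mathbf u\perp_h\mathbf v$ if and only if $\Psi(\mathbf u)$ and $\Psi(\mathbf v)$ are trace-symplectic orthogonal, so that $\Psi(D)^{\perp_s}=\Psi(D^{\perp_h})$ for every $\mathbb{F}_{q^2}$-linear code $D$.

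Next I would translate the three parameters. Since the rows of $H$ span $C^\perp$ and $C^{\perp_h}=(C^\perp)^{(q)}$ for $\mathbb{F}_{q^2}$-linear $C$, the entry $(HH^*)_{ij}$ is the Hermitian pairing of the $i$-th and $j$-th rows of $H$, so $\rk(HH^*)$ is the rank of the Hermitian form restricted to $C^\perp$; its radical is $C^\perp\cap(C^\perp)^{\perp_h}=C^\perp\cap C^{(q)}$, which has the same $\mathbb{F}_{q^2}$-dimension as $C\cap C^{\perp_h}$ because the $q$-power Frobenius is an $\mathbb{F}_q$-linear automorphism of $\mathbb{F}_{q^2}^n$ carrying one intersection onto the other. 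Hence $\rk(HH^*)=\dim C^{\perp_h}-\dim(C^{\perp_h}\cap C)=:c$. On the symplectic side, $S:=\Psi(C^{\perp_h})$ has $\mathbb{F}_q$-dimension $2(n-k)$ and $S\cap S^{\perp_s}=\Psi(C^{\perp_h}\cap C)$ has $\mathbb{F}_q$-dimension $2\dim(C^{\perp_h}\cap C)$, so the symplectic rank of $S$ is $2c$; in the symplectic formalism a symplectic rank of $2c$ is exactly what is compensated by $c$ maximally entangled pairs.

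Finally I would feed $S$ into the symplectic construction. The number of encoded qudits is $n-\dim_{\mathbb{F}_q}S+c=n-2(n-k)+c=2k-n+c$, and the distance is the minimum symplectic weight of the vectors in $S^{\perp_s}\setminus(S\cap S^{\perp_s})$. Using $S^{\perp_s}=\Psi\big((C^{\perp_h})^{\perp_h}\big)=\Psi(C)$ together with $S\cap S^{\perp_s}=\Psi(C^{\perp_h}\cap C)$ and properties (i)--(ii), this set equals $\Psi\big(C\setminus(C\cap C^{\perp_h})\big)$, and its minimum symplectic weight equals $d_H\big(C\setminus(C\cap C^{\perp_h})\big)=d'$. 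This yields the stated $[[n,2k-n+c,d';c]]_q$ code.

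The delicate points are the construction of a single map $\Psi$ satisfying both (i) and (ii) simultaneously, and the bookkeeping that turns an $\mathbb{F}_q$-symplectic rank of $2c$ into precisely $c$ ebits rather than $2c$; once these are in place, everything else is the routine parameter count above. In fact (ii) and the identity $\rk(HH^*)=\dim C^{\perp_h}-\dim(C^{\perp_h}\cap C)$ can simply be quoted from \cite{Galindo:2019}, after which only the translation of parameters through $\Psi$ remains.
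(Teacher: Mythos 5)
The paper does not prove this proposition at all—it is quoted verbatim from \cite{Galindo:2019} (their Proposition 3 and Corollary 1)—so there is no internal proof to compare against; your sketch is essentially the derivation used in that cited source: expand $\mathbb{F}_{q^2}^n$ over $\mathbb{F}_q$, identify Hermitian duality of $\mathbb{F}_{q^2}$-linear codes with trace-symplectic duality, and run the symplectic entanglement-assisted parameter count. Your bookkeeping is correct: $\dim_{\mathbb{F}_q}\Psi(C^{\perp_h})=2(n-k)$, radical $\Psi(C^{\perp_h}\cap C)$ of $\mathbb{F}_q$-dimension $2\dim(C^{\perp_h}\cap C)$, hence symplectic rank $2c$ giving $c$ maximally entangled pairs, $n-2(n-k)+c=2k-n+c$ logical qudits, and the distance read off from $S^{\perp_s}\setminus(S\cap S^{\perp_s})=\Psi\bigl(C\setminus(C\cap C^{\perp_h})\bigr)$; likewise the identification of $\rk(HH^*)$ with the rank of the Hermitian form on $C^\perp$, whose radical $C^\perp\cap C^{(q)}$ is carried by Frobenius onto $C^{\perp_h}\cap C$, is sound. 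One small caveat: your property (ii) is overstated as a pairwise equivalence—trace-symplectic orthogonality of a single pair does not imply Hermitian orthogonality—but the only fact you actually use is the dual-level identity $\Psi(D)^{\perp_s}=\Psi(D^{\perp_h})$ for $\mathbb{F}_{q^2}$-linear $D$, which holds exactly because $D$ is closed under multiplication by $\mathbb{F}_{q^2}$, so the argument stands.
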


In the same way as before, it possible to use hermitian LCD codes to derive
QUENTA codes with interesting properties. See the following theorem.

\begin{theorem}
    Let $C$ be a hermitian LCD code with parameters $[n,k,d]_{q^2}$. Then there exists a maximal entanglement QUENTA code
    with parameters $[[n,k,d;n-k]]_q$. In particular, if $C$ is MDS, then the QUENTA code is also MDS.
    \label{preliminaries:proposition2}
\end{theorem}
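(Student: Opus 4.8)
The plan is to derive this as the Hermitian analogue of Corollary~\ref{corollary2}, applying Proposition~\ref{Prep:WildeHerm} directly to the code $C$. First I would use the hypothesis that $C$ is hermitian LCD, i.e. $hull_H(C) = C^{\perp_h}\cap C = \{\bf{0}\}$. Substituting this into the formula for the number of maximally entangled pairs given in Proposition~\ref{Prep:WildeHerm} yields
\[
c = \dim C^{\perp_h} - \dim\bigl(C^{\perp_h}\cap C\bigr) = \dim C^{\perp_h} = n-k,
\]
so the resulting QUENTA code automatically has maximal entanglement.

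Next I would plug this value of $c$ into the parameter string $[[n,2k-n+c,d';c]]_q$ provided by Proposition~\ref{Prep:WildeHerm}. The dimension becomes $2k-n+(n-k)=k$ and the entanglement parameter is $n-k$. For the minimum distance, observe that $d' = d_H\bigl(C\setminus(C\cap C^{\perp_h})\bigr) = d_H(C\setminus\{\bf{0}\}) = d$, since the triviality of $hull_H(C)$ means that deleting $C\cap C^{\perp_h}$ removes only the zero vector, which does not change the minimum Hamming weight. This produces a QUENTA code with parameters $[[n,k,d;n-k]]_q$, establishing the first assertion. For the MDS part I would invoke the quantum Singleton bound $2(d-1)\le (n-k)+c$ for $[[n,k,d;c]]_q$ codes (see \cite{Galindo:2019}); if $C$ is MDS then $d=n-k+1$, so $2(d-1)=2(n-k)$ while the right-hand side equals $(n-k)+(n-k)=2(n-k)$ when $c=n-k$, so the bound is met with equality and the QUENTA code is itself MDS.

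I do not expect a genuine obstacle: the statement is essentially a bookkeeping consequence of Proposition~\ref{Prep:WildeHerm} together with the definition of a hermitian LCD code, paralleling the proof of Corollary~\ref{corollary2}. The only points that need a little care are the identity $d'=d$, which rests on the hull being trivial, and confirming that the MDS property transfers, for which one must have the correct form of the quantum Singleton bound for QUENTA codes on hand.
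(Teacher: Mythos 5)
Your proposal is correct and follows essentially the same route as the paper: the paper's proof simply notes that $hull_H(C)=\{\bf{0}\}$ and plugs $C$ into Proposition~\ref{Prep:WildeHerm} to get the parameters $[[n,k,d;n-k]]_q$. Your additional bookkeeping (computing $c=n-k$, verifying $d'=d$ via the trivial hull, and checking the quantum Singleton bound for the MDS claim) is just a more detailed write-up of that same argument.
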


\begin{proof}
            Since $C$ is a hermitian LCD code, then $\dim(hull_H(C)) = 0$. Therefore, using
            $C$ in the Proposition~\ref{Prep:WildeHerm},
            we have that there exists an QUENTA code with parameters $[[n,k, d; n-k]]_q$.
\end{proof}

A measurement of goodness for a QUENTA code is the quantum Singleton bound (QSB).
Let $[[n,k,d;c]]_q$ be an QUENTA code, then the QSB is given by

\begin{equation}\label{QSB}
            d \leq  \Big{\lfloor}\frac{n-k+c}{2}\Big{\rfloor} + 1.
\end{equation}The difference between the QSB and $d$ is called quantum Singleton defect, which is
$\lfloor\frac{n-k+c}{2}\rfloor + 1 - d$. When the quantum Singleton
defect is equal to zero (resp. one) the code is called a maximum distance separable quantum code (resp. almost maximum
distance separable quantum code) and it is denoted by MDS quantum code (resp. almost MDS quantum code).

%
%

\section{New Construction Methods for QUENTA Codes}
\label{sec:NewConstructions}
\subsection{Euclidean Construction}


In Proposition~\ref{Prep:WildeEuclid}, the connection between the
entanglement in an QUENTA code and the relative hull of two classical codes is shown. However,
the computation of such a hull can be difficult in some cases, however, as we are going to show
in Theorem~\ref{newContructions:dimhull}, this is not the case for AG codes.

The rank of a matrix that is the product of the two parity check matrices
of the classical codes is utilized to construct such a quantum code. However, such rank can be difficult to calculate in
some cases. As it will be shown,
it is possible to, instead of calculating such rank, relate the entanglement with the relative hull between
the two classical codes. For that, we need first to present the connection between the rank in
Proposition~\ref{Prep:WildeEuclid} and the relative hull.

\begin{theorem}
    Let $P_1, \ldots, P_n$ be pairwise distinct rational places of
    $F/\mathbb{F}_q$ and $D = P_1 + \cdots + P_n$. Choose
    divisors $G_1, G_2$ of $F/\mathbb{F}_q$ such that ${\supp} (G_1)\cap {\supp} (D) =
    \varnothing$ and ${\supp} (G_1)\cap {\supp} (D) = \varnothing$. Let
    $C_1 = C_{\mathcal{L}}(D,G_1)$ and $C_2 = C_{\mathcal{L}}(D,G_2)$.
    If $\operatorname{deg}(G_1^\perp \cup G_2)<n$,
    then $\dim(C_1^\perp\cap C_2) = \ell(G_1^\perp\cap G_2)$.
\label{newContructions:dimhull}
\end{theorem}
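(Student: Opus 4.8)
The plan is to express the relative hull $C_1^\perp\cap C_2$ as an AG code and then read off its dimension from the Riemann--Roch space of an intersection divisor. First I would recall from Proposition~\ref{AgCodes:DualCodeWeilDiff} that the Euclidean dual of $C_1 = C_{\mathcal{L}}(D,G_1)$ is $C_\Omega(D,G_1)$, and that there is a Weil differential $\eta$ with $\nu_{P_i}(\eta) = -1$ and $\eta_{P_i}(1) = 1$ for all $i$, so that $C_1^\perp = C_\Omega(D,G_1) = C_{\mathcal{L}}(D, G_1^\perp)$ where $G_1^\perp = D - G_1 + (\eta)$. (One should check that $\supp(G_1^\perp)\cap \supp(D) = \varnothing$; this is where the choice of $\eta$ with $\nu_{P_i}(\eta)=-1$ is used, since then the coefficient of $P_i$ in $D - G_1 + (\eta)$ is $1 - 0 + (-1) = 0$.) Thus both $C_1^\perp$ and $C_2$ are realized as $C_{\mathcal{L}}(D, \cdot)$ codes with the \emph{same} evaluation divisor $D$, namely $C_1^\perp = C_{\mathcal{L}}(D, G_1^\perp)$ and $C_2 = C_{\mathcal{L}}(D, G_2)$.

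Next I would invoke Proposition~\ref{lema_intersection_AG} with the pair of divisors $G_1^\perp$ and $G_2$. Its hypotheses require $\supp(G_1^\perp)\cap\supp(D) = \varnothing$ (just verified), $\supp(G_2)\cap\supp(D) = \varnothing$ (given), and $\operatorname{deg}(G_1^\perp\cup G_2) < n$, which is exactly the standing assumption of the theorem. The conclusion gives
\begin{equation}
C_1^\perp\cap C_2 = C_{\mathcal{L}}(D,G_1^\perp)\cap C_{\mathcal{L}}(D,G_2) = C_{\mathcal{L}}(D, G_1^\perp\cap G_2).
\end{equation}
Then the dimension of this AG code is computed via Proposition~\ref{AgCodes:proposition1}: in general $\dim C_{\mathcal{L}}(D, G_1^\perp\cap G_2) = \ell(G_1^\perp\cap G_2) - \ell((G_1^\perp\cap G_2) - D)$. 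Since $G_1^\perp\cap G_2 \le G_1^\perp\cup G_2$ we have $\operatorname{deg}((G_1^\perp\cap G_2) - D) \le \operatorname{deg}(G_1^\perp\cup G_2) - n < 0$, so $\ell((G_1^\perp\cap G_2)-D) = 0$ and hence $\dim(C_1^\perp\cap C_2) = \ell(G_1^\perp\cap G_2)$, as claimed.

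The one genuine obstacle is the bookkeeping around the Weil differential $\eta$ and the verification that $G_1^\perp$ has support disjoint from $D$, so that $C_1^\perp$ really is an $ev_D$-image of a Riemann--Roch space and Proposition~\ref{lema_intersection_AG} can legitimately be applied; once the two dual codes are brought to the common form $C_{\mathcal{L}}(D,\cdot)$ the rest is a direct chaining of the already-established propositions. It is worth noting that the degree bound $\operatorname{deg}(G_1^\perp\cup G_2) < n$ is used twice — once to license the intersection formula of Proposition~\ref{lema_intersection_AG}, and once to kill the correction term $\ell((G_1^\perp\cap G_2)-D)$ in the dimension formula — and a brief remark to that effect would make the proof transparent.
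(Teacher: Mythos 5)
Your proof is correct and follows essentially the same route as the paper's: rewrite $C_1^\perp$ as $C_{\mathcal{L}}(D,G_1^\perp)$ via Proposition~\ref{AgCodes:DualCodeWeilDiff}, apply Proposition~\ref{lema_intersection_AG} to get $C_1^\perp\cap C_2 = C_{\mathcal{L}}(D,G_1^\perp\cap G_2)$, and read off the dimension. You are in fact slightly more careful than the paper, since you explicitly verify $\supp(G_1^\perp)\cap\supp(D)=\varnothing$ and that $\ell\bigl((G_1^\perp\cap G_2)-D\bigr)=0$, two points the paper's proof leaves implicit.
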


\begin{proof}
  Since $\operatorname{deg}(G_1^\perp\cup G_2)<n$, we can use
  Proposition~\ref{lema_intersection_AG} for the codes $C_{\mathcal{L}}(D,G_1)^\perp$
  and $C_{\mathcal{L}}(D,G_2)$. Hence, it is easy to see from this proposition that
  $C_{\mathcal{L}}(D,G_1)^\perp\cap C_{\mathcal{L}}(D,G_2) = C_{\mathcal{L}}(D,G_1^\perp\cap G_2)$,
  which implies that
  $\dim(C_{\mathcal{L}}(D,G_1)^\perp\cap C_{\mathcal{L}}(D,G_2)) = \ell(G_1^\perp\cap G_2)$.
\end{proof}

Theorem~\ref{newContructions:dimhull} allows us to use AG codes from any function field to construct QUENTA codes,
which is given in detail in Theorem~\ref{newContructions:AGQUENTA}.
In particular, as it will be shown, we can use AG codes to derive MDS quantum codes and asymptotically good QUENTA codes.

\begin{theorem}
    Let $P_1, \ldots, P_n$ be pairwise distinct rational places of
    $F/\mathbb{F}_q$ and $D = P_1 + \cdots + P_n$. Choose
    divisors $G_1, G_2$ of $F/\mathbb{F}_q$ such that ${\supp} (G_i)\cap supp (D) =
    \varnothing$ and $2g-2<\operatorname{deg}(G_i)<n$, for $i=1,2$. Let
    $C_1 = C_{\mathcal{L}}(D,G_1)$ and $C_2 = C_{\mathcal{L}}(D,G_2)$.
    If $\operatorname{deg}(G_1^\perp\cup G_2)<n$,
    then there exists a QUENTA code with parameters $[[n,\operatorname{deg}(G_1+G_2)-2g+2-n+c, d;c]]_q$, where
    $d\geq n-\max\{\operatorname{deg}(G_1), \operatorname{deg}(G_2)\}$ and
    $c = n+g-1-\operatorname{deg}(G_1)-\ell(G_1^\perp\cap G_2)$.
\label{newContructions:AGQUENTA}
\end{theorem}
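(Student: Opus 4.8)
The plan is to assemble the parameters of the QUENTA code by feeding the two AG codes $C_1 = C_{\mathcal{L}}(D,G_1)$ and $C_2 = C_{\mathcal{L}}(D,G_2)$ into Proposition~\ref{Prep:WildeEuclid}, and then rewriting each of the three quantities $k_1 + k_2 - n + c$, $d$, and $c$ in terms of divisor degrees, the genus, and a single Riemann-Roch dimension. First I would record that, by Proposition~\ref{AgCodes:proposition1}, the hypothesis $2g-2 < \operatorname{deg}(G_i) < n$ gives $k_i = \dim C_i = \operatorname{deg}(G_i) - g + 1$ for $i = 1,2$, and that $d_i \geq n - \operatorname{deg}(G_i)$. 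Adding the two dimensions yields $k_1 + k_2 = \operatorname{deg}(G_1 + G_2) - 2g + 2$, so the logical dimension $k_1 + k_2 - n + c$ is exactly $\operatorname{deg}(G_1 + G_2) - 2g + 2 - n + c$ as claimed.

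Next I would handle the entanglement parameter $c$. Proposition~\ref{Prep:WildeEuclid} gives $c = \dim C_1^\perp - \dim(C_1^\perp \cap C_2)$. For the first term, note $C_1^\perp = C_{\mathcal{L}}(D,G_1)^\perp = C_\Omega(D,G_1)$ by Proposition~\ref{AgCodes:DualCodeWeilDiff}, and since $2g - 2 < \operatorname{deg}(G_1) < n$, Proposition~\ref{AgCodes:proposition2} gives $\dim C_1^\perp = n + g - 1 - \operatorname{deg}(G_1)$. For the second term, the hypothesis $\operatorname{deg}(G_1^\perp \cup G_2) < n$ is exactly what is needed to invoke Theorem~\ref{newContructions:dimhull}, which yields $\dim(C_1^\perp \cap C_2) = \ell(G_1^\perp \cap G_2)$. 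Subtracting gives $c = n + g - 1 - \operatorname{deg}(G_1) - \ell(G_1^\perp \cap G_2)$, matching the statement. I should take a moment here to confirm that $G_1^\perp$ is well-defined as a divisor with support disjoint from $\operatorname{supp}(D)$ — this follows from the formula $G_1^\perp = D - G_1 + (\eta)$ in Proposition~\ref{AgCodes:DualCodeWeilDiff} together with a suitable choice of the differential $\eta$, so that Theorem~\ref{newContructions:dimhull} genuinely applies.

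For the distance, Proposition~\ref{Prep:WildeEuclid} gives $d = \min\{d_H(C_1 \setminus (C_1 \cap C_2^\perp)),\, d_H(C_2 \setminus (C_1^\perp \cap C_2))\}$. Each of these set-restricted minimum weights is bounded below by the ordinary minimum distance of the corresponding code, so $d \geq \min\{d_1, d_2\} \geq \min\{n - \operatorname{deg}(G_1),\, n - \operatorname{deg}(G_2)\} = n - \max\{\operatorname{deg}(G_1), \operatorname{deg}(G_2)\}$, which is the bound asserted. Collecting the three pieces and invoking Proposition~\ref{Prep:WildeEuclid} once more produces the QUENTA code with the stated parameters.

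The main obstacle I anticipate is not the arithmetic but the bookkeeping around $G_1^\perp$: one must verify that the differential $\eta$ with $\nu_{P_i}(\eta) = -1$ and $\eta_{P_i}(1) = 1$ for all $i$ exists (it does, by the standard construction underlying Proposition~\ref{AgCodes:DualCodeWeilDiff}), that the resulting divisor $G_1^\perp = D - G_1 + (\eta)$ has support disjoint from $\operatorname{supp}(D)$, and that $\operatorname{deg}(G_1^\perp \cup G_2) < n$ is the correct and sufficient hypothesis to legitimately apply both Theorem~\ref{newContructions:dimhull} and, implicitly, Proposition~\ref{lema_intersection_AG} inside it. Once that is pinned down, everything else is a direct substitution, and the only mild subtlety is being careful that the degree bounds $2g - 2 < \operatorname{deg}(G_i) < n$ are exactly what license the clean dimension formulas for both $C_i$ and $C_i^\perp$.
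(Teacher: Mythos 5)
Your proposal is correct and follows essentially the same route as the paper: compute the $[n,\deg(G_i)-g+1,\geq n-\deg(G_i)]_q$ parameters of the two AG codes, get $\dim C_1^\perp = n+g-1-\deg(G_1)$ from the duality description, use Theorem~\ref{newContructions:dimhull} (licensed by $\deg(G_1^\perp\cup G_2)<n$) for $\dim(C_1^\perp\cap C_2)=\ell(G_1^\perp\cap G_2)$, and substitute into Proposition~\ref{Prep:WildeEuclid}. Your extra care about the distance bound and about $G_1^\perp = D-G_1+(\eta)$ having coefficient zero at each $P_i$ (so its support is disjoint from $\supp(D)$) only makes explicit what the paper leaves implicit.
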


\begin{proof}
  First of all, notice that the parameters of the AG codes $C_{\mathcal{L}}(D,G_1)$ and $C_{\mathcal{L}}(D,G_2)$ are
  $[n,\operatorname{deg}(G_1)-g+1, d_1\geq n-\operatorname{deg}(G_1)]_q$ and
  $[n,\operatorname{deg}(G_2)-g+1, d_2\geq n-\operatorname{deg}(G_2)]_q$, respectively, and the dimension of the
  Euclidean dual of $C_{\mathcal{L}}(D,G_1)$ is $n+g-1-\operatorname{deg}(G_1)$, by
  Proposition~\ref{AgCodes:DualCodeWeilDiff}.
  From Theorem~\ref{newContructions:dimhull} we have that
  $\dim(C_1^\perp\cap C_2)) = \ell(G_1^\perp\cap G_2)$. Hence, using Proposition~\ref{Prep:WildeEuclid} we
  derive the mentioned parameters of the QUENTA code

\end{proof}

\begin{corollary}
    Let $P_1, \ldots, P_n$ be pairwise distinct rational places of
    $F/\mathbb{F}_q$ and $D = P_1 + \cdots + P_n$. Choose
    divisors $G_1, G_2$ of $F/\mathbb{F}_q$ such that ${\supp} (G_i)\cap \supp (D) =
    \varnothing$ and $2g-2<\operatorname{deg}(G_i)<n$, for $i=1,2$.
    If $\operatorname{deg}(G_1^\perp\cup G_2)<n$ and
    $\operatorname{deg}(G_1^\perp\cap G_2) < 0$,
    then there exists a QUENTA code with parameters
    $[[n,\operatorname{deg}(G_2)-g+1, d;c]]_q$, where
    $d\geq n-\max\{\operatorname{deg}(G_1), \operatorname{deg}(G_2)\}$ and
    $c = n+g-1-\operatorname{deg}(G_1)$. In particular, if
    $G_1 = G_2=G$, then the QUENTA code has parameters $[[n,\deg(G)-g+1, d;n+g-1-\deg(G)]]_q$,
    where $d\geq n-\deg(G)$.
\label{newContructionsCorollary:AGQUENTA}
\end{corollary}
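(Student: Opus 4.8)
The plan is to derive this corollary as a direct specialization of Theorem~\ref{newContructions:AGQUENTA}, feeding in the extra hypothesis $\operatorname{deg}(G_1^\perp\cap G_2) < 0$ to simplify the expressions for $k$ and $c$. First I would invoke Theorem~\ref{newContructions:AGQUENTA} verbatim: under the stated hypotheses on the supports, the degrees $2g-2<\operatorname{deg}(G_i)<n$, and $\operatorname{deg}(G_1^\perp\cup G_2)<n$, there exists a QUENTA code with parameters $[[n,\operatorname{deg}(G_1+G_2)-2g+2-n+c,d;c]]_q$, where $d\geq n-\max\{\operatorname{deg}(G_1),\operatorname{deg}(G_2)\}$ and $c = n+g-1-\operatorname{deg}(G_1)-\ell(G_1^\perp\cap G_2)$.

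The key observation is that $\operatorname{deg}(G_1^\perp\cap G_2) < 0$ forces $\ell(G_1^\perp\cap G_2) = 0$: a divisor of negative degree has trivial Riemann-Roch space, since any nonzero $f\in\mathcal{L}(A)$ would give $0\le \deg((f)) = \deg((f)+A)-\deg(A)$, hence $\deg(A)\ge 0$. Substituting $\ell(G_1^\perp\cap G_2)=0$ into the formula for $c$ yields $c = n+g-1-\operatorname{deg}(G_1)$ immediately. For the dimension $k$, I would plug this $c$ into $\operatorname{deg}(G_1+G_2)-2g+2-n+c$; writing $\operatorname{deg}(G_1+G_2)=\operatorname{deg}(G_1)+\operatorname{deg}(G_2)$ (additivity of degree), the $\operatorname{deg}(G_1)$ terms cancel and the $n$ terms cancel, leaving $k = \operatorname{deg}(G_2)+2-2g + (g-1) = \operatorname{deg}(G_2)-g+1$, which matches the claimed parameters. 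The bound on $d$ is carried over unchanged.

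For the special case $G_1=G_2=G$, I would simply set $G_1=G_2=G$ in all the formulas: $\operatorname{deg}(G_2)-g+1$ becomes $\operatorname{deg}(G)-g+1$, the entanglement becomes $c=n+g-1-\operatorname{deg}(G)$, and $d\geq n-\max\{\operatorname{deg}(G),\operatorname{deg}(G)\}=n-\operatorname{deg}(G)$. One should note that the hypothesis $\operatorname{deg}(G_1^\perp\cap G_2)<0$ is still needed here; in the $G_1=G_2=G$ case it reads $\operatorname{deg}(G^\perp\cap G)<0$, i.e.\ the condition that makes $C$ essentially LCD-like, consistent with Corollary~\ref{corollary2}. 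The only mildly delicate point—and really the single step requiring care—is verifying that $\ell(A)=0$ whenever $\deg(A)<0$ and then tracking the arithmetic cancellations correctly; everything else is a citation of the preceding theorem. So the proof is essentially two sentences: quote Theorem~\ref{newContructions:AGQUENTA}, observe $\ell(G_1^\perp\cap G_2)=0$, and simplify.
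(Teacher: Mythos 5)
Your proposal is correct and follows exactly the route the paper intends: the corollary is a direct specialization of Theorem~\ref{newContructions:AGQUENTA}, using the standard fact that a divisor of negative degree has trivial Riemann--Roch space (so $\ell(G_1^\perp\cap G_2)=0$), which is the same reasoning the paper itself applies in the proofs of Theorems~\ref{quentaCodes:rational}--\ref{quentaCodes:elliptic}. Only a cosmetic remark: your justification of $\ell(A)=0$ for $\deg(A)<0$ is slightly garbled (the clean chain is $0\le\deg((f)+A)=\deg((f))+\deg(A)=\deg(A)$ since principal divisors have degree zero), but the fact and the subsequent arithmetic are right.
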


The first explicit description of a family of QUENTA codes constructed in this paper is shown in the following theorem. The rational
function field $\mathbb{F}_q(z)/\mathbb{F}_q$ is used to derive this family.

\begin{theorem}
    Let $q$ be a power of a prime. Consider $a_1,a_2, b_1,b_2$ positive integers
    such that $b_1\leq a_2$ and $b_2 \leq q-2-a_2$, with $a_1 + a_2 < q-1$ and $b_1 + b_2 < q-1$,
    then we have the following:

    \begin{itemize}
	  \item If $b_2 \geq a_1 + 1$, then there exists a QUENTA code with parameters\\
	  $$[[q-1, a_1 + b_1-1, \geq q - 1 - \max\{a_1+a_2,b_1 + b_2\}; q-2-(a_2+b_2)]]_{q}.$$
	  \item If $b_2 < a_1 + 1$, then there exists a QUENTA code with parameters\\
	  $$[[q-1, b_1 + b_2+1, \geq q - 1 - \max\{a_1+a_2,b_1 + b_2\}; q-2-(a_1+a_2)]]_{q}.$$
    \end{itemize}
    \label{quentaCodes:rational}
\end{theorem}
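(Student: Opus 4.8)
The plan is to specialize Theorem~\ref{newContructions:AGQUENTA} (or its Corollary~\ref{newContructionsCorollary:AGQUENTA}) to the rational function field $F = \mathbb{F}_q(z)/\mathbb{F}_q$. First I would fix the divisor $D$: take the $q-1$ rational places $P_\alpha$ corresponding to the nonzero elements $\alpha \in \mathbb{F}_q^*$, so $n = q-1$ and $g = 0$. For the code divisors I would choose $G_1 = a_1 P_0 + a_2 P_\infty$ and $G_2 = b_1 P_0 + b_2 P_\infty$, where $P_0$ and $P_\infty$ are the zero and pole of $z$; these have support disjoint from $\supp(D)$, and the hypotheses $a_1+a_2 < q-1$, $b_1+b_2 < q-1$ together with $g=0$ give $2g-2 = -2 < \deg(G_i) < n$. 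The resulting AG codes are (generalized) Reed--Solomon codes of lengths $q-1$ with the stated dimensions $a_1+a_2+1$ and $b_1+b_2+1$.

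Next I would compute $G_1^\perp$. By Proposition~\ref{AgCodes:DualCodeWeilDiff}, $G_1^\perp = D - G_1 + (\eta)$ for a suitable Weil differential $\eta$ with $\nu_{P_i}(\eta) = -1$ and $\eta_{P_i}(1) = 1$ on all evaluation places; on the rational function field with the standard choice one gets $(\eta) = -\sum_{\alpha\in\mathbb{F}_q^*} P_\alpha + (q-2)P_0 - 2P_\infty$ (or whatever normalization the authors use — the point is it is supported only on $P_0, P_\infty$ away from $D$, since $D = \sum_{\alpha} P_\alpha$ cancels), yielding $G_1^\perp = (q-2-a_1)P_0 + (-1-a_2)P_\infty$ up to the correct constants. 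Then $G_1^\perp \cap G_2 = \min\{q-2-a_1, b_1\}P_0 + \min\{-1-a_2, b_2\}P_\infty$. Using $b_1 \le a_2$ (hence $b_1 \le q-2-a_1$ follows once we also use the size constraints — this needs a small check) the $P_0$-coefficient is $b_1$, and using $b_2 \le q-2-a_2$ the $P_\infty$-coefficient is $\min\{-1-a_2,b_2\}$; since $b_2 \ge 0 > -1-a_2$, it equals $-1-a_2$. So $\deg(G_1^\perp \cap G_2) = b_1 - 1 - a_2 \le -1 < 0$, which is exactly the hypothesis of Corollary~\ref{newContructionsCorollary:AGQUENTA}, giving $\ell(G_1^\perp \cap G_2) = 0$ and $c = n + g - 1 - \deg(G_1) = q - 2 - (a_1+a_2)$.

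The two bullet cases then come from distinguishing which of the two classical codes plays the role of ``$C_2$'' — equivalently, from reapplying the argument with $G_1$ and $G_2$ swapped and taking whichever assignment yields a valid (smaller, or at least well-defined) entanglement parameter. Concretely: when $b_2 \ge a_1 + 1$, one checks that $\deg(G_2^\perp \cap G_1) < 0$ as well but the symmetric computation forces us to keep the assignment giving $c = q-2-(a_2+b_2)$ and dimension $k = \deg(G_1) - g + 1 = a_1 + a_2 + 1$... wait — here I would need to be careful: the stated dimension in the first bullet is $a_1 + b_1 - 1$, not $a_1+a_2+1$, so in that regime the relevant intersection divisor $G_1^\perp \cap G_2$ must actually have \emph{nonnegative} degree, making $\ell(G_1^\perp\cap G_2) = \deg(G_1^\perp\cap G_2) + 1 \neq 0$, and one must fall back to the general Theorem~\ref{newContructions:AGQUENTA} rather than its corollary. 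The key computation is then: the dimension $\deg(G_1+G_2) - 2g + 2 - n + c = (a_1+a_2+b_1+b_2) + 2 - (q-1) + (q-2-(a_2+b_2)) = a_1 + b_1 - 1$, matching the first bullet. Verifying the minimum distance bound $d \ge n - \max\{\deg G_1,\deg G_2\} = q-1-\max\{a_1+a_2,b_1+b_2\}$ is then immediate from Theorem~\ref{newContructions:AGQUENTA}.

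The main obstacle I anticipate is bookkeeping: getting the canonical/Weil-differential divisor on $\mathbb{F}_q(z)$ normalized correctly so that $\supp(\eta)$ meets $D$ in exactly the right way, and then carefully tracking the two $\min$'s in $G_1^\perp \cap G_2$ under the case split $b_2 \gtrless a_1 + 1$ to see precisely when $\deg(G_1^\perp\cap G_2)$ crosses zero — that threshold is exactly the dividing line between the two bullets, and it is also where one must check all the side conditions ($\deg(G_1^\perp \cup G_2) < n$, the degree of the union staying below $n$, etc.) remain satisfied so that Proposition~\ref{lema_intersection_AG} and hence Theorem~\ref{newContructions:dimhull} apply. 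Once the divisor arithmetic is pinned down, each bullet is a one-line substitution into Theorem~\ref{newContructions:AGQUENTA}.
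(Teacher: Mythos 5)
Your overall strategy is exactly the paper's: work over the rational function field with $D$ supported on the $q-1$ places of $\mathbb{F}_q^*$, take the two-point divisors $G_1=a_1P_0+a_2P_\infty$, $G_2=b_1P_0+b_2P_\infty$, compute $G_1^\perp$ from an explicit Weil differential via Proposition~\ref{AgCodes:DualCodeWeilDiff}, and feed $\deg(G_1^\perp\cup G_2)$ and $\ell(G_1^\perp\cap G_2)$ into Theorem~\ref{newContructions:AGQUENTA}. However, the step you yourself flag as the ``main obstacle'' is precisely where your computation goes wrong and is never repaired, and it is the heart of the proof. Your guessed divisor $(\eta)=-\sum_{\alpha}P_\alpha+(q-2)P_0-2P_\infty$ has degree $-3$ (a canonical divisor must have degree $2g-2=-2$) and puts the large coefficient at the wrong place: for $\eta=\frac{dx}{x^q-x}$ one has $(\eta)=(q-2)P_\infty-P_0-D$, hence $G_1^\perp=D-G_1+(\eta)=(-1-a_1)P_0+(q-2-a_2)P_\infty$, \emph{not} $(q-2-a_1)P_0+(-1-a_2)P_\infty$. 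With your version you obtain $G_1^\perp\cap G_2=b_1P_0+(-1-a_2)P_\infty$ of degree $b_1-a_2-1$, which is negative under the hypothesis $b_1\le a_2$ in \emph{both} regimes, so your argument can never produce the first bullet; you notice this inconsistency mid-proof but only conclude that ``the normalization needs to be pinned down,'' which leaves the case $b_2\ge a_1+1$ unproved. With the correct $G_1^\perp$ the hypotheses do exactly what they are designed for: $b_2\le q-2-a_2$ gives $G_1^\perp\cap G_2=(-1-a_1)P_0+b_2P_\infty$ and $G_1^\perp\cup G_2=b_1P_0+(q-2-a_2)P_\infty$, the condition $b_1\le a_2$ gives $\deg(G_1^\perp\cup G_2)=b_1+q-2-a_2<q-1$ so Theorem~\ref{newContructions:dimhull} applies, and $\deg(G_1^\perp\cap G_2)=b_2-a_1-1$ changes sign exactly at the threshold $b_2=a_1+1$, which is the source of the two bullets.

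Two further points. First, in the regime $b_2\ge a_1+1$ you never actually compute the entanglement: one needs $\ell(G_1^\perp\cap G_2)=b_2-a_1$ (genus $0$, nonnegative degree) and then $c=n+g-1-\deg(G_1)-\ell(G_1^\perp\cap G_2)=(q-1)-1-(a_1+a_2)-(b_2-a_1)=q-2-(a_2+b_2)$; you instead assume this value of $c$ in order to back out the dimension. Second, your closing arithmetic is not internally consistent: $(a_1+a_2+b_1+b_2)+2-(q-1)+\bigl(q-2-(a_2+b_2)\bigr)=a_1+b_1+1$, not $a_1+b_1-1$ as you wrote (the paper's own proof likewise yields $a_1+b_1+1$ from Theorem~\ref{newContructions:AGQUENTA}, consistent with the analogous Hermitian statement, so the ``$-1$'' in the theorem statement appears to be a typo rather than something your sum establishes). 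So the route is the right one, but the differential/divisor bookkeeping and the case-$1$ entanglement count constitute a genuine gap rather than a detail.
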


\begin{proof}
      Let $\mathbb{F}_q(z)/\mathbb{F}_q$ be the rational function field. The Weil differential
      $\eta = \frac{1}{x^q - x}dx$ satisfies the requirements of Proposition~\ref{AgCodes:DualCodeWeilDiff} and
      it has divisor given by $(\eta) = (q-2)P_\infty - P_0 - D$, where $P_\infty$ and $P_0$ are the place at infinity
      and the origin, respectively, and $D = \sum_{i = 1}^{q-1} P_i$, with $P_i$ being the remaining rational places.
      Assume that $G_1 = a_1 P_0 + a_2 P_{\infty}$ and $G_2 = b_1 P_0 + b_2 P_{\infty}$ and
      $C_1 = C_{\mathcal{L}}(D,G_1)$ and $C_2 = C_{\mathcal{L}}(D,G_2)$.
      Since $b_2 \leq q-2-a_2$, we have that
      $\operatorname{deg}(G_1^\perp\cup G_2)=b_1 + q - 2- a_2 < q-1$, by the hypothesis
      $b_1\leq a_2$, and $G_1^\perp\cap G_2 = (-1-a_1)P_0 + b_2 P_\infty$. Thus, we can use Theorem~\ref{newContructions:AGQUENTA}.
      For the first case, we have that $c = q - 1 - 1 - (a_1+a_2) - (b_2 - a_1) = q-2-(a_2+ b_2)$,
      since $\deg(G_1^\perp\cap G_2)\geq 0$. For the second case, when $b_2 < a_1 + 1$,
      we have that $\deg(G_1^\perp\cap G_2)< 0$, which implies $\ell(G_1^\perp\cap G_2)=0$ and
      $c = q-2-(a_1+a_2)$.
      The remaining claims are derived from Theorem~\ref{newContructions:AGQUENTA} and
      from the observation that $\operatorname{deg}(G_1) = a_1 + a_2$ and
      $\operatorname{deg}(G_2) = b_1 + b_2$.
\end{proof}

\begin{corollary}
      If $b_1 \leq a_2$, $b_2 \leq \min\{q-2-a_2,a_1\}$, with
      $a_1 + a_2=b_1+b_2 < q-1$, there are maximal entanglement almost MDS
      $[[q-1, b_1+b_2+1, \geq q - 1 - (a_1+a_2); q-2-(a_1+a_2)]]_{q}$ QUENTA codes.
      In particular, if $a_1\leq q-2-a_2$, then there exists a
      maximal entanglement MDS
      $[[q-1, a_1+a_2+1, \geq q - 1 - (a_1+a_2); q-2-(a_1+a_2)]]_{q}$ QUENTA codes.
\end{corollary}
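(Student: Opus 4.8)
The plan is to derive this corollary directly from Theorem~\ref{quentaCodes:rational} by specializing to the case $a_1+a_2 = b_1+b_2$ and then checking that the resulting QUENTA code meets the quantum Singleton bound with defect zero or one. First I would verify that the hypotheses of the corollary imply those of Theorem~\ref{quentaCodes:rational}: the assumption $b_2 \leq \min\{q-2-a_2, a_1\}$ gives both $b_2 \leq q-2-a_2$ and $b_2 \leq a_1 < a_1+1$, so we land in the second bullet of the theorem, which yields a code with parameters $[[q-1, b_1+b_2+1, \geq q-1-\max\{a_1+a_2, b_1+b_2\}; q-2-(a_1+a_2)]]_q$. Using $a_1+a_2 = b_1+b_2$, the maximum collapses to $a_1+a_2$, giving the claimed distance bound $d \geq q-1-(a_1+a_2)$, and the entanglement parameter is $c = q-2-(a_1+a_2)$.

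Next I would check that this is a maximal entanglement code, i.e. that $c = n-k$. Here $n = q-1$, $k = b_1+b_2+1 = a_1+a_2+1$, so $n-k = q-1-(a_1+a_2)-1 = q-2-(a_1+a_2) = c$, confirming maximal entanglement. Then I would evaluate the quantum Singleton bound from~\eqref{QSB}: $\lfloor (n-k+c)/2 \rfloor + 1 = \lfloor (2c)/2 \rfloor + 1 = c+1 = q-1-(a_1+a_2)$. Since the distance satisfies $d \geq q-1-(a_1+a_2) = c+1$, and the Singleton bound forces $d \leq c+1$, we get $d = q-1-(a_1+a_2)$ exactly, so the quantum Singleton defect is zero — wait, this would make it MDS, not almost MDS. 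I need to be careful: the distance bound in Theorem~\ref{quentaCodes:rational} comes from $d \geq n - \deg(G_i)$, which for AG codes on the rational function field is actually an equality (these are generalized Reed–Solomon codes), so the honest bound gives defect zero only when the underlying classical codes are genuinely MDS; the ``almost MDS'' terminology in the first part of the corollary presumably accounts for the fact that without the extra hypothesis $a_1 \leq q-2-a_2$ one only controls one of the two codes $C_1, C_2$ tightly, so I would re-derive the distance as $d = \min\{d_1, d_2\}$ and observe that only under $a_1 \leq q-2-a_2$ can we guarantee $\deg(G_1) = a_1+a_2$ is the binding constraint matching $\deg(G_2)$, yielding the strict MDS statement.

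For the ``in particular'' clause, I would add the hypothesis $a_1 \leq q-2-a_2$ and note it forces the two divisors $G_1$ and $G_2$ to have equal degree $a_1+a_2 = b_1+b_2$ with both codes being MDS generalized Reed–Solomon codes of the same length and dimension, so $\min\{d_1,d_2\} = q-1-(a_1+a_2)$ is attained and the defect is exactly zero; the dimension in this sub-case can also be written as $a_1+a_2+1$ since it equals $b_1+b_2+1$. I expect the main obstacle to be bookkeeping around \emph{why} the generic case yields only ``almost MDS'' while the special case yields ``MDS'': one must track precisely which of the intermediate inequalities ($d \geq n - \max\{\deg G_1, \deg G_2\}$ versus the true minimum distance of the constituent codes) is being invoked, and confirm that in the rational-function-field setting these bounds are in fact equalities, so that the only slack is the possible discrepancy $|\deg G_1 - \deg G_2|$, which the extra hypothesis eliminates. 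Everything else is routine substitution into~\eqref{QSB} and into the parameter formulas of Theorem~\ref{quentaCodes:rational}.
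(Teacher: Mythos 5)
Your core derivation is exactly the paper's (one-line) proof: specialize the second case of Theorem~\ref{quentaCodes:rational} to $a_1+a_2=b_1+b_2$ (the hypothesis $b_2\le a_1$ puts you in that case), read off the parameters $[[q-1,\,b_1+b_2+1,\,\ge q-1-(a_1+a_2);\,q-2-(a_1+a_2)]]_q$, note $c=n-k$ for maximal entanglement, and compare with the quantum Singleton bound~\eqref{QSB}. Up to that point your proposal is correct and is essentially all the paper does.

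Where you go astray is the detour in which you try to manufacture a mechanism explaining why the first clause is only ``almost MDS'' while the ``in particular'' clause is MDS. Your own Singleton computation already shows that, under the hypotheses of the \emph{first} clause, $d\ge q-1-(a_1+a_2)$ together with $d\le\lfloor(n-k+c)/2\rfloor+1=q-1-(a_1+a_2)$ pins $d$ to the bound, i.e.\ defect zero; and the explanation you propose --- that only the extra hypothesis $a_1\le q-2-a_2$ makes $\deg(G_1)$ the ``binding'' degree --- cannot work, because $a_1+a_2=b_1+b_2$ already forces $\deg G_1=\deg G_2$, both $C_{\mathcal{L}}(D,G_1)$ and $C_{\mathcal{L}}(D,G_2)$ are $[q-1,\,a_1+a_2+1,\,q-1-(a_1+a_2)]_q$ (generalized Reed--Solomon, hence MDS) in either clause, and the condition $a_1\le q-2-a_2$ enters none of the parameter formulas of Theorem~\ref{quentaCodes:rational} (it merely rewrites $\min\{q-2-a_2,a_1\}$ as $a_1$). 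So there is nothing to ``re-derive'': the almost-MDS/MDS wording of the two clauses is not a distinction produced by the theorem, and the paper's proof makes no attempt to justify it (it simply invokes the second case with $a_1+a_2=b_1+b_2$, even carrying over the condition ``$<e-2$'' from the elliptic-curve corollary). The correct and complete argument is the substitution, the maximal-entanglement check $c=n-k$, and the Singleton computation; you should stop there rather than argue for a difference in defect that the hypotheses do not create.
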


\begin{proof}
      Consider the second case of Theorem~\ref{quentaCodes:rational}. Then considering
      $a_1 + a_2=b_1+b_2 < e-2$, the result follows.
\end{proof}

%

The following theorem shows a construction of QUENTA codes derived from the Hermitian function field. Next,
the elliptic function field will be used to obtain maximal entanglement QUENTA codes with Singleton defect at
most one.

\begin{theorem}
    Let $q$ be a power of a prime and $a_1,a_2, b_1,b_2$ be positive integers
    such that $b_1 \leq a_2-q(q-1)$, $b_2 \leq q^3 + q(q-1)-2-a_2$, with $b_1 + b_2 < q^3 -1$ and $a_1 + a_2  < q^3 -1$.
    Then we have the following:

    \begin{itemize}
	  \item If $b_2\geq a_1 + 1$, then there exists a QUENTA code with parameters
	  $$[[q^3-1, a_1 + b_1 +1, \geq q^3 -1 - \max\{a_1 + a_2, b_1 + b_2\}; q^3 - 2 + q(q-1)-(a_2 + b_2) ]]_{q^2}.$$
	  \item If $b_2< a_1 + 1$, then there exists a QUENTA code with parameters
	  $$[[q^3-1, b_1 + b_2 +1-\frac{q(q-1)}{2}, \geq q^3 -1 - \max\{a_1 + a_2, b_1 + b_2\}; q^3 - 2 + \frac{q(q-1)}{2}-(a_1 + a_2) ]]_{q^2}.$$
    \end{itemize}
    \label{quentaCodes:hermitian}
\end{theorem}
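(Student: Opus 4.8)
The plan is to follow the template of the proof of Theorem~\ref{quentaCodes:rational}, now over the Hermitian function field $H/\mathbb{F}_{q^2}$ defined by $y^q+y=x^{q+1}$, of genus $g=q(q-1)/2$ and with exactly $q^3+1$ rational places. I would single out the place $P_\infty$ at infinity and the common zero $P_0$ of $x$ and $y$, and set $D=P_1+\cdots+P_{q^3-1}$ to be the sum of the remaining $q^3-1$ affine rational places, so $n=q^3-1$. For the Weil differential I would take $\eta=\dfrac{dx}{x-x^{q^2}}$, which has a simple pole with residue $1$ at each $P_i$ and thus meets the hypotheses of Proposition~\ref{AgCodes:DualCodeWeilDiff}; combining $(dx)=(2g-2)P_\infty$ (a standard consequence of Riemann--Hurwitz for $H/\mathbb{F}_{q^2}(x)$) with $(x^{q^2}-x)=P_0+D-q^3P_\infty$ gives
\[
  (\eta)=(q^3+q^2-q-2)\,P_\infty-P_0-D .
\]

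Next, put $G_1=a_1P_0+a_2P_\infty$, $G_2=b_1P_0+b_2P_\infty$, and $C_i=C_{\mathcal{L}}(D,G_i)$; their supports avoid ${\supp}(D)$. By Proposition~\ref{AgCodes:DualCodeWeilDiff}, $G_1^\perp=D-G_1+(\eta)=-(a_1+1)P_0+(q^3+q^2-q-2-a_2)P_\infty$. Since $b_1\ge 1>-(a_1+1)$ and, by hypothesis, $b_2\le q^3+q(q-1)-2-a_2$, one gets $G_1^\perp\cup G_2=b_1P_0+(q^3+q^2-q-2-a_2)P_\infty$, whose degree is below $n$ exactly when $b_1\le a_2-q(q-1)$ (the stated hypothesis), and $G_1^\perp\cap G_2=-(a_1+1)P_0+b_2P_\infty$, of degree $b_2-a_1-1$. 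From $b_1\le a_2-q(q-1)$ and $b_1\ge 1$ we also obtain $a_2\ge q(q-1)+1$, so $2g-2<\deg(G_1)<n$; checking likewise that $2g-2<\deg(G_2)<n$ in the relevant parameter range, Theorem~\ref{newContructions:AGQUENTA} applies to $C_1,C_2$ and delivers a QUENTA code over $\mathbb{F}_{q^2}$ of length $q^3-1$, with $d\ge q^3-1-\max\{a_1+a_2,\,b_1+b_2\}$ and $c=n+g-1-\deg(G_1)-\ell(G_1^\perp\cap G_2)$.

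The two bullets then correspond to the sign of $\deg(G_1^\perp\cap G_2)=b_2-a_1-1$. If $b_2<a_1+1$ this degree is negative, so $\ell(G_1^\perp\cap G_2)=0$; substituting into $c$ and into $k=\deg(G_1+G_2)-2g+2-n+c$ and simplifying yields $c=q^3-2+\tfrac{q(q-1)}{2}-(a_1+a_2)$ and $k=b_1+b_2+1-\tfrac{q(q-1)}{2}$, the second case. If $b_2\ge a_1+1$ the degree is nonnegative and I would compute $\ell(G_1^\perp\cap G_2)=b_2-a_1-g$; feeding this in gives $c=q^3-2+q(q-1)-(a_2+b_2)$ and, after the same cancellation, $k=a_1+b_1+1$, the first case. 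In both cases the distance bound $d\ge q^3-1-\max\{a_1+a_2,b_1+b_2\}$ is the one already produced by Theorem~\ref{newContructions:AGQUENTA}.

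The step needing most care is the evaluation $\ell(G_1^\perp\cap G_2)=\ell\big(-(a_1+1)P_0+b_2P_\infty\big)=b_2-a_1-g$ in the case $b_2\ge a_1+1$. This is immediate from the Riemann--Roch theorem once one knows $\deg(G_1^\perp\cap G_2)=b_2-a_1-1>2g-2$; verifying that this inequality (equivalently, that the complementary divisor $W-(G_1^\perp\cap G_2)=(a_1+1)P_0-(b_2-q^2+q+2)P_\infty$ has $\ell=0$) holds throughout the claimed region may require slightly sharpening the stated bounds, and is where I expect the real work to lie — the explicit Weierstrass semigroup $\langle q,q+1\rangle$ at $P_\infty$, together with the automorphism of $H$ swapping $P_0$ and $P_\infty$, is the tool for it. The rest is routine: keeping track of the chain $b_1\le a_2-q(q-1)$, $b_2\le q^3+q(q-1)-2-a_2$, $a_1+a_2<q^3-1$, $b_1+b_2<q^3-1$ to be sure the hypotheses $2g-2<\deg(G_i)<n$ and $\deg(G_1^\perp\cup G_2)<n$ of Theorems~\ref{newContructions:dimhull} and~\ref{newContructions:AGQUENTA} are all met.
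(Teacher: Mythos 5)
Your proposal follows essentially the same route as the paper's proof: the same Hermitian function field with $g=q(q-1)/2$ and $q^3+1$ rational places, the same divisors $G_1=a_1P_0+a_2P_\infty$, $G_2=b_1P_0+b_2P_\infty$, the same differential (up to sign) with $(\eta)=(q^3+q^2-q-2)P_\infty-P_0-D$, the same union/intersection computations showing $\deg(G_1^\perp\cup G_2)<n$ from $b_1\le a_2-q(q-1)$, and the same appeal to Theorem~\ref{newContructions:AGQUENTA} with the two cases split by the sign of $\deg(G_1^\perp\cap G_2)=b_2-a_1-1$. The step you flag as delicate, namely taking $\ell(G_1^\perp\cap G_2)=b_2-a_1-g$ when $b_2\ge a_1+1$, is treated no more carefully in the paper, which asserts this value directly from $\deg(G_1^\perp\cap G_2)\ge 0$; so your write-up matches the published argument and is, if anything, more explicit about where the remaining verification lies.
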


\begin{proof}
    Let $F/\mathbb{F}_{q^2}$ be the Hermitian function field defined by the equation
    \begin{equation*}
	y^q + y = x^{q+1}.
    \end{equation*}Then $F/\mathbb{F}_{q^2}$ has $1+q^3$ rational points and genus $g=q(q-1)/2$. Assume that
    $D = P_1 + \cdots + P_{q^3 - 1}$, $G_1 = a_1 P_{0} + a_2 P_\infty$, and
    $G_2 = b_1 P_{0} + b_2 P_\infty$, where $P_\infty$ and $P_{0}$
    are the rational places at infinity and the origin, respectively. Thus, one possible Weil
    differential satisfying Proposition~\ref{AgCodes:DualCodeWeilDiff} is given by $\eta = \frac{1}{x^{q^2}-x}dx$,
    which has divisor
    $(\eta) = -D - P_{0} + (q^3 + q(q-1) - 2)P_\infty$. The fact that
    $b_2 \leq q^3 + q(q-1)-2-a_2$ implies $G_1^\perp\cup G_2 = b_1P_{0} + (q^3 + q(q-1)-2-a_2)P_\infty$.
    By the hypothesis $b_1 \leq a_2-q(q-1)$, we have that $\deg(G_1^\perp\cup G_2)<q^3 - 1$, thus we
    can use Theorem~\ref{newContructions:AGQUENTA}. From this theorem, we derive
    that
    $G_1^\perp\cap G_2 = (-1-a_1)P_{0}+b_2P_\infty$.
    Hence, if $b_2\geq a_1 + 1$ we have that $\deg (G_1^\perp\cap G_2)\geq 0$,
    which implies
    $c = q^3 -1 + \frac{q(q-1)}{2}-1-(a_1+a_2)-(b_2-a_1)+\frac{q(q-1)}{2} = q^3 -2 + q(q-1)-(a_2+b_2)$.
    On the other hand, if $b_2< a_1 + 1$ we have that $\deg (G_1^\perp\cap G_2)< 0$,
    which implies $\ell(G_1^\perp\cap G_2)= 0$ and
    $c = q^3 - 2 + \frac{q(q-1)}{2}-(a_1 + a_2)$.
    Since $\deg(G_1)=a_1+a_2$ and $\deg(G_1)=b_1+b_2$, using Theorem~\ref{newContructions:AGQUENTA}
    and the values of $c$ computed, we derive the mentioned parameters for the QUENTA codes.
\end{proof}


\begin{theorem}
    Let $q = 2^m$, with $m\geq 1 $ an integer. Let $F/\mathbb{F}_q$
    be the elliptic function field with $e$ rational places and genus
    $g = 1$ defined by the equation

    \begin{equation}
	  y^2 + y = x^3 + bx + c,
    \end{equation}where $b,c\in\mathbb{F}_q$.
    Let $a_1,a_2, b_1,b_2$ be positive integers
    such that $b_1 \leq a_2$, $b_2 \leq e-1-a_2$, with $a_1 + a_2  < e-2$ and
    $b_1 + b_2  < e-2$.
    Then we have the following:
    \begin{itemize}
	  \item If $b_2\geq a_1+1$, then there exists a QUENTA code with parameters
	  $$[[e-2, a_1 + b_1 + 1, \geq e-2-\max\{a_1+a_2,b_1+b_2\}; e-1-(a_2+b_2)]]_q.$$
	  \item If $b_2< a_1+1$, then there exists a QUENTA code with parameters
	  $$[[e-2, b_1 + b_2, \geq e-2-\max\{a_1+a_2,b_1+b_2\}; e-2-(a_1+a_2)]]_q.$$
    \end{itemize}
    \label{quentaCodes:elliptic}
\end{theorem}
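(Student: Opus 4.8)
The plan is to follow the same recipe as in the proofs of Theorems~\ref{quentaCodes:rational} and~\ref{quentaCodes:hermitian}, feeding two AG codes on the elliptic function field into Theorem~\ref{newContructions:AGQUENTA}. Fix two rational places $P_0$ and $P_\infty$ of $F/\mathbb{F}_q$, let $D = P_1 + \cdots + P_{e-2}$ be the sum of the remaining $e-2$ rational places (so the length of the code will be $n = e-2$), and put $G_1 = a_1 P_0 + a_2 P_\infty$, $G_2 = b_1 P_0 + b_2 P_\infty$, $C_1 = C_{\mathcal{L}}(D,G_1)$, $C_2 = C_{\mathcal{L}}(D,G_2)$. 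Since the $a_i,b_i$ are positive and $a_1+a_2, b_1+b_2 < e-2$, we have $2g-2 = 0 < \deg(G_i) < n$, so by Proposition~\ref{AgCodes:proposition1} the codes $C_1, C_2$ have parameters $[n, a_1+a_2, \geq n-(a_1+a_2)]_q$ and $[n, b_1+b_2, \geq n-(b_1+b_2)]_q$, and the Euclidean dual of $C_1$ has dimension $n+g-1-\deg(G_1)$.

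The heart of the matter, and the step I expect to be the main obstacle, is to produce a Weil differential $\eta$ satisfying the hypotheses of Proposition~\ref{AgCodes:DualCodeWeilDiff}, namely $\nu_{P_i}(\eta) = -1$ and $\eta_{P_i}(1) = 1$ for $i = 1, \ldots, e-2$, and to identify its divisor. In the rational and Hermitian cases the differential $\frac{1}{x^q-x}\,dx$ does the job automatically because $x^q-x$ splits completely at the rational places; on an elliptic curve no such element is available in general, so one has to argue for the existence of a function $h$ whose divisor of zeros is the sum of all affine rational places. Such an $h$ exists because, in the Mordell--Weil group $E(\mathbb{F}_q)$, the sum of all rational points equals the neutral element (outside the exceptional situation of a single point of order two), so $\sum_{P}(P - P_\infty)$ is a principal divisor; after scaling, $\eta = dx/h$ then has a simple pole with residue $1$ at each affine rational place, and choosing $D$ to consist of all affine rational places other than $P_0$ gives $\nu_{P_i}(\eta) = -1$ there. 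Using that every canonical divisor has degree $2g-2 = 0$ one obtains $(\eta) = -D - P_0 + (e-1)P_\infty$, and hence, by Proposition~\ref{AgCodes:DualCodeWeilDiff}, $G_1^\perp = D - G_1 + (\eta) = -(a_1+1)P_0 + (e-1-a_2)P_\infty$. Getting the residue normalization right at every place simultaneously is the delicate point here.

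With this in hand the rest is bookkeeping parallel to the previous two theorems. Because $a_1 \geq 1$ and $b_2 \leq e-1-a_2$, one gets $G_1^\perp \cup G_2 = b_1 P_0 + (e-1-a_2)P_\infty$ and $G_1^\perp \cap G_2 = -(a_1+1)P_0 + b_2 P_\infty$, and from the hypotheses relating $b_1$, $a_2$ and $b_2$ one checks $\deg(G_1^\perp \cup G_2) < n$, so Theorem~\ref{newContructions:AGQUENTA} applies. Now split on the sign of $\deg(G_1^\perp \cap G_2) = b_2 - a_1 - 1$. If $b_2 \geq a_1 + 1$, then $\deg(G_1^\perp \cap G_2) \geq 0$ and the Riemann--Roch theorem gives $\ell(G_1^\perp\cap G_2) = b_2 - a_1 - 1$ (here one should note that at the borderline degree $0$ this uses that the divisor $(a_1+1)(P_\infty - P_0)$ is not principal, which is the generic case); substituting $c = n+g-1-\deg(G_1)-\ell(G_1^\perp\cap G_2)$ and $k = \deg(G_1+G_2)-2g+2-n+c$ from Theorem~\ref{newContructions:AGQUENTA} yields $c = e-1-(a_2+b_2)$ and $k = a_1+b_1+1$. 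If $b_2 < a_1 + 1$, then $\deg(G_1^\perp\cap G_2) < 0$, so $\ell(G_1^\perp\cap G_2) = 0$ and the same substitution gives $c = e-2-(a_1+a_2)$ and $k = b_1+b_2$. In either case the minimum distance satisfies $d \geq n-\max\{\deg(G_1),\deg(G_2)\} = e-2-\max\{a_1+a_2, b_1+b_2\}$, which is exactly the claimed bound, so the two bulleted families follow.
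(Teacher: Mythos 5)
Your overall route is the same as the paper's: the same divisors $D=P_1+\cdots+P_{e-2}$, $G_1=a_1P_0+a_2P_\infty$, $G_2=b_1P_0+b_2P_\infty$, the same appeal to Theorem~\ref{newContructions:AGQUENTA}, the same union/intersection computations and case split on $b_2$ versus $a_1+1$, and the same arithmetic for $c$, $k$ and $d$ (your parenthetical remark about the borderline case $\deg(G_1^\perp\cap G_2)=0$, where $\ell=\deg$ requires non-principality, is in fact a subtlety the paper passes over silently). The one point where you genuinely diverge is the construction of the Weil differential, and that is where your argument has a real gap rather than just a stylistic difference.

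Proposition~\ref{AgCodes:DualCodeWeilDiff} requires a differential $\eta$ with \emph{both} $\nu_{P_i}(\eta)=-1$ \emph{and} the residue normalization $\eta_{P_i}(1)=1$ at every $P_i$, $i=1,\dots,e-2$; without the latter, $C_1^\perp$ is only a coordinatewise-scaled (monomially equivalent) copy of $C_{\mathcal{L}}(D,G_1^\perp)$, and then Theorem~\ref{newContructions:dimhull}, which is exactly the step that yields the value of $c$, no longer applies as stated. Your Mordell--Weil argument (sum of all rational points is the identity) only produces a differential $dx/h$ with the prescribed divisor $(e-1)P_\infty-P_0-D$; and once the divisor is prescribed, the differential is unique up to a constant in $\mathbb{F}_q^*$ (two differentials with equal divisors have a constant ratio), so all $e-2$ residues can only be rescaled by a common factor --- if they are not already equal, no normalization makes them all $1$. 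You flag this yourself as ``the delicate point'' but never resolve it, so as written the entanglement computation is not justified. The paper instead exhibits the explicit differential $\eta=dx/\prod_{\alpha_i\in S}(x+\alpha_i)$, whose poles are precisely the affine rational places and whose divisor is immediate because $dx$ is the invariant differential of this Weierstrass model in characteristic two; to complete your proof you should either work with that explicit $\eta$ and address the residue condition directly, or replace the duality statement by the twisted form $C_{\mathcal{L}}(D,G)^\perp=\mathbf{a}\star C_{\mathcal{L}}(D,G^\perp)$ with $\mathbf{a}$ the residue vector and redo the hull computation accordingly. A second, smaller loose end: your existence argument needs $E(\mathbb{F}_q)$ to have no single point of order two; you should note that this holds here because these curves are supersingular in characteristic two, so $\#E(\mathbb{F}_q)=e$ is odd (as all entries of Table~\ref{table1} confirm), rather than leaving it as ``the generic case.''
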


\begin{proof}
    First of all, let $S = \{\alpha\in\mathbb{F}_q | \text{there exists }\beta\in\mathbb{F}_q
    \text{ such that } \beta^2 + \beta = \alpha^3 + b\alpha + c\}$. For each $\alpha\in S$, there are two
    $\beta\in\mathbb{F}_q$ satisfying the equation $\beta^2 + \beta = \alpha^3 + b\alpha + c$. Thus, for
    each $\alpha\in S$, there are two places corresponding to $x-$coordinate equal to $\alpha$.
    Hence, the set of all rational places is given
    by these $x$ and $y$ coordinates and the place at infinity, $P_{\infty}$.
    The number of rational places is denoted by $e$. So $e = |S| + 1$.
    Now, assume that
    $D = \sum_{i = 1}^{e-2} P_i$,
    $G_1 = a_1P_{0} + a_2 P_{\infty}$, and
    $G_2 = b_1P_{0} + b_2 P_{\infty}$, where
    $P_0, P_{1}, \ldots, P_{e-1}$ are
    pairwise distinct rational places. Additionally, let
    $\eta = \frac{dx}{\prod_{\alpha_i\in S}(x+\alpha_i)}$, then the divisor of the Weil differential $\eta$ is
    given by $(\eta) = (e-1) P_{\infty} - P_{0} - D$.
    The fact that
    $b_2 \leq e-1-a_2$ implies $G_1^\perp\cup G_2 = b_1P_{0} + (e-1-a_2)P_\infty$.
    By the hypothesis $b_1 \leq a_2$, we have that $\deg(G_1^\perp\cup G_2) < e - 2$,
    thus we can use Theorem~\ref{newContructions:AGQUENTA}. From this theorem,
    we derive
    that $G_1^\perp\cap G_2 = (-1-a_1)P_{0}+b_2P_\infty$.
    Hence, if $b_2\geq a_1 + 1$ we have that $\deg (G_1^\perp\cap G_2)\geq 0$,
    which implies $c = e-1-(a_2+b_2)$.
    On the other hand, if $b_2< a_1 + 1$ we have that $\deg (G_1^\perp\cap G_2)< 0$,
    which implies $\ell(G_1^\perp\cap G_2)= 0$ and
    $c = e-2-(a_1+a_2)$.
    Since $\deg(G_1)=a_1+a_2$ and $\deg(G_1)=b_1+b_2$,
    using Theorem~\ref{newContructions:AGQUENTA} and the values of $c$ computed,
    we derive the mentioned parameters for the QUENTA codes.
\end{proof}

\begin{corollary}
      Suppose that there exists an elliptic curve with $e$ rational places. Then for
      $b_1 \leq a_2$, $b_2 \leq \min\{e-1-a_2,a_1\}$, with
      $a_1 + a_2=b_1+b_2 < e-2$, there are maximal entanglement almost MDS
      $[[e-2, b_1+b_2, \geq e - 2 - (a_1+a_2); e-2-(a_1+a_2)]]_{q}$ QUENTA codes.
      In particular, if $a_1+a_2 < e-2$, then there exists a
      maximal entanglement almost MDS
      $[[e-2, a_1+a_2, \geq e - 2 - (a_1+a_2); e-2-(a_1+a_2)]]_{q}$ QUENTA code.
\end{corollary}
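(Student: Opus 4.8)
The plan is to obtain this corollary as a direct specialization of Theorem~\ref{quentaCodes:elliptic}, using its \emph{second} alternative, the case $b_2 < a_1 + 1$; the elliptic function field over $\mathbb{F}_q$ with $e$ rational places supplied by the hypothesis is the one to which that theorem is applied. First I would check that the assumption $b_2 \le \min\{e-1-a_2,\,a_1\}$ delivers exactly what is needed: it gives $b_2 \le e-1-a_2$, one of the standing hypotheses of Theorem~\ref{quentaCodes:elliptic}, and, since all quantities are integers, the bound $b_2 \le a_1$ is the same as $b_2 < a_1 + 1$, so we are genuinely in the second branch. The remaining hypotheses $b_1 \le a_2$, $a_1 + a_2 < e-2$, $b_1 + b_2 < e-2$ and positivity all follow at once from $b_1 \le a_2$ and $a_1 + a_2 = b_1 + b_2 < e-2$. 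Theorem~\ref{quentaCodes:elliptic} then yields a QUENTA code
\[
  [[\,e-2,\ b_1+b_2,\ \ge e-2-\max\{a_1+a_2,\,b_1+b_2\};\ e-2-(a_1+a_2)\,]]_q,
\]
and the equality $a_1 + a_2 = b_1 + b_2$ collapses the maximum, leaving $d \ge e-2-(a_1+a_2)$.

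Next I would verify the two qualifiers. Writing $n = e-2$, $k = b_1+b_2$, $c = e-2-(a_1+a_2)$, the equality $a_1+a_2 = b_1+b_2$ gives $c = (e-2)-(b_1+b_2) = n-k$, so the code has maximal entanglement. For the ``almost MDS'' assertion I would substitute these values into the quantum Singleton bound~\eqref{QSB}: since $n-k+c = 2(n-k)$ is even, the bound reads $d \le (n-k)+1 = e-2-(a_1+a_2)+1$, whereas the code satisfies $d \ge e-2-(a_1+a_2)$; hence its quantum Singleton defect is at most one, i.e.\ the code is MDS or almost MDS.

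For the ``in particular'' part I would produce a concrete admissible pair $(b_1,b_2)$ achieving $k = a_1+a_2$, namely $b_1 = a_2$ and $b_2 = a_1$. These are positive integers, they satisfy $b_1 = a_2 \le a_2$, and from $a_1 + a_2 < e-2$ we get $a_1 \le e-3 < e-1-a_2$, so $b_2 = a_1 \le \min\{e-1-a_2,\,a_1\}$; moreover $b_1+b_2 = a_1+a_2 < e-2$, so the hypotheses of the first part of the corollary are met, and substituting $b_1+b_2 = a_1+a_2$ into the code obtained above gives $[[\,e-2,\ a_1+a_2,\ \ge e-2-(a_1+a_2);\ e-2-(a_1+a_2)\,]]_q$. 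I do not expect a genuine obstacle here: the only delicate points are noting that integrality makes $b_2 \le a_1$ and $b_2 < a_1+1$ the same condition so that the second branch of Theorem~\ref{quentaCodes:elliptic} really applies, and checking that $n-k+c$ is even so that the floor in the quantum Singleton bound contributes nothing.
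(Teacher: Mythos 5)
Your proposal is correct and follows essentially the same route as the paper, which likewise obtains the corollary by specializing the second case ($b_2 < a_1+1$) of Theorem~\ref{quentaCodes:elliptic} with $a_1+a_2=b_1+b_2<e-2$. Your additional checks (integrality making $b_2\le a_1$ equivalent to $b_2<a_1+1$, the Singleton-defect computation, and the choice $b_1=a_2$, $b_2=a_1$ for the ``in particular'' part) simply fill in details the paper leaves implicit.
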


\begin{proof}
      Consider the second case of Theorem~\ref{quentaCodes:elliptic}. Then considering
      $a_1 + a_2=b_1+b_2 < e-2$, the result follows.
\end{proof}

It is shown in Table~\ref{table1} the numbers of rational points of several elliptic curves are given
depending on the value of $s$, the degree of the extension $\mathbb{F}_{2^s}$ \cite{Menezes:Book}.

\begin{table}[h]
\begin{center}
\caption{Some elliptic Curves over $\mathbb{F}_q$ ($q = 2^s$)
and their number of rational places \label{table1}}
\begin{tabular}{c|c|c}
\hline\hline
    Elliptic curve                                                                & $s$     & Number of rational places ($e$)\\
    \hline\hline \multirow{3}{5em}{$y^2 + y = x^3$}                               & odd $s$ & $q+1$\\
                                                                                  & $s\equiv 0 \mod 4$ & $q+1 - 2\sqrt{q}$\\
	                                                                          & $s\equiv 0 \mod 2$ & $q+1 + 2\sqrt{q}$\\
    \hline  \multirow{2}{6em}{$y^2 + y = x^3 + x$}                                & $s\equiv 1,7 \mod 8$ & $q+1 + \sqrt{2q}$\\
                                                                                  & $s\equiv 3,5 \mod 8$ & $q+1 - \sqrt{2q}$\\
    \hline  \multirow{2}{8em}{$y^2 + y = x^3 + x + 1$}                            & $s\equiv 1,7 \mod 8$ & $q+1 - \sqrt{2q}$\\
                                                                                  & $s\equiv 3,5 \mod 8$ & $q+1 + \sqrt{2q}$\\
    \hline  $y^2 + y = x^3 + \delta x$ ($Tr (\delta) = 1$)                        & $s$ even & $q+1$\\
    \hline  \multirow{2}{13em}{$y^2 + y = x^3 + \delta$ ($Tr (\delta) = 1$)}      & $s\equiv 0 \mod 4$ & $q+1 + 2\sqrt{q}$\\
                                                                                  & $s\equiv 2 \mod 4$ & $q+1 - 2\sqrt{q}$\\
\hline
\end{tabular}
\end{center}
\end{table}


\begin{remark}
In this section, two-point AG codes have been used to construct QUENTA codes. The reason for this is that
the QUENTA codes derived from one-point AG codes have trivial parameters; i.e., they have either zero entanglement,
which it are codes that are derived from the standard quantum stabilizer construction
(e.g. quantum codes from the CSS construction \cite{NielsenChuang:Book}),
or zero dimension, what make them not interesting for this paper.
\end{remark}

\subsection{Hermitian Construction}


In opposition to the Euclidean dual of an AG code, there is no general formula to describe the Hermitian dual of an AG code
as in Definition~\ref{AgCodes:definitionAG}. However, describing an AG code via a basis of evaluated elements that belong to
a Riemann-Roch space, we can obtain the information that we need from the Hermitian dual code. Before doing that,
our approach to calculate the dimension of the intersection between an AG code and its Hermitian dual is shown.

\begin{proposition}
        Let $C$ be a linear code over $\mathbb{F}_{q^2}$ with length $n$ and $C^{\perp_h}$ its dual. Then
        $\dim(C\cap C^{\perp_h}) = \dim(C^\perp\cap C^q)$.
        \label{newConstructions:IntersectionCodes}
\end{proposition}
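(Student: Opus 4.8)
The plan is to show the two spaces coincide by exhibiting a bijection (in fact an $\mathbb{F}_{q^2}$-semilinear isomorphism) between $C \cap C^{\perp_h}$ and $C^\perp \cap C^q$, which is enough since semilinear maps preserve $\mathbb{F}_{q^2}$-dimension. Recall that for $\mathbf{x} \in \mathbb{F}_{q^2}^n$ we write $\mathbf{x}^q = (x_1^q,\ldots,x_n^q)$ and that $C^q := \{\mathbf{x}^q : \mathbf{x} \in C\}$. The key observation is the following reformulation of the Hermitian dual: for any linear $C$ over $\mathbb{F}_{q^2}$ one has $C^{\perp_h} = (C^q)^\perp$, because $\mathbf{x} \in C^{\perp_h}$ means $\mathbf{x}\cdot\mathbf{c}^q = 0$ for all $\mathbf{c}\in C$, i.e.\ $\mathbf{x}$ is Euclidean-orthogonal to every vector of $C^q$. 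Equivalently, applying the (bijective) coordinatewise Frobenius map $\phi:\mathbf{x}\mapsto\mathbf{x}^q$, one checks $\phi(C^{\perp_h}) = (C^\perp)^q$ — indeed $\phi$ is an $\mathbb{F}_{q^2}$-semilinear automorphism of $\mathbb{F}_{q^2}^n$ that turns the Hermitian form into (a Frobenius twist of) the Euclidean form, so it sends the $\perp_h$-dual of $C$ to the $\perp$-dual of $\phi(C) = C^q$, and $(C^q)^{\perp}$ is exactly $(C^\perp)^q$ by taking $q$-th powers of the orthogonality relation $\mathbf{x}\cdot\mathbf{c}=0 \Leftrightarrow \mathbf{x}^q\cdot\mathbf{c}^q = 0$.

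With this in hand the proof is a short computation. First I would apply $\phi$ to the left-hand intersection:
\begin{equation*}
  \phi\big(C \cap C^{\perp_h}\big) = \phi(C) \cap \phi(C^{\perp_h}) = C^q \cap (C^\perp)^q = (C \cap C^\perp)^q,
\end{equation*}
using that $\phi$ is a bijection (so it commutes with intersection) together with $\phi(C) = C^q$ and $\phi(C^{\perp_h}) = (C^\perp)^q$ from the previous paragraph. Since $\phi$ preserves $\mathbb{F}_{q^2}$-dimension, $\dim(C\cap C^{\perp_h}) = \dim\big((C\cap C^\perp)^q\big) = \dim(C\cap C^\perp)$. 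On the other hand, applying the same reasoning to the right-hand side and using $C^{qq} = C$ (since $x \mapsto x^{q^2} = x$ on $\mathbb{F}_{q^2}$), one gets $\dim(C^\perp \cap C^q) = \dim\big((C^\perp\cap C^q)^q\big) = \dim(C^{\perp q} \cap C) = \dim\big((C^q)^{\perp_h}\cap C\big)$; but $(C^q)^{\perp_h} = ((C^q)^q)^\perp = C^\perp$ shows directly that $C^\perp \cap C^q$ has the same dimension as $C \cap C^\perp$ as well. Comparing the two yields $\dim(C\cap C^{\perp_h}) = \dim(C^\perp \cap C^q)$ as claimed. (Alternatively, and more cleanly, one notes $\phi$ restricts to a bijection $C^\perp \cap C^q \to (C^\perp)^q \cap C^{qq} = C^{\perp_h} \cap C$, which gives the equality of dimensions in one line.)

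The only real subtlety — which I would make sure to spell out — is the identity $C^{\perp_h} = (C^q)^\perp = (C^\perp)^q$ and the fact that the coordinatewise Frobenius $\phi$ is an $\mathbb{F}_{q^2}$-semilinear bijection, so that it commutes with the lattice operations on subspaces and preserves dimension. Everything else is formal. I do not expect a genuine obstacle here; the point of the proposition is simply to convert a Hermitian-dual intersection, for which no divisor-theoretic description of $C^{\perp_h}$ is available when $C$ is an AG code, into a Euclidean-dual intersection $C^\perp \cap C^q$, where $C^\perp$ can be handled via Proposition~\ref{AgCodes:DualCodeWeilDiff} and $C^q$ is again an AG code (obtained by raising the defining functions to the $q$-th power), so that Proposition~\ref{lema_intersection_AG} and Theorem~\ref{newContructions:dimhull} can subsequently be brought to bear.
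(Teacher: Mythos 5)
Your overall strategy is the paper's: establish $C^{\perp_h}=(C^q)^\perp=(C^\perp)^q$ and then use that the coordinatewise Frobenius $\phi(\mathbf{x})=\mathbf{x}^q$ is a dimension-preserving bijection commuting with intersections. In fact your closing parenthetical remark --- $\phi$ restricts to a bijection from $C^\perp\cap C^q$ onto $(C^\perp)^q\cap C^{q^2}=C^{\perp_h}\cap C$ --- is exactly the paper's proof, which first shows $\mathbf{x}\in C^{\perp_h}\iff \mathbf{x}^q\in C^\perp$ and then writes $\dim(C\cap C^{\perp_h})=\dim(C\cap(C^\perp)^q)=\dim(C^q\cap C^\perp)$.

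However, the main displayed computation contains a genuine Frobenius-twist error. Since $C^{\perp_h}=(C^\perp)^q=\phi(C^\perp)$ and $\phi^2$ is the identity on $\mathbb{F}_{q^2}^n$, one has $\phi(C^{\perp_h})=C^\perp$, \emph{not} $(C^\perp)^q$ as you assert; hence your chain $\phi(C\cap C^{\perp_h})=C^q\cap(C^\perp)^q=(C\cap C^\perp)^q$ and the ensuing claim $\dim(C\cap C^{\perp_h})=\dim(C\cap C^\perp)$ are false in general: the Hermitian hull and the Euclidean hull need not have the same dimension. For a counterexample over $\mathbb{F}_4$ (so $q=2$), take $C=\langle(1,\omega,\omega^2)\rangle$ with $\omega^2+\omega+1=0$: then $\mathbf{v}\cdot\mathbf{v}=1+\omega^2+\omega^4=0$, so $\dim(C\cap C^\perp)=1$, while $\mathbf{v}\cdot\mathbf{v}^2=1+\omega^3+\omega^6=1\neq 0$, so $\dim(C\cap C^{\perp_h})=0$. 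The same slip reappears on the right-hand side, where you identify $(C^\perp)^q$ with $(C^q)^{\perp_h}$; in fact $(C^q)^{\perp_h}=((C^q)^\perp)^q=C^\perp$, whereas $(C^\perp)^q=C^{\perp_h}$ --- had you used the latter, the line $\dim(C^\perp\cap C^q)=\dim((C^\perp)^q\cap C)$ would finish the proof immediately. So the argument you should keep is precisely the one-line version at the end; the two displayed intermediate computations should be deleted or corrected, as they assert identities that fail whenever $C^\perp$ is not Galois-invariant.
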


 \begin{proof}
        Although it is well known that $C^{\perp_h} = (C^\perp)^q$ \cite{Huffman:2003}, we present this result here for completeness

        \begin{eqnarray*}
                \bx\in C^{\perp_h} &\iff \bx \cdot \bc^q = 0, \quad &\forall \bc \in C,\\
                			       &\hspace{0.80cm}\iff \sum_{i=1}^n x_i c_i^q = 0, \quad &\forall \bc \in C,\\
                			       &\hspace{0.80cm}\iff \sum_{i=1}^n x_i^q c_i = 0, \quad &\forall \bc \in C,\text{ since }c_i^{q^2} = c_i\\
                			       &\hspace{-0.50cm}\iff \bx^q \in C^\perp,\\
                			       &\hspace{-0.22cm}\iff \bx \in (C^\perp)^q.
        \end{eqnarray*}Thus, we see that

        \begin{eqnarray*}
                \dim(C \cap C^{\perp_h})&=& \dim(C \cap (C^\perp)^q)\\
                	                    &=& \dim(C^q \cap C^\perp).
        \end{eqnarray*}Hence, we have $\dim(C\cap C^{\perp_h}) = \dim(C^\perp\cap C^q)$.
\end{proof}

Proposition~\ref{newConstructions:IntersectionCodes} shows a new way to compute the dimension of $C \cap C^{\perp_h}$.
To be able to use it for AG codes, we need to describe the linear code $C^q$.
Proposition~\ref{newConstructions:basisforCq}
approaches this by showing that it is possible to compute a basis to $C^q$ from a basis of $C$.
And Theorem~\ref{newConstructions:DimHermitianIntersection} describes how to compute the intersection
of two vector space (in particular, linear codes) when the basis of each one belongs to the
same larger set of which is a basis of $\mathbb{F}_{q^2}^n$.

\begin{proposition}
        Let $C$ be a linear code over $\mathbb{F}_{q^2}$ with length $n$ and dimension $k$. If $\{\bx_1, \ldots, \bx_k\}$
        is a basis of $C$, then a basis of $C^q$ is given by the set $\{\bx_1^q, \ldots, \bx_k^q\}$.
        \label{newConstructions:basisforCq}
\end{proposition}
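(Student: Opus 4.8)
The statement to prove is that if $\{\bx_1,\dots,\bx_k\}$ is a basis of a linear code $C$ over $\mathbb{F}_{q^2}$, then $\{\bx_1^q,\dots,\bx_k^q\}$ is a basis of $C^q$, where $\bx^q=(x_1^q,\dots,x_n^q)$ and $C^q=\{\bc^q\mid \bc\in C\}$. The plan is to verify two things: that the proposed set spans $C^q$, and that it is linearly independent over $\mathbb{F}_{q^2}$. The key observation throughout is that the coordinatewise $q$-th power map $\varphi\colon \mathbb{F}_{q^2}^n\to\mathbb{F}_{q^2}^n$, $\bv\mapsto\bv^q$, is \emph{not} $\mathbb{F}_{q^2}$-linear, but it is additive and it is semilinear: $\varphi(\lambda\bv)=\lambda^q\varphi(\bv)$ for $\lambda\in\mathbb{F}_{q^2}$. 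Since the Frobenius $\lambda\mapsto\lambda^q$ is a field automorphism of $\mathbb{F}_{q^2}$ (indeed an involution, as $\lambda^{q^2}=\lambda$), $\varphi$ is a bijection of $\mathbb{F}_{q^2}^n$ with inverse again given by coordinatewise $q$-th power.

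For spanning: take any element of $C^q$; it is $\bc^q$ for some $\bc\in C$. Write $\bc=\sum_{i=1}^k \mu_i\bx_i$ with $\mu_i\in\mathbb{F}_{q^2}$. Applying $\varphi$ and using additivity and semilinearity gives $\bc^q=\sum_{i=1}^k \mu_i^q\,\bx_i^q$, which exhibits $\bc^q$ as an $\mathbb{F}_{q^2}$-linear combination of the $\bx_i^q$. Hence $\{\bx_1^q,\dots,\bx_k^q\}$ spans $C^q$. For linear independence: suppose $\sum_{i=1}^k \lambda_i\bx_i^q=\bz$ with $\lambda_i\in\mathbb{F}_{q^2}$. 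Since each $\lambda_i$ has a $q$-th root in $\mathbb{F}_{q^2}$ (namely $\lambda_i^q$, because $(\lambda_i^q)^q=\lambda_i$), write $\lambda_i=\nu_i^q$ with $\nu_i=\lambda_i^q$. Then $\sum_i \nu_i^q\bx_i^q=\varphi\!\left(\sum_i \nu_i\bx_i\right)=\bz$, and since $\varphi$ is injective, $\sum_i\nu_i\bx_i=\bz$. Linear independence of the original basis forces all $\nu_i=0$, hence all $\lambda_i=0$.

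Finally, $C^q$ is indeed a linear subspace of $\mathbb{F}_{q^2}^n$ (closed under addition by additivity of $\varphi$, and under scalar multiplication by semilinearity together with surjectivity of Frobenius), so having produced a spanning linearly independent set of size $k$ we conclude it is a basis and $\dim C^q=k=\dim C$. I do not anticipate a genuine obstacle here; the only subtlety — and the point most worth stating carefully — is that one must not treat $\varphi$ as $\mathbb{F}_{q^2}$-linear, and instead track the Frobenius twist on the coefficients, using that this twist is a bijection on $\mathbb{F}_{q^2}$ so that it can be undone on both sides of any scalar relation.
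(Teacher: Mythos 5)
Your proof is correct and follows essentially the same route as the paper: both arguments rest on the additivity and Frobenius-semilinearity of the coordinatewise $q$-th power map, which immediately gives that $\{\bx_1^q,\dots,\bx_k^q\}$ spans $C^q$. The only difference is in the finish: you verify linear independence (and the subspace property of $C^q$) directly by undoing the Frobenius twist, whereas the paper concludes by noting that $C$ and $C^q$ are isomorphic, hence of equal dimension — your version is slightly more explicit but not a different argument.
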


\begin{proof}
        First of all, notice that for any $\bc'\in C^q$ there is a unique $\bc\in C$ such that $\bc' = \bx^q$. Since that
        $\{\bx_1, \ldots, \bx_k\}$ is a basis for $C$, then $\bc = \sum_{i=1}^{k}c_i \bx_i$, for $c_i\in\mathbb{F}_{q^2}$.
        Therefore, $\bc' = \bc^q = \sum_{i=1}^{k}c_i^q \bx_i^q = \sum_{i=1}^{k}c'_i \bx_i^q$. Since that $C$ and $C^q$ are isomorphic,
        then they have the same dimension which implies that $\{\bx_1^q, \ldots, \bx_k^q\}$ is a basis for $C^q$.
\end{proof}


For sake of self-contained, we present the following lemma.

\begin{lemma}
        Let $B$ be a basis for $\mathbb{F}_{q^2}^n$ and $B_1$ and $B_2$ be two subsets of $B$.
        Denoting by $V_1$ and
        $V_2$ the subspaces generated by $B_1$ and $B_2$, respectively, then we have that
        $\dim(V_1\cap V_2) = |B_1\cap B_2|$.
        \label{newConstructions:DimHermitianIntersection}
\end{lemma}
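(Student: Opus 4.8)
The plan is to prove the two inclusions $|B_1 \cap B_2| \le \dim(V_1 \cap V_2)$ and $\dim(V_1 \cap V_2) \le |B_1 \cap B_2|$ separately, using throughout the key fact that $B$ is a \emph{basis} of $\mathbb{F}_{q^2}^n$, so every vector has a \emph{unique} expansion as a linear combination of elements of $B$. The easy inclusion is the first one: the set $B_1 \cap B_2$ is contained in $B_1$, hence in $V_1$, and likewise in $V_2$, so $\mathrm{span}(B_1 \cap B_2) \subseteq V_1 \cap V_2$; since $B_1 \cap B_2 \subseteq B$ it is a linearly independent set, so its span has dimension exactly $|B_1 \cap B_2|$, giving $|B_1 \cap B_2| \le \dim(V_1 \cap V_2)$.

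For the reverse inequality, I would take an arbitrary $\bv \in V_1 \cap V_2$ and examine its unique expansion in the basis $B$. Because $\bv \in V_1 = \mathrm{span}(B_1)$, the expansion of $\bv$ involves only basis vectors from $B_1$; by uniqueness of the expansion in $B$, the support of $\bv$ (the set of $B$-elements appearing with nonzero coefficient) is therefore a subset of $B_1$. The same argument applied to $\bv \in V_2$ shows the support is a subset of $B_2$, hence a subset of $B_1 \cap B_2$. Thus every $\bv \in V_1 \cap V_2$ lies in $\mathrm{span}(B_1 \cap B_2)$, so $V_1 \cap V_2 \subseteq \mathrm{span}(B_1 \cap B_2)$ and $\dim(V_1 \cap V_2) \le |B_1 \cap B_2|$. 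Combining the two inequalities yields the claim.

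The only real subtlety — the ``main obstacle,'' though it is a mild one — is making the uniqueness-of-expansion step airtight: one must be careful to note that the $B_1$-expansion of $\bv$ guaranteed by $\bv \in \mathrm{span}(B_1)$ is the \emph{same} as its $B$-expansion, which holds precisely because $B_1 \subseteq B$ and $B$ is linearly independent, so there is only one way to write $\bv$ over $B$ and it must be the one supported on $B_1$. Once this point is stated cleanly, the rest is immediate set-theoretic bookkeeping, and no dimension-counting formula (such as $\dim(V_1+V_2) = \dim V_1 + \dim V_2 - \dim(V_1 \cap V_2)$) is actually needed.
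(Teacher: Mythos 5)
Your proposal is correct and takes essentially the same route as the paper: both prove the easy inclusion $\mathrm{span}(B_1\cap B_2)\subseteq V_1\cap V_2$ and then show that any $\bv\in V_1\cap V_2$ is supported only on $B_1\cap B_2$, your appeal to uniqueness of the $B$-expansion being exactly the paper's step of subtracting the $B_1$- and $B_2$-expansions and invoking linear independence of $B$. No gap; the argument is complete.
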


\begin{proof}
        The claim that any element in $B_1\cap B_2$ gives a vector in
        $V_1\cap V_2$ is trivial. Denote the elements of $B_1, B_2$, and $B_1\cap B_2$ by ${\bf v}_{1i},{\bf v}_{2i}$ and ${\bf v}_{0i}$, respectively. In order to prove the
        reverse inclusion we consider $\bv\in V_1\cap V_2$.
        Thus, we can represent it as

        \begin{equation*}
                \bv = \sum_{{\bf v}_{0i}\in B_1\cap B_2}c_{0i}{\bf v}_{0i} + \sum_{{\bf v}_{1i}\in B_1\setminus (B_1\cap B_2)}c_{1i}{\bf v}_{1i}\qquad \text{and}\qquad
                \bv = \sum_{{\bf v}_{0i}\in B_1\cap B_2}d_{0i}{\bf v}_{0i} + \sum_{{\bf v}_{2i}\in B_2\setminus (B_1\cap B_2)}d_{2i}{\bf v}_{2i},
        \end{equation*}
        which implies that
        \begin{equation*}
                \sum_{{\bf v}_{0i}\in B_1\cap B_2}(c_{0i} - d_{0i}){\bf v}_{0i} + \sum_{{\bf v}_{1i}\in B_1\setminus (B_1\cap B_2)}c_{1i}{\bf v}_{1i} - \sum_{{\bf v}_{2i}\in B_2\setminus (B_1\cap B_2)}d_{2i}{\bf v}_{2i}=0.
        \end{equation*}
        Since ${\bf v}_{0i}, {\bf v}_{1i}$, and ${\bf v}_{2i}$ belong to the basis $B$, we have that every coefficient in the
        previous equation needs to be equal to zero, which results in $\bv = \sum_{{\bf v}_{0i}\in B_1\cap B_2}c_{0i}{\bf v}_{0i}$
        and the reverse inclusion is proved.
\end{proof}

Now, we derive QUENTA codes from the Hermitian construction using AG codes. To illustrate this, we are going
to apply the result from Lemma~\ref{newConstructions:DimHermitianIntersection} to AG codes derived from
rational function field. See Theorem~\ref{quentaCodes:rationalDualRational}.

\begin{theorem}
      Let $q$ be a prime power and $m$ an integer which is written as
      $m = qt + r < q^2$, where $t\geq 1$ and $0 \leq r \leq q-1$.
      Then we have the following:
      \begin{itemize}
	  \item If $t\geq q-r-1$, then there exists an MDS QUENTA code with parameters
	  $$[[q^2, (t+1)^2 + 2r + 1 - 2q, \geq q^2-(qt+r); (q-t-1)^2]]_q.$$
	  \item If $t< q-r-1$, then there exists an MDS QUENTA code with parameters
	  $$[[q^2, t^2 - 1 , \geq q^2-(qt+r); (q-t)^2 - 2(r+1)]]_q.$$
      \end{itemize}

      \label{quentaCodes:rationalDualRational}
\end{theorem}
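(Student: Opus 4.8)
The plan is to realize both the code $C$ and an overcode containing $C^q$ as evaluation codes on the $q^2$ affine points of the rational function field $\mathbb{F}_{q^2}(z)/\mathbb{F}_{q^2}$, so that Proposition~\ref{newConstructions:IntersectionCodes}, Proposition~\ref{newConstructions:basisforCq} and Lemma~\ref{newConstructions:DimHermitianIntersection} can be combined to pin down the entanglement parameter $c$, and then feed everything into Proposition~\ref{Prep:WildeHerm}. Concretely, take $D = P_1 + \cdots + P_{q^2}$ to be the sum of all rational affine places, $G = mP_\infty$ with $m = qt+r < q^2$, and $C = C_{\mathcal{L}}(D,G)$. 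By Proposition~\ref{AgCodes:proposition1} this is an $[q^2, m+1, \geq q^2-m]_{q^2}$ code (here $g=0$, and $2g-2 < m < q^2$ gives $k = m+1 = qt+r+1$), which after the Hermitian construction accounts for the claimed length $q^2$, the minimum distance bound $\geq q^2-(qt+r)$, and dimension $2k - n + c = 2(qt+r+1) - q^2 + c$.

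The heart of the argument is computing $c = \dim C^{\perp_h} - \dim(C^{\perp_h}\cap C)$, and by Proposition~\ref{newConstructions:IntersectionCodes} this amounts to computing $\dim(C^\perp \cap C^q)$. First I would use the standard monomial basis $\{ev_D(z^j) : 0 \le j \le m\}$ of $C$; by Proposition~\ref{newConstructions:basisforCq}, $\{ev_D(z^{jq}) : 0\le j \le m\}$ is a basis of $C^q$, and since every coordinate $P_i$ is rational, $z^{jq}(P_i) = z^j(P_i)^q$ shows $ev_D(z^{jq}) = ev_D(z^j)^q$ — but more usefully, because $\alpha^{q^2} = \alpha$ on $\mathbb{F}_{q^2}$, the function $z^{jq}$ evaluated on all of $\mathbb{F}_{q^2}$ agrees with $z^{jq \bmod (q^2-1)}$ on the nonzero points. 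So $C^q$ is again an evaluation code of monomials, namely spanned by $ev_D(z^{jq \bmod (q^2-1)})$ for $0\le j\le m$, i.e.\ $C^q = C_{\mathcal{L}}(D, H)$ for an appropriate divisor $H$ supported on $P_0$ and $P_\infty$ (the exponents $jq$ for $0\le j\le t$ stay below $q^2-1$, while for $t+1 \le j \le m$ they wrap around, contributing negative powers of $z$, i.e.\ poles at $P_0$). The dual $C^\perp = C_\Omega(D,G) = C_{\mathcal{L}}(D, G^\perp)$ is likewise a monomial evaluation code via Proposition~\ref{AgCodes:DualCodeWeilDiff} with $\eta = \frac{1}{z^{q^2}-z}\,dz$, giving $G^\perp = D - G + (\eta) = (q^2-m-2)P_\infty - P_0$. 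Then I would choose $B$ to be the evaluation of a suitable set of monomials $\{z^{-a}, \dots, z^{-1}, 1, z, \dots, z^{b}\}$ that simultaneously contains spanning sets for $C^\perp$ and $C^q$ and is a basis of $\mathbb{F}_{q^2}^{q^2}$ (this works because on $q^2$ points, $q^2$ distinct Laurent monomials with exponents in a window of length $q^2$ are linearly independent), so that Lemma~\ref{newConstructions:DimHermitianIntersection} reduces $\dim(C^\perp\cap C^q)$ to counting the common monomial exponents.

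The main obstacle — and the only genuinely delicate bookkeeping — is this exponent count, which is exactly where the case split $t \gtrless q-r-1$ comes from. The exponents appearing in $C^\perp$ form the interval $\{-1, 0, 1, \dots, q^2-m-2\}$ (from the divisor $(q^2-m-2)P_\infty - P_0$), while those in $C^q$ are $\{jq \bmod (q^2-1) : 0 \le j \le m\}$, which splits into $\{0, q, 2q, \dots, tq\}$ together with the wrapped values $\{(t+1)q - (q^2-1), \dots, mq - (q^2-1)\}$ — a second arithmetic progression of step $q$ living among small/negative exponents near $P_0$. Intersecting an interval of length $q^2-m$ with the union of two step-$q$ progressions and carefully tracking which wrapped terms land inside the interval produces $(q-t-1)^2$ common exponents when $t \ge q-r-1$ and $2(r+1)$ fewer, i.e.\ the complementary count giving $(q-t)^2 - 2(r+1)$, when $t < q-r-1$. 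Once this count is established, $c$ follows since $\dim C^{\perp_h} = q^2 - k = q^2 - qt - r - 1$, and substituting into $2k-n+c$ yields $(t+1)^2 + 2r + 1 - 2q$ in the first case and $t^2-1$ in the second. Finally, MDS-ness is checked directly against the quantum Singleton bound~(\ref{QSB}): with $n-k+c$ equal to $q^2 - $ (the stated dimension) $+ c$, one verifies $\lfloor (n-k+c)/2\rfloor + 1 = q^2 - (qt+r)$ matches the distance lower bound, so the bound is met with equality. I expect the distance and MDS verification to be routine arithmetic; the exponent-intersection count is where the real work lies.
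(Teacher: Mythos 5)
Your overall strategy is the one the paper follows (realize $C^q$ and $C^\perp$ as monomial evaluation codes, then combine Proposition~\ref{newConstructions:IntersectionCodes}, Proposition~\ref{newConstructions:basisforCq}, Lemma~\ref{newConstructions:DimHermitianIntersection} and Proposition~\ref{Prep:WildeHerm}), but the concrete setup you give for the decisive counting step is faulty in several places. (i) Since the origin is one of the $q^2$ evaluation places, the differential $\eta=\frac{dz}{z^{q^2}-z}$ has divisor $(\eta)=(q^2-2)P_\infty-D$, so $G^\perp=D-G+(\eta)=(q^2-m-2)P_\infty$ with no $-P_0$ term; hence $C^\perp$ is spanned by $ev_D(z^j)$ for $0\le j\le q^2-m-2$, i.e.\ by $q^2-m-1$ nonnegative exponents, not by your interval $\{-1,0,\dots,q^2-m-2\}$, which even has the wrong cardinality. (ii) Laurent monomials with negative exponents cannot be evaluated at $P_0$, which lies in the support of $D$, so your proposed ``basis $B$'' of $\mathbb{F}_{q^2}^{q^2}$ is not defined, and reducing $z^{jq}$ modulo $q^2-1$ is incorrect at the origin: the right reduction is modulo $z^{q^2}-z$ (replace an exponent $e\ge q^2$ by $e-(q^2-1)$, iterating if necessary), which keeps all exponents in $\{0,1,\dots,q^2-1\}$ and lets one use $B=\{ev_D(z^i):0\le i\le q^2-1\}$ as the paper does. (iii) With the correct reduction, wrapping begins at $j=q$, not at $j=t+1$, and can occur more than once; for $q=3$, $m=4$ the exponent set of $C^q$ is $\{0,1,3,4,6\}$, whereas your description yields $\{0,3\}\cup\{-2,1,4\}$. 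The correct set is $\{qi+j:0\le i\le q-1,\ 0\le j\le t-1\}\cup\{qi+t:0\le i\le r\}$.

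In addition, you conflate the number of common exponents with $c$: the count furnished by Lemma~\ref{newConstructions:DimHermitianIntersection} is $\dim(C^\perp\cap C^q)=(q-t-1)(t+1)+q-r-1$ when $t\ge q-r-1$ and $(q-t)t+r+1$ when $t<q-r-1$, and only after subtracting it from $\dim C^{\perp_h}=q^2-m-1$ does one obtain $c=(q-t-1)^2$, resp.\ $c=(q-t)^2-2(r+1)$. Plugging your claimed ``counts'' into your own formula $c=\dim C^{\perp_h}-\dim(C\cap C^{\perp_h})$ would therefore give the wrong entanglement parameter. Since this exponent bookkeeping is precisely the substance of the paper's proof and you explicitly defer it while setting it up on incorrect scaffolding, this is a genuine gap rather than a mere difference of route. (A further symptom that the arithmetic was not carried out: substituting $k=m+1$ and the values of $c$ into $2k-n+c$ gives $(t+1)^2+2r+2-2q$ and $t^2$, not the values you assert.)
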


\begin{proof}
      Let $F(z)/\mathbb{F}_{q^2}$ be the rational function field, $D = \sum_{i=0}^{q^2-1} P_i$ and
      $G = mP_\infty$, where $m = qt+r$. Let $C_{\mathcal{L}}(D,G)$ be the AG code derived from $D$ and $G$ with
      parameters $[q^2, m+1, q^2 - m]_q$.
      Consider $\vec{x^i} = ev_D(x^i)$. Let $B = \{\vec{x^i}|0\leq i \leq n-1\}$. Then $B$ is a basis of
      $\mathbb{F}_{q^2}^n$. A basis for $C_{\mathcal{L}}(D,G)$ is given by a subset,
      $B' = \{\vec{x^i}|0\leq i \leq m\}$. Thus, a basis of $C_{\mathcal{L}}(D,G)^q$ can be described as
      $B_1 = \{\vec{x^{qi}}|0\leq i \leq m\}$. Now, notice that $\vec{x^{q^2 + a}} = \vec{x^{a+1}}$ for
      all $a\geq 0$. Therefore, $$B_1 = \{\vec{x^{qi+j}}|0\leq i \leq q-1, 0\leq j\leq t-1\}\cup \{\vec{x^{qi+t}}|0\leq i \leq r\}.$$
      On the other hand, a basis of $C_{\mathcal{L}}(D,G)^\perp$ is given by the set
      $$B_2 = \{\vec{x^i}|0\leq i \leq q^2 - 2 - m\} = \{\vec{x^{qi+j}}|0\leq i \leq q - t - 2, 0\leq j \leq q-1\}\cup
      \{\vec{x^{(q-t-1)q + j}}| 0\leq j \leq q-r-2\}.$$ Thus, the exponents of $\vec{x}$ in the bases $B_1$ and $B_2$ can
      be represented by the sets

      \begin{equation*}
	    \begingroup 
	    \setlength\arraycolsep{3pt}
            \begin{Bmatrix}
	    0        & 1        & 2        & \cdots & t-1        & t       \\
	    q        & q+1      & q+2      & \cdots & q+t-1      & q+t     \\
	    \vdots   & \vdots   & \vdots   & \cdots & \vdots     & \vdots \\
	    rq       & rq+1     & rq+2     & \cdots & rq+t-1     & rq + t  \\
	    \vdots   & \vdots   & \vdots   & \cdots & \vdots     & \\
	    (q-1)q   & (q-1)q+1 & (q-1)q+2 & \cdots & (q-1)q+t-1 & \\
	    \end{Bmatrix}.
	    \endgroup
      \end{equation*}and

      \begin{equation*}
	    \begingroup 
	    \setlength\arraycolsep{3pt}
            \begin{Bmatrix}
	    0        & 1          & \cdots & q-r-2          & \cdots &q-1\\
	    q        & q+1        & \cdots & q+q-r-2        & \cdots & 2q-1    \\
	    \vdots   & \vdots     & \cdots & \vdots         & \vdots & \vdots\\
	    (q-t-2)q & (q-t-2)q+1 & \cdots & (q-t-2)q+q-r-2 & \cdots & (q-t-2)q+q-1\\
	    (q-t-1)q & (q-t-1)q+1 & \cdots & (q-t-1)q+q-r-2 &        &
	    \end{Bmatrix},
	    \endgroup
      \end{equation*}respectively. Using this description, we see that these bases satisfy the hypothesis in Lemma~\ref{newConstructions:DimHermitianIntersection}, so it is possible to compute the intersection of the codes related to $B_1$ and $B_2$ via the computation of the intersection of these sets. To do so, we have to consider
      two cases separately, $t\geq q-r-1$ and $t< q-r-1$. For the first case, the intersection is given by the following set
      $$B_1\cap B_2 = \{\vec{x^{qi+j}}| 0\leq i \leq q-t-2, 0\leq j \leq t\}\cup \{\vec{x^{(q-t-1)q+j}}| 0\leq j \leq q-r-2\}.$$ Thus,
      $\dim (C_{\mathcal{L}}(D,G)^q\cap C_{\mathcal{L}}(D,G)^\perp) = |B_1\cap B_2| = (q-t-1)(t+1) + q-r-1$. Using the same description for the case $t< q-r-1$, we see that
      $$B_1\cap B_2 = \{\vec{x^{qi+j}}| 0\leq i \leq q-t-1, 0\leq j \leq t-1\}\cup \{\vec{x^{(qi+t}}| 0\leq i \leq r\},$$ which implies
      $\dim (C_{\mathcal{L}}(D,G)^q\cap C_{\mathcal{L}}(D,G)^\perp) = |B_1\cap B_2| = (q-t)t + r+1$. Applying the previous computations, and using the fact that
      $C_{\mathcal{L}}(D,G)$ has parameters $[q^2, m+1, q^2 - m]_q$, by Proposition~\ref{Prep:WildeHerm}, we have that there exists a QUENTA code with parameters

      \begin{itemize}
          \item $[[q^2, (t+1)^2 + 2r + 1 - 2q, \geq q^2-(qt+r); (q-t-1)^2]]_q$, for $t\geq q-r-1$; and
          \item $[[q^2, t^2 - 1 , \geq q^2-(qt+r); (q-t)^2 - 2(r+1)]]_q$, for $t< q-r-1$.
      \end{itemize}
      \end{proof}

\section{Code Comparison}
\label{sec:codComp}
In Tables~\ref{table2}~and~\ref{table3}, we present some optimal QUENTA codes obtained
from Theorems~\ref{quentaCodes:rational},~\ref{quentaCodes:elliptic}~and~\ref{quentaCodes:rationalDualRational}.
The QUENTA codes derived from the Euclidean construction are presented in Table~\ref{table2}. We use
AG codes obtained from projective line and elliptic curves to construct these codes. As can be seen,
the codes in the first column of Table~\ref{table2} are MDS, and the ones in the second are almost MDS. For
Table~\ref{table3}, the QUENTA codes are derived from the Hermitian construction, where rational AG codes
were used as the classical code. These codes have also an optimal combination of parameters,
since they are MDS. Additionally, since QUENTA codes use entanglement,
we conclude that these quantum codes from
Tables~\ref{table2}~and~\ref{table3} have better or equal minimal distances than any quantum code
with the same length and dimension derived from quantum stabilizer codes \cite{NielsenChuang:Book}.

\begin{table}[h]
\begin{center}
\caption{Some new maximal entanglement (almost) MDS QUENTA codes from the Euclidean construction\label{table2}}
\begin{tabular}{c|c}
\hline\hline
New QUENTA codes -- Theorem~\ref{quentaCodes:rational}                 & New QUENTA codes -- Theorem~\ref{quentaCodes:elliptic} \\
$[[q-1, a_1 + a_2 + 1, q - 1 - (a_1 + a_2); q - 2 - (a_1+ a_2)]]_{q}$  & $[[e-2, a_1 + a_2, e - 2 - (a_1 + a_2); e - 2 -  (a_1 + a_2)]]_q$ \\
$a_1 +a_2\leq q-2 $                                                    & $a_1 + a_2  < e-2$, and $e$ as in Table~\ref{table1}\\
\hline \multicolumn{2}{c}{Examples}\\ \hline
\hline ${[[3, 2, 2; 1]]}_{4}$                                 & ${[[7, 5, 2; 2]]}_{4}$ \\
\hline ${[[4, 2, 3; 2]]}_{5}$                                 & ${[[11, 6, 5; 5]]}_{8}$ \\
\hline ${[[6, 4, 3; 2]]}_{7}$                                 & ${[[11, 8, 3; 3]]}_{8}$ \\
\hline ${[[7, 4, 4; 3]]}_{8}$                                 & ${[[23, 13, 10; 10]]}_{16}$ \\
\hline ${[[10, 7, 4; 3]]}_{11}$                               & ${[[23, 18, 5; 5]]}_{16}$ \\
\hline ${[[12, 7, 6; 5]]}_{13}$                               & ${[[39, 25, 14; 14]]}_{32}$ \\
\hline ${[[15, 10, 6; 5]]}_{16}$                              & ${[[39, 18, 11; 11]]}_{32}$ \\
\hline
\end{tabular}
\end{center}
\end{table}

\begin{table}[h]
\begin{center}
\caption{Some new MDS QUENTA codes from Hermitian construction\label{table3}}
\begin{tabular}{c}
\hline\hline
New QUENTA codes -- Theorem~\ref{quentaCodes:rationalDualRational}\\
$[[q^2, (t+1)^2 + 2r + 1 - 2q, q^2-(qt+r); (q-t-1)^2]]_q$\\
$m = qt + r < q^2$, $t\geq q-r-1$ and $0 \leq r \leq q-1$ \\
\hline Examples\\ \hline
\hline ${[[16, 6, 6; 1]]}_{4}$\\
\hline ${[[49, 25, 13; 1]]}_{7}$\\
\hline ${[[49, 11, 24; 9]]}_{7}$\\
\hline ${[[64, 29, 20; 4]]}_{8}$\\
\hline ${[[64, 25, 22; 4]]}_{8}$\\
\hline ${[[81, 33, 29; 9]]}_{9}$\\
\hline ${[[81, 16, 41; 16]]}_{9}$\\
\hline ${[[256, 141, 66; 16]]}_{16}$\\
\hline
\end{tabular}
\end{center}
\end{table}

The remaining QUENTA codes that are compared with the literature are the ones
derived from Hermitian curve. The first analysis of goodness of our codes is via
the Singleton defect, which is the difference between the quantum
Singleton bound (QSB) presented in Eq.~\ref{QSB} and the minimal
distance of the code.
Recall that an $[[n, k, d; c]]_{q}$ quantum code
satisfies $k + 2d \leq n + c + 2$ (QSB). Hence, the codes derived from
Theorem~\ref{quentaCodes:hermitian}
have maximum Singleton defect equals to $q(q-1)/2$.
Some examples of parameters derived are
$[[7,2,4;3]]_{4}$, $[[26,10,11;10]]_{9}$, and $[[63,19,32;31]]_{16}$
which have Singleton defect $1$, $3$, and $6$, respectively. Comparing
these examples with quantum stabilizer codes, we see that our codes have
minimal distance unreachable for the same length and dimension. This can
be seen from the quantum Singleton bound for stabilizer codes (a more
general case of quantum codes). Thus, even though the codes from Theorem~\ref{quentaCodes:hermitian}
are not MDS with respect to its quantum Singleton bound, they can be used to attain
parameters that are unreachable by quantum stabilizer codes. Adopting entanglement defect
as been equal to the difference between the actual amount of entanglement in the QUENTA code
and $n-k$, we see that the entanglement defect in this family of QUENTA code is
equal to $2g$, where $g$ is the genus of the Hermitian function field. Lastly, Table~\ref{table4} shows
some examples of QUENTA codes that have a higher rate than the asymptotic Gilbert-Varshamov bound presented
in Section~\ref{sec:AsymptoticallyGood}.

\begin{table}[h]
\begin{center}
\caption{Some new QUENTA codes from Hermitian Curve\label{table4}}
\begin{tabular}{c}
\hline\hline
New QUENTA codes -- Theorem~\ref{quentaCodes:hermitian}                        \\
$[[q^3-1, b_1 + b_2 +1 - q(q-1)/2, q^3 -1 -\max\{a_1 + a_2, b_1 + b_2\};$   \\
$q^3 + q(q-1)/2 - (a_1 + a_2) - 2]]_{q^2}$                                  \\
$b_1\leq a_2 - q(q-1)$, $b_2\leq\min\{q^3 + q(q-1)-2-a_2, a_1\}$, and $a_1+a_2, b_1+b_2 < q^3-1$ \\
\hline Examples\\ \hline
\hline ${[[26, 15, 6; 5]]}_{9}$                                                  \\
\hline ${[[64, 39, 11; 10]]}_{16}$                                                  \\
\hline ${[[125, 51, 54; 53]]}_{25}$                                                  \\
\hline ${[[343, 179, 122; 121]]}_{49}$                                                  \\
\hline
\end{tabular}
\end{center}
\end{table}

\newpage
\section{Asymptotically Good Maximal Entanglement QUENTA Codes}
\label{sec:AsymptoticallyGood}
In this section, we show that from any family of (classical) asymptotically good AG codes,
we can construct a family of asymptotically good maximal entanglement QUENTA codes.
This is a consequence of the use of the result from Carlet, \emph{et al.}
\cite{Carlet_Pellikaan:2018} applied to the Corollary~\ref{corollary2}. Before showing it, we need to define
the concept of (classical) asymptotically good codes.

\begin{definition}
      Let $q$ be a prime power and $\alpha_q := \sup\{R\in [0,1]\colon (\delta, R)\in U_q\}$, for
      $0\leq \delta \leq 1$. Here $U_q$ denotes the set of all ordered pair $(\delta, R)\in[0,1]^2$ for which
      there is a family of linear codes that are indexed as $C_i$, with parameters $[n_i,k_i,d_i]_q$, such that
      $n_t\rightarrow \infty$ as $t\rightarrow\infty$ and $\delta = \lim_{i\rightarrow \infty} d_i/n_i$,
      $R = \lim_{i\rightarrow \infty} k_i/n_i$. If $\delta, R > 0$, then the family is called asymptotically good.

\end{definition}

\begin{proposition}\cite[Corollary 14]{Carlet_Pellikaan:2018}
      Let $q> 3$ be a power of a prime and $A(q) = \lim\sup_{g\rightarrow \infty}\frac{N_q(g)}{g}$, where
      $N_q(g)$ denotes the maximum number of rational places that a global function field of genus $g$ with
      full constant field $\mathbb{F}_q$ can have. Then there exists a family of LCD codes with

      \begin{equation}
	    \alpha_q^{LCD} (\delta) \geq 1-\delta - \frac{1}{A(q)}, \text{ for }\delta\in[0,1].
      \end{equation}

%
%
%
%

\label{Proposition:Carlet_Pellikaan}
\end{proposition}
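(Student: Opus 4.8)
The plan is to reduce the statement to the classical Tsfasman--Vladut--Zink (TVZ) bound for the quantity $\alpha_q$ introduced just above, via the remark that for $q>3$ the LCD asymptotic bound coincides with the ordinary one, $\alpha_q^{LCD}=\alpha_q$. The only non-elementary input is the main theorem of \cite{Carlet_Pellikaan:2018}: over $\mathbb{F}_q$ with $q>3$, every $[n,k,d]_q$ linear code is monomially equivalent (a coordinate permutation followed by a rescaling of the coordinates by nonzero scalars) to an $[n,k,d]_q$ code with trivial Euclidean hull, i.e. to an LCD code of the same length, dimension and minimum distance. I would take this as a black box.

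First I would check that $\alpha_q^{LCD}(\delta)=\alpha_q(\delta)$ for every $\delta$ when $q>3$. The inequality $\alpha_q^{LCD}(\delta)\le\alpha_q(\delta)$ is immediate from the definitions, since every LCD code is in particular linear. For the reverse inequality, take any $(\delta,R)\in U_q$, witnessed by a family of linear codes $C_i$ with parameters $[n_i,k_i,d_i]_q$, $n_i\to\infty$, $d_i/n_i\to\delta$, $k_i/n_i\to R$. Applying the cited result to each $C_i$ (legitimate precisely because $q>3$) produces an LCD code $C_i'$ with the \emph{same} parameters $[n_i,k_i,d_i]_q$; hence $\{C_i'\}$ witnesses $(\delta,R)$ for the LCD analogue of $U_q$ as well, so $\alpha_q^{LCD}(\delta)\ge R$. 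Taking the supremum over such $R$ gives $\alpha_q^{LCD}(\delta)\ge\alpha_q(\delta)$, hence equality.

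It then remains to invoke the classical TVZ bound $\alpha_q(\delta)\ge 1-\delta-1/A(q)$, which for completeness one may reprove with the machinery of Section~\ref{sec:Preliminaries}. It suffices to treat $\delta$ in the interval $[0,1-1/A(q)]$, since outside it the right-hand side is non-positive and $\alpha_q\ge 0$ settles the claim (in particular there is nothing to prove if $A(q)\le 1$). Fix such a $\delta$. By the definition of $A(q)$ as a limit superior, choose global function fields $F_i/\mathbb{F}_q$ with full constant field $\mathbb{F}_q$, of genus $g_i\to\infty$, each with $N_i$ rational places and $N_i/g_i\to A(q)$; note $N_i\to\infty$. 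In $F_i$ fix pairwise distinct rational places $P_1,\dots,P_{n_i},Q_i$ with $n_i=N_i-1$, put $D_i=P_1+\cdots+P_{n_i}$ and $G_i=m_iQ_i$, where $m_i$ is an integer with $g_i\le m_i<n_i$ and $m_i/n_i\to 1-\delta$ (such $m_i$ exist because $1/A(q)\le 1-\delta\le 1$). Since $\operatorname{supp}(G_i)\cap\operatorname{supp}(D_i)=\varnothing$, Definition~\ref{AgCodes:definitionAG} applies, and by Proposition~\ref{AgCodes:proposition1} together with $\ell(G_i)\ge\deg(G_i)-g_i+1$ (Riemann--Roch) and $\ell(G_i-D_i)=0$ (as $\deg(G_i-D_i)<0$), the code $C_{\mathcal{L}}(D_i,G_i)$ has parameters $[n_i,k_i,d_i]_q$ with $k_i\ge m_i-g_i+1$ and $d_i\ge n_i-m_i$, so that $k_i/n_i+d_i/n_i\ge 1-(g_i-1)/n_i\to 1-1/A(q)$. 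Since both $k_i/n_i$ and $d_i/n_i$ lie in $[0,1]$, after passing to a subsequence we may assume $k_i/n_i\to R$ and $d_i/n_i\to\delta^\ast$ with $R\ge 1-\delta-1/A(q)$ and $\delta^\ast\ge\delta$; thus $(\delta^\ast,R)\in U_q$, so $\alpha_q(\delta^\ast)\ge R$. As $\alpha_q$ is classically non-increasing and continuous, $\alpha_q(\delta)\ge\alpha_q(\delta^\ast)\ge R\ge 1-\delta-1/A(q)$. Combining with the previous paragraph, $\alpha_q^{LCD}(\delta)=\alpha_q(\delta)\ge 1-\delta-1/A(q)$.

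The step I expect to be the real obstacle is the structural input from \cite{Carlet_Pellikaan:2018} used in the second paragraph, namely that over $\mathbb{F}_q$ with $q>3$ an arbitrary linear code is monomially equivalent to an LCD code of the same parameters: this is the substantive content, it is exactly where the hypothesis $q>3$ is needed (it fails for $q\in\{2,3\}$), and I would simply cite it. The rest is routine: the function fields are furnished directly by the definition of $A(q)$, the estimates on $k_i$ and $d_i$ are the standard AG-code bounds of Proposition~\ref{AgCodes:proposition1}, and the passage to a convergent subsequence together with the monotonicity and continuity of $\alpha_q$ is classical.
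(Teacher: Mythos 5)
This statement is imported by the paper as a black box (it is exactly \cite[Corollary 14]{Carlet_Pellikaan:2018}, with no proof given here), and your reconstruction follows the same route as that reference: the equivalence of linear codes over $\mathbb{F}_q$, $q>3$, to LCD codes of identical parameters, combined with the Tsfasman--Vladut--Zink-type bound $\alpha_q(\delta)\geq 1-\delta-1/A(q)$ obtained from AG codes. Your argument, including the handling of the trivial range of $\delta$ and the subsequence/monotonicity step, is correct and essentially the standard proof.
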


\begin{theorem}
      Let $q> 3$ be a power of a prime and $A(q)$ as defined in Proposition~\ref{Proposition:Carlet_Pellikaan}.
      Then there exists a family of asymptotically good maximal entanglement QUENTA codes with parameters $[[n_t,k_t,d_t;c_t]]_q$, such that
      \begin{equation*}
	    \lim_{t\rightarrow \infty}\frac{d_t}{n_t} \geq \delta, \qquad \lim_{t\rightarrow \infty}\frac{k_t}{n_t} \geq 1 - \delta -\frac{1}{A(q)},
      \end{equation*}and
      \begin{equation*}
        \lim_{t\rightarrow \infty}\frac{c_t}{n_t}\in [\delta, \delta + 1/A(q)].
      \end{equation*}for all $\delta\in[0,1-1/A(q)]$.
\label{asymp_eaqecc}
\end{theorem}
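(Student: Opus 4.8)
The plan is to combine the explicit family of asymptotically good LCD codes from Proposition~\ref{Proposition:Carlet_Pellikaan} with the maximal-entanglement construction of Corollary~\ref{corollary2}. First I would invoke Proposition~\ref{Proposition:Carlet_Pellikaan}: for $q>3$ and any $\delta\in[0,1-1/A(q)]$ it supplies a sequence of LCD codes $C_t$ with parameters $[n_t,k_t,d_t]_q$ such that $n_t\to\infty$, $\liminf d_t/n_t\geq\delta$, and $\liminf k_t/n_t\geq 1-\delta-1/A(q)$. (Strictly speaking one should note that $\alpha_q^{LCD}(\delta)$ being a supremum means one can extract, for each $\delta$, a family whose asymptotic rate is at least $1-\delta-1/A(q)$; this is the standard reading of such a bound and I would state it as a one-line remark.) Applying Corollary~\ref{corollary2} to each $C_t$ yields a maximal-entanglement QUENTA code $\mathcal{Q}_t$ with parameters $[[n_t,k_t,d_t;n_t-k_t]]_q$. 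So I set $k_t':=k_t$, $d_t':=d_t$, $c_t:=n_t-k_t$ and observe $n_t\to\infty$.

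Next I would read off the three asymptotic quantities directly. The dimension and distance rates are inherited verbatim from the classical family: $\lim k_t/n_t \geq 1-\delta-1/A(q)$ and $\lim d_t/n_t\geq\delta$ (passing to a subsequence if necessary so that the $\liminf$'s become genuine limits — I would mention this so that the stated limits exist). For the entanglement rate, since $c_t=n_t-k_t$ by maximality, we get $c_t/n_t = 1-k_t/n_t$, hence $\lim c_t/n_t = 1-\lim k_t/n_t$. Combining with $1-\delta-1/A(q)\leq \lim k_t/n_t\leq 1-\delta$ (the upper bound because a rate-$R$ family with relative distance $\delta>0$ still has $R\le 1-\delta$, e.g.\ by the classical Singleton bound in the limit, or more simply because $k_t\le n_t-d_t+1$), one obtains $\delta\leq \lim c_t/n_t\leq \delta+1/A(q)$, which is exactly the claimed interval. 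Finally, positivity of $\delta$ and of $1-\delta-1/A(q)$ for $\delta$ in the stated range (together with $A(q)>1$ for $q>3$, which is what makes the window $[0,1-1/A(q)]$ nonempty) certifies that the QUENTA family is asymptotically good.

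The main obstacle — and really the only subtle point — is the bookkeeping around the word ``$\sup$'' in the definition of $\alpha_q^{LCD}$: the proposition guarantees the existence of families approaching the bound, so some care is needed to fix, for each target $\delta$, a single sequence of LCD codes whose three normalized parameters simultaneously converge to the desired limits, rather than merely having supremal rate. This is handled by a routine diagonal/subsequence argument, and by being slightly generous in the inequalities (which is why the conclusions are stated with ``$\geq$'' and with an interval for $c_t/n_t$ rather than a single value). Everything else is immediate substitution into Corollary~\ref{corollary2} and the identity $c_t=n_t-k_t$. I would therefore keep the proof short: cite Proposition~\ref{Proposition:Carlet_Pellikaan}, apply Corollary~\ref{corollary2} termwise, pass to a convergent subsequence, and compute the three limits.
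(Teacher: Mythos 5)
Your proposal is correct and follows essentially the same route as the paper: take the asymptotically good LCD family from Proposition~\ref{Proposition:Carlet_Pellikaan}, apply Corollary~\ref{corollary2} termwise to get maximal-entanglement QUENTA codes with $c_t=n_t-k_t$, and bound $c_t/n_t$ from above by the rate estimate and from below via the classical Singleton bound $n_t-k_t\geq d_t-1$, exactly as in the paper's proof. Your added remarks on extracting a subsequence and on the supremum in the definition of $\alpha_q^{LCD}$ only make explicit bookkeeping that the paper leaves implicit.
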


\begin{proof}
Let $\mathcal{C} = \{C_1, C_2, \ldots\}$ be a family of asymptotically good LCD codes as the ones in
Proposition~\ref{Proposition:Carlet_Pellikaan}, where each $C_t$ has parameters
$[n_i, k_i, d_i]_q$. If we apply the family $\mathcal{C}$ to construct QUENTA codes, it follows
from Corollary~\ref{corollary2} that we can construct maximal entanglement QUENTA codes with
parameters $[[n_t,k_t,d_t;c_t]]_q$, such that
\begin{eqnarray*}
	    \lim_{t\rightarrow \infty}\frac{d_t}{n_t} =  \lim_{i\rightarrow \infty}\frac{d_i}{n_i} \geq \delta,\hspace{2cm}
        \lim_{t\rightarrow \infty}\frac{k_t}{n_t} = \lim_{i\rightarrow \infty}\frac{k_i}{n_i} \geq 1 - \delta -\frac{1}{A(q)}.
\end{eqnarray*}Moreover, we have
\begin{equation*}
        \lim_{t\rightarrow \infty}\frac{c_t}{n_t} = \lim_{i\rightarrow \infty}\frac{n_i - k_i}{n_i} = \lim_{i\rightarrow \infty}1 - \frac{k_i}{n_i}\leq \delta +\frac{1}{A(q)}
\end{equation*}and
\begin{equation*}
        \lim_{t\rightarrow \infty}\frac{c_t}{n_t} = \lim_{i\rightarrow \infty}\frac{n_i - k_i}{n_i} \geq \lim_{i\rightarrow \infty}\frac{d_i-1}{n_i}\geq \delta,
\end{equation*}for $\delta\in[0,1-1/A(q)]$. Thus, since that the families in Proposition~\ref{Proposition:Carlet_Pellikaan} are asymptotically good, then the family of
QUENTA codes is asymptotically good maximal entanglement.
\end{proof}

\begin{remark}
If $q$ is a square, then $A(q) = \sqrt{q}-1$ by \cite{Tsfasman:1982,Ihara:1981}.
\end{remark}

\begin{remark}
In a recent paper, Galindo, \emph{et al.} \cite{Galindo:2019} derived the quantum Gilbert-Varschamov bound for
QUENTA codes. Using AG codes derived from tower of function fields that attain the Drinfeld-Vladut bound
\cite{Stichtenoth:2009} and the previous theorem, we can show that there is a family of QUENTA codes
with parameters that exceed the mentioned bound (see Figure~\ref{Fig:GVBound}).
\end{remark}

\begin{figure*}[h!]
\centering
\includegraphics[width=0.7\textwidth]{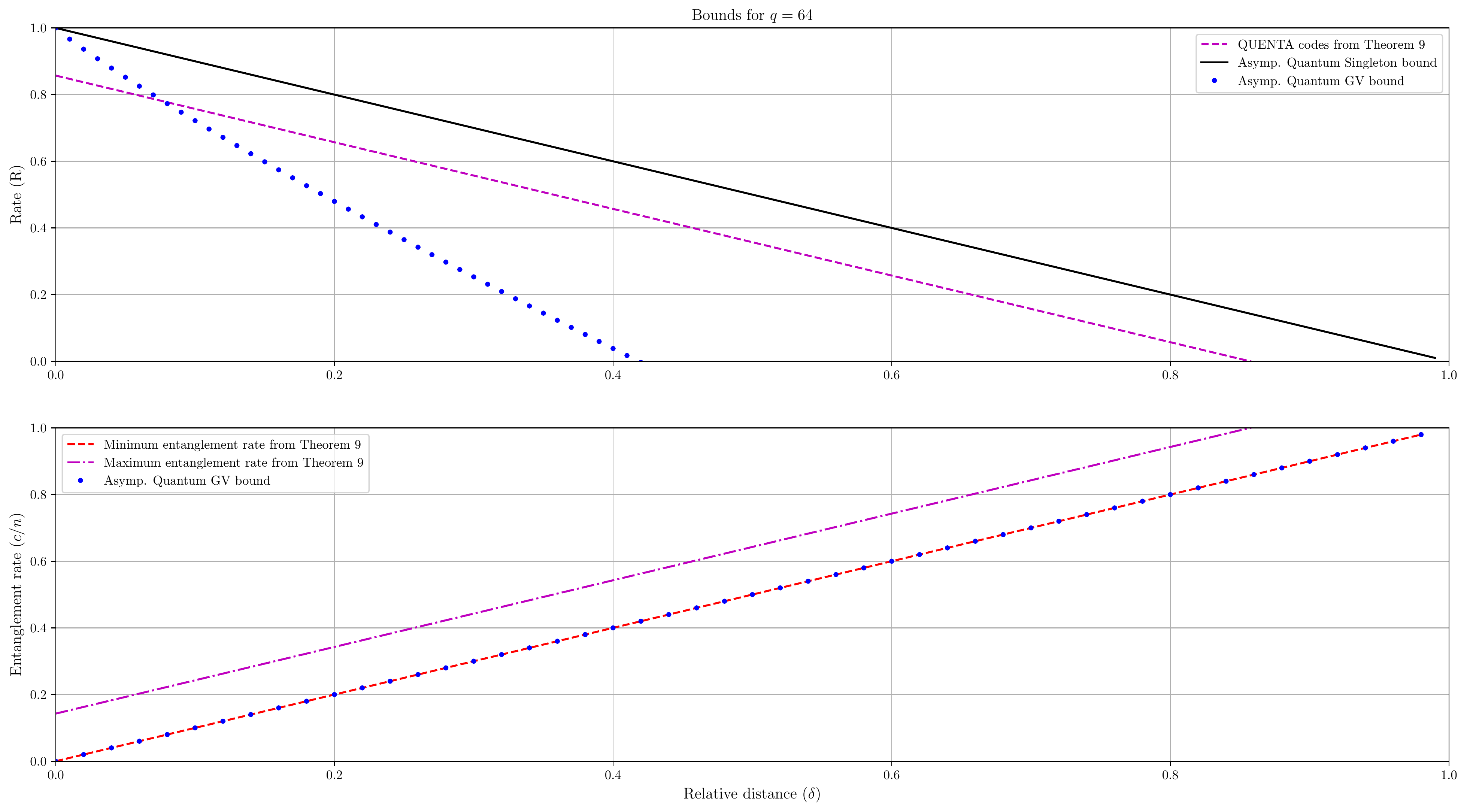}
\caption{Comparison between QUENTA codes derived from Theorem~\ref{asymp_eaqecc} and quantum
Gilbert-Varshamov bound of \cite{Galindo:2019} via analysis of rate and relative entanglement when
$q = 64$.}
\label{Fig:GVBound}
\end{figure*}



\section{Conclusion}
\label{sec:Conclusion}
This paper has been devoted to the use of AG codes in the construction of QUENTA codes.
We firstly showed two methods to create new AG codes from old ones via intersection and
union of divisors. Afterwards, the former method is applied to construct quantum codes via
the Euclidean construction method for QUENTA codes. Two of the families derived in this part are
MDS or almost MDS and, for some particular range of parameters, have maximal entanglement. For the QUENTA codes constructed from
the Hermitian function field, we have shown that it is possible to achieve higher rates when compared with
standard quantum stabilizer codes and the entanglement-assisted quantum Gilbert-Varshamov bound.
In the following, using the Hermitian construction
method for QUENTA codes, we have constructed one more family of QUENTA codes from AG codes, which was also shown to
be MDS. Lastly, it was shown that for any asymptotically good family of
classical codes, there is a family of asymptotically good maximal entanglement QUENTA codes. In addition,
it is demonstrated that there are QUENTA codes surpassing the quantum Gilbert-Varshamov bound.


\section{Acknowledgements}
This work was supported by the \emph{Conselho Nacional de Desenvolvimento
Cient\'ifico e Tecnol\'ogico}, grant No. 201223/2018-0.

%
\bibliographystyle{splncs04}
\bibliography{ref}

\end{document}